\documentclass[11pt]{article}
\usepackage{enumerate}
\usepackage{amsmath}
\usepackage{amssymb}
\usepackage{amsfonts}
\usepackage{graphicx}
\usepackage{fullpage}
\usepackage{algorithm}
\usepackage{wrapfig}
\usepackage[noend]{algpseudocode}
\usepackage{hyperref}
\hypersetup{pagebackref=true,colorlinks,linkcolor=red,filecolor=red,citecolor=red,urlcolor=red}
\usepackage{comment}
\excludecomment{ignore}
\excludecomment{NIIS}

\usepackage{amsthm}
\newtheorem*{definition*}{Definition}
\newtheorem*{theorem*}{Theorem}
\newtheorem{theorem}{Theorem}[section]
\newtheorem{lemma}[theorem]{Lemma}
\newtheorem{proposition}[theorem]{Proposition}

\newtheorem{observation}[theorem]{Observation}

\newtheorem{corollary}[theorem]{Corollary}

\makeatletter
\newtheorem*{rep@theorem}{\rep@title}
\newcommand{\newreptheorem}[2]{%
	\newenvironment{rep#1}[1]{%
		\def\rep@title{#2 \ref{##1}}%
		\begin{rep@theorem}}%
		{\end{rep@theorem}}}
\makeatother

\newreptheorem{theorem}{Theorem}
\newreptheorem{lemma}{Lemma}
\newreptheorem{claim}{Claim}
\newreptheorem{corollary}{Corollary}

\newcommand{\namedref}[2]{\hyperref[#2]{#1~\ref*{#2}}}

\newcommand{\sectionref}[1]{\namedref{Section}{#1}}

\newcommand{\observationref}[1]{\namedref{Observation}{#1}}

\newcommand{\figureref}[1]{\namedref{Figure}{#1}}
\newcommand{\figurerefb}[2]{\hyperref[#1]{Figure~\ref*{#1}#2}}

\newcommand{\lemmaref}[1]{\namedref{Lemma}{#1}}

\newcommand{\propositionref}[1]{\namedref{Proposition}{#1}}
\newcommand{\corollaryref}[1]{\namedref{Corollary}{#1}}

\newcommand{\equationref}[1]{\hyperref[#1]{(\ref*{#1})}}
\newcommand{\id}[1]{{\tt #1}}

\newcommand{\none}{\textsc{none}}
\newcommand{\config}[1]{{\mathcal{#1}}}

\begin{document}

\title{Why Extension-Based Proofs Fail}
\date{\today}

\author{
Dan Alistarh\\
  IST Austria\\
  dan.alistarh@ist.ac.at\\
  \and
James Aspnes\\
  Yale\\
  james.aspnes@gmail.com\\
  \and
Faith Ellen\\
  University of Toronto\\
  faith@cs.toronto.edu\\
  \and
Rati Gelashvili\\
  University of Toronto\\
  gelash@gmail.com\\
  \and
Leqi Zhu\\
  University of Michigan\\
  zhu.jimmy@gmail.com\
  }

\maketitle

\begin{abstract}
We introduce \emph{extension-based proofs},  a class of impossibility proofs that includes valency arguments. They are modelled as an interaction between a prover and a protocol.
Using proofs based on combinatorial topology, it has been shown that
it is impossible to deterministically solve $k$-set agreement among $n > k \geq 2$ processes in a wait-free manner in
certain asynchronous models.
However, it was unknown whether proofs based on simpler techniques were possible.  We show that this impossibility result cannot be obtained 
for one of these models
by an extension-based proof and, hence, extension-based proofs are limited in power.
\end{abstract}
\section{Introduction}

One of the most well-known results in the theory of distributed computing, due to Fischer, Lynch, and Paterson~\cite{FLP85}, is that there is no deterministic, wait-free protocol solving consensus among $n \geq 2$ processes in an asynchronous message passing system, even if at most one process may crash.
Their result has been extended to asynchronous shared memory systems where processes communicate by reading from and writing to shared registers \cite{Abr88,CIL87,Her91,LA87}.
Moses and Rajsbaum~\cite{MR02} gave a unified framework for proving the impossibility of consensus in a number of different systems.

Chaudhuri~\cite{Cha93} conjectured that 
the impossibility of consensus
could be generalized to the \emph{$k$-set agreement problem}. In this problem, there are $n > k \geq 1$ processes, each starting with an input in $\{0,1,\dots,k\}$.
Each process that does not crash must output a value that is the input of some process
(\emph{validity}) and, collectively, at most $k$ different values may be output (\emph{agreement}).
In particular, \emph{consensus} is just 1-set agreement.

Chaudhuri's conjecture was eventually proved in three concurrent papers by Borowsky and Gafni~\cite{BG93a}, Herlihy and Shavit~\cite{HS99}, and Saks and Zaharoglou~\cite{SZ00}. 
These proofs
and a later proof by Attiya and Rajsbaum~\cite{AR02} all relied on sophisticated machinery from combinatorial topology, using  a simplicial complex 
to model the set of all initial configurations of a wait-free protocol and a subdivision of it to model the set of all its final configurations.
Then they used Sperner's Lemma to show that there exists a final configuration in which $k+1$ different values have been output.
This proves that the protocol does not correctly solve $k$-set agreement.

Later on, Attiya and Casta\~{n}eda~\cite{AC11} and Attiya and Paz~\cite{AP12} showed how to obtain the same results using purely combinatorial techniques, without explicitly using topology.
Like the topological proofs, these proofs also consider the set of final configurations of a supposedly wait-free $k$-set agreement protocol. However, 
by relating different final configurations to one another using indistinguishability and employing arguments similar to proofs of Sperner's Lemma, they
proved the existence of a final configuration in which $k+1$ different values have been output.
 
A common feature of 
these impossibility proofs
is that they are non-constructive.
They 
prove that any deterministic protocol for $k$-set agreement 
among $n > k$ processes in an asynchronous system
has an execution in which some process takes infinitely many steps  without returning a value, but do not construct such an execution. 

In contrast,  impossibility proofs for deterministic, wait-free consensus in asynchronous systems
explicitly construct an infinite execution by repeatedly extending a finite execution by the steps of some processes.
Specifically, they define a {\em bivalent configuration} to be a configuration from which there is an execution in which some process outputs 0 and an execution in which some process outputs 1. Then they show that, from any bivalent configuration, there is a step of some process that results in another bivalent configuration. This allows them to explicitly construct an infinite execution in which no process has output a value.
A natural question arises: is there a
proof of the impossibility of
$k$-set agreement that explicitly constructs an infinite execution by repeated extensions? 
This question is related to results in proof complexity that show certain theorems cannot be obtained in 
weak formal
systems. For example, it is known that relativized bounded arithmetic cannot prove the pigeonhole principle~\cite{PBI93}.

\emph{Our contributions.} 
In this paper, we formally
define the class of \emph{extension-based proofs},
which model impossibility proofs that explicitly construct an infinite execution by repeated extensions.
We also prove that there is no extension-based proof of the impossibility of a deterministic, wait-free protocol solving $k$-set agreement among $n> k \geq 2$ processes in asynchronous systems where processes communicate
using an unbounded sequence of snapshot objects, to which each process can \id{update} and \id{scan} only once.

A \emph{task} is a problem in which each process starts with a private input value and must output one value, such that the sequence of values produced by the processes satisfies certain specifications, which may depend on the input values of the processes.
We view a proof of the impossibility of solving a task as an interaction between a \emph{prover} 
and any protocol that claims to solve the task. 
The prover has to refute this claim.
To do so, it can repeatedly query
the protocol about the states of processes in configurations that can be reached in a small
number of steps from configurations it already knows about.
It can also ask the protocol to exhibit an execution by a set of processes from a configuration it knows about in which some process outputs a particular value, or to declare that no such execution exists.
The goal of the prover is to construct a \emph{bad} execution, i.e.~an execution in which some processes take infinitely many steps 
without terminating or output values that
do not satisfy the specifications of the task.
The definition of extension-based proofs is presented in \sectionref{sec:ebp}.

A key observation
is that, from the results of its queries, many protocols are indistinguishable
to the prover. 
It must construct a single execution that is bad for all these protocols.
To prove that no prover can 
construct a bad execution,
we show how an adversary 
can adaptively
define a protocol in response to any specific prover's queries.
In this \emph{adversarial protocol,} 
all processes eventually terminate and output correct values
in executions consistent with the results of the prover's queries.
In~\sectionref{sec:ext-extensions}, we argue that no
such proof can
refute the possibility of a deterministic, wait-free protocol solving $k$-set agreement among $n > k \geq 2$ processes 
in the  {\em non-uniform iterated snapshot} (NIS) model.
In the conference version of this paper~\cite{AAEGZ19}, we made a very similar argument in the {\em non-uniform iterated immediate snapshot} (NIIS) model.
The snapshot and NIS models are defined in \sectionref{sec:model}.

From a computability standpoint, the snapshot, NIS, and NIIS models are equivalent in power to the basic asynchronous model
in which processes communicate through shared registers:
Any protocol in the basic asynchronous model can be easily adapted to run in the snapshot model
by replacing each \id{read} by a \id{scan} and then throwing away the information it does not need.
Afek, Attiya, Dolev, Gafni, Merritt, and Shavit~\cite{AADGMS93} gave a wait-free implementation of a snapshot object
using registers, so the converse is also true.
Any execution of a protocol using an immediate snapshot object is also an execution of the protocol using a snapshot object, so protocols designed for the
snapshot model also run in the immediate snapshot model.
Borowsky and Gafni~\cite{BG93b}
gave a wait-free implementation of an immediate snapshot object from a snapshot object, so protocols designed for
the immediate snapshot model can be modified to run in the snapshot model.
Likewise,  the non-uniform iterated snapshot (NIS) model and the non-uniform iterated immediate snapshot (NIIS) model
are computationally equivalent.
The {\em iterated immediate snapshot}
(IIS) model was introduced by Borowsky and Gafni~\cite{BG97}.
The NIIS model is a slight generalization, introduced by Hoest and Shavit~\cite{HS2006}.
Any protocol in the NIS model can be easily adapted to run in the single-writer snapshot model by appending values
rather than overwriting them when performing an \id{update} and throwing away the information in a \id{scan} about values that would have appeared
in other snapshot objects.
Similarly, any protocol in the NIIS model can be adapted to run in the immediate snapshot model.
Borowsky and Gafni gave a nonblocking emulation in the NIIS model of any protocol in the immediate snapshot model~\cite{BG97}. The same emulation can be applied to a protocol in the snapshot model to obtain a protocol in the NIS model.
Thus, to show there is no bounded wait-free protocol to solve a certain task
in all of these models, it suffices to show that there is no bounded wait-free protocol to solve that task in any one of them.

The IIS and NIIS models are nice because the reachable configurations of a protocol have a natural representation
using combinatorial topology.
For the NIS model, there is a similar representation using 
graphs, which may be easier
to understand.
This representation is presented 
 in \sectionref{sec:prelim}, together with some properties
that are needed for our proof.
In this view, when an extension-based prover makes queries, it is essentially performing local search on the configuration space of the protocol. Because the prover 
obtains incomplete information about the protocol, the adversary has some flexibility when specifying the protocol's behaviour
in configurations not yet queried by the prover.

There are a number of interesting directions for extending this work, which are discussed in~\sectionref{sec:conclusions}.
\section{Models}
\label{sec:model}

An execution is a sequence of steps. In each step of a shared memory model,
a process performs an atomic operation on a shared object and then updates its local state.
Communication among processes occurs through the atomic operations on shared objects.
We use $n$ to denote the number of processes, $p_1,\dots,p_{n}$ to denote the processes, and $x_i$ to denote the input to process $p_i$. When solving a task, we let $y_i$ denote the output of process $p_i$. A value is assigned to $y_i$ immediately
before $p_i$ terminates.

In the {\em single-writer register} model, there is one shared register $R_i$ for each process $p_i$, to which only it can \id{write}, but which can be \id{read}
by every process. The initial value of each register is $-$.

In the {\em single-writer snapshot} model, there is one shared single-writer snapshot object $S$ with $n$ components.
The initial value of each component is $-$.
The snapshot object supports two operations, $\id{update}(v)$ and $\id{scan}()$. An $\id{update}(v)$ operation by process $p_i$ updates $S[i]$, the $i$'th component of $S$,  to have value $v$, where $v$ is an element of an arbitrarily large set that does not contain $-$. A $\id{scan}()$ operation returns the value of each component of $S$.

In the {\em non-uniform iterated snapshot} (NIS) model, there is an infinite sequence, $S_1,S_2,\dots$, of shared \emph{single-writer snapshot} objects, each with $n$ components.
The initial value of each component is $-$.
The \emph{initial state} of process $p_i$
consists of its identifier, $i$, and its input, $x_i$.
Each process accesses each snapshot object at most twice, starting with $S_1$. 
The first time $p_i$ accesses a snapshot object $S_r$, it performs $\id{update}(s_i)$ to set  $S_r[i]$ to its current state, $s_i$.
Its new state is the same as its previous state, except for an extra bit indicating that it has performed the \id{update}.
At its next step, it performs a $\id{scan}$ of $S_r$.
Its new state, $s'_i$, is a pair consisting of $i$ and the result of the $\id{scan}$.
Process $p_i$ remembers its entire history, because $s_i$ is the $i$'th component of the result of the $\id{scan}$.
Next, $p_i$  consults a function, $\delta$, from the set of possible states of processes to the set of possible output values
and the special symbol $\bot$.
This indicates
whether $p_i$ should output a value: If $\delta(s'_i) \neq \bot$, then $p_i$ outputs $\delta(s'_i)$ and 
is {\em terminated}.
If $\delta(s'_i) =\bot$, then $p_i$ is poised to access the next snapshot object.
A protocol in the NIS model is completely specified by the function $\delta$. 

In the snapshot and NIS models, we assume that a process is only terminated after performing a $\id{scan}$. This is without loss of generality,
because a $\id{scan}$ does not change the contents of shared memory and, so, does not affect any other process.

A \emph{configuration} of a protocol consists of the contents of each shared object and the state of each process at some point during an execution of the protocol. 
An \emph{initial} configuration is a configuration in which 
every process is in an initial state and every object has its initial value.
A process is \emph{active} in a configuration if it is not terminated.
A configuration is \emph{final} if it has no active processes.
If $C$ is a configuration and $p_i$ is a process that is active in $C$, then $Cp_i$ denotes the configuration that results
when $p_i$ takes the step from configuration $C$ specified by the protocol.
A \emph{schedule} from $C$ is a finite or infinite sequence of (not necessarily distinct) processes $\alpha = \alpha_1,\alpha_2,\ldots$ such that there is a sequence of configurations $C= C_0,C_1,C_2\ldots$
where process $\alpha_i$ is active in $C_{i-1}$ and $C_i = C_{i-1}\alpha_i$ for each process $\alpha_i$ in $\alpha$.
If $\alpha$ is a finite schedule from $C$, then $C\alpha$ denotes the configuration of the protocol that is reached by performing steps, 
with one step by a process for each occurrence of the process in $\alpha$, in order, starting from  configuration $C$.
Each finite schedule from an initial configuration results in a \emph{reachable} configuration.
A protocol is \emph{wait-free} if it does not have an infinite schedule.

Two configurations $C$ and $C'$ are \emph{indistinguishable} to a set of processes $P$ 
if every process in $P$ has the same state in $C$ and $C'$.
Two finite schedules $\alpha$ and $\beta$ from $C$ are \emph{indistinguishable} to the set of processes $P$ if the resulting 
configurations $C\alpha$ and $C\beta$ are indistinguishable to $P$.
A \emph{P-only schedule} from $C$ is a schedule in which only processes in $P$ appear.
\section{Extension-Based Proofs}
\label{sec:ebp}

\begin{NIIS}
A \emph{configuration} of a protocol consists of the contents of each shared object and the state of each process at some point during an execution. 
For any configuration $C$ and any finite sequence $\alpha$ of (not necessarily distinct) processes,
let $C\alpha$ denote the configuration of the protocol that is reached by performing steps of the protocol, with one step by a process for each occurrence of the process in $\alpha$, in order, starting from  configuration $C$. For example, $Cp_2p_1p_1$ is the 
configuration of the protocol reached from $C$ by performing one step by process $p_2$ and then two steps by process $p_1$.
A process is \emph{active} in a configuration if it has not terminated.
Without loss of generality, we assume that, immediately after a process outputs a value, it terminates.
A configuration is \emph{final} if it has no active processes. An \emph{initial} configuration is a configuration in which no process has taken any steps and it is specified by the input of each process.
\end{NIIS}

An \emph{extension-based} proof is an interaction between a prover and a (supposedly) wait-free protocol for solving a task, which the prover is trying to prove is incorrect.
The prover starts with no knowledge about the protocol (except its initial configurations) and makes the protocol reveal information about
various configurations by asking \emph{queries}, which it chooses adaptively, based on the responses to its queries.
The interaction proceeds in phases, beginning with phase 1.  
 
In each phase $\varphi \geq 1$, the prover starts with a finite schedule, $\alpha(\varphi)$,  from a set of initial configurations,
$\config{B}(\varphi)$, 
which only differ from one another in the input values of processes that do not occur in $\alpha(\varphi)$,
and the set of resulting configurations $\config{A}(\varphi) = \{ C_0\alpha(\varphi) \ |\ C_0 \in \config{B}(\varphi)\}$.
At the start of phase 1, $\alpha(1)$ is the empty schedule and $\config{A}(1) = \config{B}(1)$ is the set of all initial configurations of the protocol.
If every configuration in $\config{A}(\varphi)$ is final and the values output by the processes before
terminating
satisfy the specifications
of the task, then the prover \emph{loses}.
The prover also maintains a set, $\config{A}'(\varphi)$, containing the configurations it reaches 
by taking non-empty sequences of steps from configurations in  $\config{A}(\varphi)$ during phase $\varphi$.
This set is empty at the start of phase $\varphi$ and it will be constructed so that,
for every configuration $C' \in \config{A}'(\varphi)$, there exists a configuration $C \in \config{A}(\varphi)$
and a schedule $\beta$ from $C$ such that $C' = C \beta$ and $C \beta' \in \config{A}'(\varphi)$ for every nonempty
prefix $\beta'$ of $\beta$.

A \emph{query} $(C,q)$ by the prover is specified by
a configuration $C \in \config{A} (\varphi) \cup \config{A}'(\varphi)$ and 
a process $q$ that is active in $C$.
The protocol replies to this query with 
the configuration $C'$ resulting from $q$ taking the step from $C$ specified by the protocol.
Then the prover adds $C'$ to $\config{A}'(\varphi)$
and we say that the prover has \emph{reached} $C'$.
Since there exists a configuration $C_0 \in \config{A} (\varphi)$ and a schedule $\beta$ from
$C_0$ such that $C = C_0 \beta$ and $C_0\beta' \in \config{A}'(\varphi)$ for every nonempty
prefix $\beta'$ of $\beta$, the same is true for $C'$ with the schedule $\beta q$ from $C_0$.
If the prover reaches a configuration $C'$ in which the outputs of the processes do not satisfy
the specifications of the task, it has demonstrated that the protocol is incorrect.
In this case, the prover \emph{wins}.

A \emph{chain of queries} is a (finite or infinite) sequence of queries $(C_1,q_1),(C_2,q_2),\ldots$ such that, for all consecutive queries $(C_i,q_i)$ and $(C_{i+1},q_{i+1})$  in the chain, $C_{i+1}$ is the configuration resulting from $q_i$ taking 
the step  from $C_i$ specified by the protocol.

An \emph{output query} $(C,Q,y)$ in phase $\varphi$ is specified by a configuration $C \in \config{A}(\varphi) \cup \config{A}'(\varphi)$, a set of processes $Q$ that are all active in $C$,
and a possible output value $y$. 
If there is a $Q$-only schedule
starting from $C$ that results in a configuration in which some process in $Q$ outputs $y$,
then the protocol returns some such schedule.
Otherwise, the protocol returns \none.
Note that the prover does not add the resulting configuration to $\config{A}'(\varphi)$.
However, it can do so by asking a chain of queries starting from $C$ following the schedule returned by the protocol.
If $Q$ is the set of all processes, then the results of the 
output queries $(C,Q,y)$, for every possible output value $y$, tells the prover 
which values can be output by 
the protocol starting from configuration $C$.
For example, if 0 and 1 are the only possible output values, this enables
the prover to determine whether
$C$  is \emph{bivalent}.

After constructing finitely many output queries and chains of queries in phase $\varphi$ without winning, the prover must end the phase by committing to a nonempty schedule $\alpha'$ from some configuration $C \in \config{A}(\varphi)$ such that
$C\alpha' \in \config{A}'(\varphi)$.
Since
 $\config{A}(\varphi) = \{ C_0\alpha(\varphi) \ |\ C_0 \in \config{B}(\varphi)\}$,
there is an initial configuration $C'_0 \in \config{B}(\varphi)$ such that $C = C'_0 \alpha(\varphi)$.
Hence $C\alpha' = C'_0 \alpha(\varphi)\alpha'$.
Then $\alpha(\varphi+1) = \alpha(\varphi)\alpha'$,
$\config{B}(\varphi+1)$ 
is
the set of all initial configurations that
only differ from $C'_0$ by the states of processes that do not appear in this schedule,
and
$\config{A}(\varphi+1)= 
\{C_0 \alpha(\varphi+1)\ |\ C_0 \in \config{B}(\varphi+1)\}$.
Then the prover begins phase $\varphi+1$.
 
If the interaction between the prover and the protocol is infinite, either because the prover 
is allowed to continue a chain of queries indefinitely
or the number of phases is infinite, the prover \emph{wins}.
In this case, the prover has demonstrated that the protocol is not wait-free.
For example, a valency proof for the impossibility of consensus shows how to construct an infinite schedule for any protocol that satisfies agreement and validity.
More generally,
if the protocol satisfies the specifications of the task in all of its final configurations, making the interaction go on forever is the only way that the prover can win.
For the trivial protocol in which no process ever outputs a value, the prover can win by asking any infinite chain of queries.

To prove that a task is impossible using an extension-based proof, one must show there exists a prover that  wins against {\em every} protocol.

If a deterministic protocol is wait-free and there are only a finite number of initial
configurations, then, by K\"onig's lemma, there is a finite upper bound on the length of all schedules of the protocol.
If the prover is given (or is able to ask for) such an upper bound, it can perform a finite number of chains of queries to examine all reachable configurations. In other words, this allows the prover to perform exhaustive search in the first phase to learn everything about the protocol. 
Likewise, if a prover does not have to eventually end phase 1, it can win against every wait-free protocol
by performing exhaustive search.
Such proofs violate the spirit of extension-based proofs.

For any protocol in the IIS model, there is a bound $T$ such that every process terminates after taking exactly $T$ steps.
Although the prover is not given $T$, it is easy for a prover to determine $T$ by performing one output query or one chain of queries.
Thus, extension-based proofs are too powerful in the IIS model.
If a task has a finite number of initial configurations and there is a protocol to solve
this task in the NIIS model, then there is a protocol to solve this task in the IIS model.
However, without knowledge of an upper bound on the length of all schedules, there is no general way to construct an IIS protocol from an NIIS protocol.
\section{Properties of the NIS Model} 
\label{sec:prelim}
\newcommand{\gt}[1]{\mathbb{G}_{#1}}
\newcommand{\vt}[1]{\mathbb{V}_{#1}}
\newcommand{\et}[1]{\mathbb{E}_{#1}}
\newcommand{\at}{\mathbb{A}}
\newcommand{\bt}{\mathbb{B}}
\newcommand{\ct}{\mathbb{C}}
\newcommand{\ft}{\mathbb{F}}
\newcommand{\Tt}{\mathbb{T}}

The proof of our main result relies on properties of the non-uniform iterated snapshot model, 
including a simple graphical representation of protocols in this model.
We begin with two simple observations.

\begin{observation}
\label{obs:configstates}
A reachable configuration in the NIS model is completely determined by the states of all processes in the configuration (including the processes that have terminated).
\end{observation}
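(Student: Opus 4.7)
The plan is to show that from the tuple of process states alone, one can reconstruct every component $S_r[j]$ of every snapshot object, and hence the whole configuration. First, I observe that in the NIS model, $S_r[j]$ is initially $-$ and is modified only when $p_j$ performs its $\id{update}$ on $S_r$, at which instant it is set to $p_j$'s state at that time. Since $p_j$ accesses $S_r$ at most twice (an $\id{update}$ followed at most later by a $\id{scan}$) and the $\id{scan}$ does not touch $S_r[j]$, this value is never overwritten. Therefore in any reachable configuration, $S_r[j]$ is either $-$ or exactly the state $p_j$ was in when it performed $\id{update}$ on $S_r$.

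Next, I argue that for any process $p_j$ and any $r$, one can read off from $p_j$'s current state whether $p_j$ has already performed $\id{update}$ on $S_r$ and, if so, what state it was in at that moment. This uses the history-preserving nature of the NIS model: after $p_j$ scans $S_r$, its new state is $(j,\,v_1,\ldots,v_n)$ where $v_j$ is precisely the state it wrote during its $\id{update}$ on $S_r$; and the extra bit attached after an $\id{update}$ but before the corresponding $\id{scan}$ signals the in-between case. So by induction on $r = 1, 2, \ldots$, peeling off one iteration at a time from the outermost scan-result, the current state of $p_j$ determines the entire sequence of states that $p_j$ passed through, including each state that was written as some $S_r[j]$. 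The same reasoning applies whether $p_j$ is still active or has terminated, since a terminated process's final state (attained immediately after a $\id{scan}$) still carries its full history.

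Combining the two parts, each $S_r[j]$ is determined by $p_j$'s current state, so the contents of shared memory are a function of the tuple of process states, and the configuration itself (shared memory plus process states) is therefore determined by the process states alone. I do not expect any real obstacle: the only thing worth being careful about is the inductive extraction of past states from the current state, which relies crucially on the single-writer property of the snapshot objects and on the fact that $p_j$ writes its whole state (not just its input or a summary) during each $\id{update}$.
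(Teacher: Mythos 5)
Your proof is correct and follows the same reasoning the paper gives for this observation: the single-writer property ensures $S_r[j]$ can only hold $-$ or the state $p_j$ wrote there, and the history-preserving state structure lets that written state be recovered from $p_j$'s current state. The paper states this in one sentence; your version just spells out the inductive extraction of the history explicitly.
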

\noindent
This is true because only process $p_i$ can update the $i$'th component of each snapshot object and each process remembers its entire history.

\medskip

The second observation is a special case of a general, well-known result about indistinguishability. (For example, see Corollary 2.2. in \cite{AE14}.)

\begin{observation}
\label{obs:ccp}
Suppose $C$ and $C'$ are two reachable configurations in the NIS model  and
each snapshot object $S_r$ has the same contents in $C$ and $C'$, for all $r \geq t$.
If  $C$ and $C'$ are indistinguishable to a set of processes $P$, each active process in $P$ is poised in $C$ to access a snapshot object $S_r$, for some $r \geq t$,
and $\alpha$ is a finite, $P$-only schedule from $C$,
then $\alpha$ is a schedule from $C'$ and
the configurations $C\alpha$ and $C'\alpha$ are indistinguishable to $P$.
\end{observation}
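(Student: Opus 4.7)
The plan is to prove the statement by induction on the length of the schedule $\alpha$, maintaining a stronger inductive hypothesis than the conclusion: after each prefix $\beta$ of $\alpha$, (i) $\beta$ is a valid schedule from both $C$ and $C'$, (ii) $C\beta$ and $C'\beta$ are indistinguishable to $P$, (iii) each active process in $P$ in $C\beta$ is still poised to access some $S_r$ with $r \geq t$, and (iv) the contents of $S_r$ in $C\beta$ and $C'\beta$ agree for every $r \geq t$. The base case $\beta = \epsilon$ is the hypothesis of the observation.

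For the inductive step, suppose $\beta = \gamma p$, where the claim holds for $\gamma$. Since $\alpha$ is a $P$-only schedule, $p \in P$. By (ii) applied to $\gamma$, $p$ has the same state in $C\gamma$ and $C'\gamma$, and in particular $p$ is still active in $C'\gamma$, so $p$ can take a step there too. By (iii), the step $p$ is poised to take accesses some $S_r$ with $r \geq t$, and the specific operation (an \id{update} of its state, or a \id{scan}) is determined entirely by $p$'s state. I would then handle the two cases separately. For an \id{update}, $p$ writes its state (which is the same in both configurations) into $S_r[i]$, so $S_r$ has the same contents in $C\gamma p$ and $C'\gamma p$, preserving (iv); $p$'s new state (its old state with an extra bit) is the same, preserving (ii); and $p$ is poised to \id{scan} the same $S_r$, preserving (iii). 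For a \id{scan}, $p$ reads $S_r$, which has the same contents in $C\gamma$ and $C'\gamma$ by (iv), so the returned value is identical and hence $p$'s new state is the same, preserving (ii); then $\delta$ either terminates $p$ in both configurations with the same output or leaves $p$ poised at $S_{r+1}$ in both, and since $r+1 > r \geq t$, (iii) is preserved; (iv) is preserved because a \id{scan} does not modify any snapshot object. In either case, the states of processes in $P \setminus \{p\}$ are unchanged, and those for $r < t$ are untouched and hence remain consistent with the preceding values, so (ii) and (iv) continue to hold.

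The main subtlety, which is essentially where \observationref{obs:configstates} is invoked implicitly, is to verify that the step $p$ takes from $C'\gamma$ is the "same" step (same operation on the same object) as from $C\gamma$; this follows because the NIS protocol is fully determined by $\delta$ applied to process states, and $p$'s state coincides in the two configurations. A second routine point is ensuring that the agreement of snapshot contents at levels $\geq t$ is preserved; this is immediate because any update during a $P$-only schedule writes into some $S_r$ with $r \geq t$ and writes the same value on both sides, while lower-indexed snapshot objects are never touched. Applying the inductive hypothesis to the final prefix $\beta = \alpha$ yields the desired conclusion that $\alpha$ is a schedule from $C'$ and that $C\alpha$ and $C'\alpha$ are indistinguishable to $P$.
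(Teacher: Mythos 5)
Your proof is correct. Note that the paper does not actually prove this observation: it is stated as a special case of a standard indistinguishability result and dispatched with a citation (Corollary 2.2 of the referenced work). The induction you give, with the strengthened hypothesis that tracks (iii) that processes in $P$ stay poised at snapshot objects of index at least $t$ and (iv) that those objects' contents continue to agree, is exactly the standard argument underlying such results, and the two case analyses (\id{update} versus \id{scan}) are handled correctly. The one point worth tightening is your appeal to \observationref{obs:configstates}: what you actually need is not that a configuration is determined by the process states, but the more elementary fact that in the NIS model a process's next operation (which object, and whether it is an \id{update} or a \id{scan}) is a function of its own local state alone --- which holds because each process records its entire history and the protocol is specified by $\delta$ applied to that state.
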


Suppose $C$ is a reachable configuration in which all active processes are poised to $\id{update}$ the same snapshot object.
A \emph{1-round schedule} from $C$ is a schedule consisting of two occurrences of each process that is active in $C$.
Each active process in the resulting configuration is poised to $\id{update}$ the next snapshot object in the sequence.
If none of the processes are active in $C$, then the empty schedule is the only 1-round schedule from $C$.
The following observation is a corollary of \observationref{obs:ccp}.

\begin{observation}
\label{obs:indist-oneround}
Suppose $\beta$ is a 1-round schedule from
$C$, $\alpha$ is a prefix of $\beta$, and $P$ is the set of those processes that occur twice in $\alpha$,.
Then $\alpha$ and $\beta$ are indistinguishable to $P$
and the terminated processes in configuration $C$.
\end{observation}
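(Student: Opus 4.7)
The plan is to argue directly from the definition of indistinguishability, namely that every process in $P$ together with every process terminated in $C$ has the same state in $C\alpha$ as in $C\beta$. Since $\alpha$ is a prefix of $\beta$, I can write $\beta = \alpha\gamma$ for some (possibly empty) schedule $\gamma$. In the NIS model, as emphasised by \observationref{obs:configstates} and the description of a step, a process's state changes only when that process itself takes a step. So it suffices to show that neither a process in $P$ nor a process terminated in $C$ appears in $\gamma$.

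First I would handle the terminated processes. Any process $p$ terminated in $C$ is not active in $C$, and by the definition of a 1-round schedule $\beta$ consists only of two occurrences of each process that is active in $C$. Hence $p$ does not occur in $\beta$ at all, and in particular does not occur in $\gamma$. Thus $p$'s state is unchanged from $C$ to $C\alpha$ to $C\beta$.

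Next I would handle the processes in $P$. By the definition of $P$, each $p \in P$ occurs twice in $\alpha$. Since $\beta$ is a 1-round schedule, $p$ occurs exactly twice in $\beta$, so $p$ occurs zero times in $\gamma$. Therefore the state of $p$ in $C\beta$ equals its state in $C\alpha$.

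Putting the two cases together shows that $C\alpha$ and $C\beta$ are indistinguishable to the union of $P$ with the set of terminated processes in $C$, which by definition is exactly the conclusion we want. The proof is essentially a bookkeeping argument, and the only point that needs care is unpacking ``1-round schedule'' correctly, namely that it contains exactly two occurrences of each active process in $C$ and no occurrences of the terminated ones; once that is noted, the rest is immediate from the fact that a process's state can change only as a result of its own step. (One could alternatively invoke \observationref{obs:ccp}, but the direct counting argument is cleaner.)
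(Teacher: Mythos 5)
Your proof is correct, and it matches the intended argument: the paper gives no explicit proof, merely noting the statement is a corollary of \observationref{obs:ccp}, and your direct bookkeeping — writing $\beta=\alpha\gamma$, observing that neither a process in $P$ nor a process terminated in $C$ occurs in $\gamma$, and using the fact that a process's state changes only on its own steps — is exactly the substance of that corollary.
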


For $t > 1$, a \emph{t-round schedule} from $C$  is a schedule $\beta_1 \beta_2 \cdots\beta_t$ such that $\beta_1$ is a 1-round schedule from $C$ and, for $1 < i \leq t$,  $\beta_i$ is a 1-round schedule from $C \beta_1 \cdots \beta_{i-1}$. Notice that some processes may have terminated during $\beta_1 \cdots \beta_{i-1}$. 
These processes are not included in $\beta_i$.

\begin{ignore}
Suppose $\beta_1 \cdots \beta_t$ is a $t$-round schedule from $C$ and, for $1 \leq i \leq t$, 
$\alpha_i$ is a prefix of $\beta_i$.
Let $P$ be the set of processes that occur twice in $\alpha_1$. 
If $\alpha_i$ is a $P$-only schedule from $C\alpha_1 \cdots \alpha_{i-1}$ and all processes in $P$ that are active in $C\alpha_1 \cdots \alpha_{i-1}$ occur twice in $\alpha_i$,
for $1 < i \leq t$,
then $\alpha_1 \cdots \alpha_t$ and $\beta_1 \cdots \beta_t$ are indistinguishable to the processes in $P$ and the terminated processes in $C$.
\end{ignore}

Every schedule from an initial configuration
$C$ that reaches a final configuration $C'$ is indistinguishable
(to all processes) to an $r$-round schedule, for some value of $r$.
This is a special case of the following lemma, where $t =1$ and $P$ is the set of all processes.

\begin{lemma}
\label{lem:round}
Let $C$ be a configuration in which every active process is poised to perform an \id{update} to $S_t$
and let $C'$ be a configuration reachable from $C$. Suppose that $P$ is a set of processes that 
is each poised to perform an \id{update} to $S_{t+r}$ in $C'$ or has terminated prior to performing an \id{update} to $S_{t+r}$.
Then there exists an $r$-round schedule $\beta$ from $C$
such that $C\beta$ and $C'$ are  indistinguishable to $P$, i.e.~each process in $P$ has the
same state in $C\beta$ and $C'$.
\end{lemma}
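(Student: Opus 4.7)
The plan is to prove the statement by induction on $r$. The base case, taking $\beta$ to be the empty schedule, is essentially immediate: a process $p \in P$ that is active in $C$ is poised to update $S_t$ there, and since termination in the NIS model happens only at a scan, $p$ cannot terminate in $C'$ without first performing its update of $S_t$; hence $p$'s state in $C'$ is unchanged from $C$. Terminated $P$-processes trivially have unchanged states.

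For the inductive step, let $\gamma$ be the schedule with $C' = C\gamma$. My first move is to reorder $\gamma$ as an equivalent schedule $\gamma^t \gamma^{>t}$, where $\gamma^t$ is the subsequence of $\gamma$ consisting of the operations on $S_t$ (in their original order) and $\gamma^{>t}$ is the subsequence of operations on $S_{t+1}, S_{t+2}, \ldots$ (also in their original order). This reordering is legitimate because every process accesses the snapshot objects in increasing order, so no two operations by the same process ever need to be transposed; operations by different processes on different objects always commute. Crucially, this preserves every process's state and the contents of every shared object.

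Next, I would extend $\gamma^t$ to a 1-round schedule $\beta_1$ from $C$ by appending, in any valid order, the missing update and/or scan on $S_t$ for each process that is active in $C$ but did not complete both of its $S_t$ operations during $\gamma$. The key invariant to maintain is that for every process $p$ which did complete both its update and its scan of $S_t$ in $\gamma$, the prefix of $\beta_1$ up through $p$'s scan coincides with the corresponding prefix of $\gamma^t$, so $p$'s state immediately after scanning $S_t$ is the same in $\beta_1$ as it was in $\gamma$. Setting $C^* = C\beta_1$ and $C^{**} = C^* \gamma^{>t}$, I would then verify that $\gamma^{>t}$ is a valid schedule from $C^*$ (only processes that passed $S_t$ in $\gamma$ appear in it, their states in $C^*$ match their states in $C\gamma^t$, and $S_{t+1}, S_{t+2}, \ldots$ still have their initial values in both configurations), and that $C^{**}$ is indistinguishable from $C'$ to $P$ (any $p \in P$ is either terminated in $C$, hence unchanged, or active in $C$, in which case the hypothesis on $P$ forces $p$ to have completed its $S_t$ operations in $\gamma$, so the invariant above applies).

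The inductive hypothesis, applied to $(C^*, C^{**}, P)$ with parameters $t+1$ and $r-1$, then fires: every active process in $C^*$ is poised to update $S_{t+1}$ by the defining property of a 1-round schedule, $C^{**}$ is reachable from $C^*$, and each $p \in P$ has the same state in $C^{**}$ as in $C'$ and is therefore poised to update $S_{t+r}$ or terminated before that. It supplies an $(r-1)$-round schedule $\beta'$ from $C^*$ with $C^*\beta'$ indistinguishable from $C^{**}$ to $P$, and $\beta = \beta_1 \beta'$ is the desired $r$-round schedule from $C$. The main obstacle I expect is the bookkeeping in the commutation and extension steps: one must check carefully that pushing all $S_t$ operations to the front of $\gamma$ preserves each individual scan outcome, and that the extra $S_t$ steps appended to $\beta_1$ for processes outside $P$ cannot retroactively disturb the $S_t$ scan results of $P$-processes already fixed by $\gamma^t$.
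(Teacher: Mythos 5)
Your proof is correct, and it reaches the same conclusion by a genuinely different decomposition of the schedule. Both arguments are inductions on $r$ that normalize the given schedule one round at a time, padding each partial round out to a full 1-round schedule; the difference is in which round gets peeled off and how the interleaving is handled. The paper first truncates $\alpha$ to the first $2r$ occurrences of each process (discarding the accesses to $S_{t+r}$ and beyond, which no process of $P$ performs), and then peels off the \emph{last} round: the inductive hypothesis is applied to the first $2r-2$ occurrences of each process, and the leftover $S_{t+r-1}$ accesses are completed to a final 1-round schedule, with the preservation of scan results argued directly across three interleaved schedules. You instead peel off the \emph{first} round, after an explicit stable commutation that moves all $S_t$ operations to the front; the trailing accesses that the paper removes up front are absorbed by your $r=0$ base case, since only processes outside $P$ perform them and they cannot alter the state of any process in $P$. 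Your explicit reordering makes the ``each scan sees the same set of updates'' bookkeeping cleaner and localizes the appeal to \observationref{obs:ccp} to the single step of validating $\gamma^{>t}$ as a schedule from $C^*$, at the cost of having to justify the commutation itself (which is sound here: the swapped operations are always on distinct objects and by distinct processes, and per-process order is preserved). The one fact your argument leans on that deserves to be stated explicitly is that, for $r\geq 1$, every process of $P$ that is active in $C$ completes both of its $S_t$ operations within $\gamma$ --- true because such a process is either past $S_t$ in $C'$ or has terminated, and termination occurs only at a scan --- since that is precisely what pins its $S_t$-scan result to the prefix $\gamma^t$ of $\beta_1$ and keeps the appended padding steps harmless.
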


\begin{proof}
Since $C'$ is reachable from $C$, there is a finite schedule $\alpha$ from $C$ such that $C' = C\alpha$.
Note that each process in $P$ occurs at most $2r$ times in $\alpha$.
Let $\gamma$ be the schedule from $C$ obtained from $\alpha$ by removing all but the first $2r$ occurrences of every process.
The steps that are performed in $\alpha$, but not $\gamma$, are accesses to $S_{t+r}$ or snapshot objects that follow $S_{t+r}$.
Since no process in $P$ accesses these objects when the protocol is performed from $C$ according to $\alpha$
or $\gamma$, $C\alpha$ and $C\gamma$ are indistinguishable to $P$.
So, it suffices to show that $C\gamma$ and $C\beta$ are indistinguishable to $P$ for some $r$-round schedule $\beta$.

The proof proceeds by induction on $r$.
First suppose that $r =1$. Let $\beta$ be any 1-round schedule that has $\gamma$ as a prefix.
In other words, append sufficiently many occurrences of each process that is active in $C$ to the end of $\gamma$
so that each occurs exactly 2 times in $\beta$.
Note that each process in $P$ that is active in $C$ occurs exactly 2 times in $\alpha$, so all occurrences of processes in $P$
that occur in $\beta$ occur in $\gamma$. 
By \observationref{obs:indist-oneround}, $C\gamma$ and $C\beta$ are indistinguishable to $P$.

Now suppose that $r > 1$. Let $\alpha'$ be the schedule from $C$ obtained from $\gamma$ by removing all but the first $2r-2$ occurrences of every process.
This removes all accesses of $S_{t+r-1}$, but no other steps.
Let $P'$ be the set of all processes that are poised to perform an \id{update} to $S_{t+r-1}$ in $C\alpha'$ or have terminated prior to performing an \id{update} to $S_{t+r-1}$.
Then $P \subseteq P'$.
By the induction hypothesis, there exists an $(r-1)$-round schedule $\beta'$ from $C$
such that $C\beta'$ and $C\alpha'$ are indistinguishable to $P'$.

Let $\alpha''$ be the schedule from $C\alpha'$ obtained from $\gamma$ by removing the first $2r-2$ occurrences of every process.
Each process that is terminated in $C\alpha'$ does not occur in $\alpha''$.
Each process in $P$ that is active in $C\alpha'$ occurs exactly twice in $\alpha''$.

Let $\beta''$ be a 1-round schedule from $C\beta'$ obtained from $\alpha''$ by appending sufficiently many occurrences of every process that is active in $C\beta'$ so that each occurs exactly twice  in $\beta''$. 

Note that if some process $p_i$ performs its \id{update} to $S_{t+r-1}$ in $\gamma$
then $p_i \in P'$ and $p_i$ occurs at least once in $\alpha''$, so it is active in $C\alpha'$
and performs the same \id{update} to $S_{t+r-1}$ in $\alpha''$.
Since $C\beta'$ and $C\alpha'$ are indistinguishable to $p_i$, it performs the same \id{update} to $S_{t+r-1}$ in $\beta''$.
By construction, the accesses of $S_{t+r-1}$ in $\alpha''$  occur in the same order as in $\gamma$.
Moreover, because each process in $P$ that is active in $C\alpha'$ occurs exactly twice in $\alpha''$,
its \id{scan} of $S_{t+r-1}$ gets the same result in $\gamma$, $\alpha''$, and $\beta''$.
Hence $C\gamma$ and $C\beta$ are indistinguishable to $P$, where $\beta = \beta'\beta''$ is an $r$-round schedule.
\end{proof}

In particular, the state of each process in each reachable configuration in which the process is poised to perform
an \id{update} to $S_{1+r}$ 
or has terminated prior to performing an \id{update} to $S_{1+r}$
is the state of that process in a configuration reachable by an $r$-round schedule
from an initial configuration.

Consider a protocol in the NIS model (specified by a function $\delta$ from the set of possible states of processes
to the set of possible output values and the special symbol $\bot$).
We use an undirected graph $\gt{t} = (\vt{t},\et{t})$ to represent the configurations of this protocol reachable from initial configurations by $t$-round schedules.
Each vertex $v \in \vt{t}$ represents the state of one process in some such reachable configuration
and $\mathit{id}(v)$ is the identifier of this process, which is the first part of the state.
There is an edge in $\et{t}$  between two vertices if there is some such reachable configuration that contains 
the states represented by both vertices.
In each configuration, there are exactly $n$ vertices, each with a different identifier.
Therefore each edge in $\et{t}$ belongs to an $n$-vertex clique in $\gt{t}$ consisting of vertices with distinct identifiers.

An $n$-vertex clique \emph{represents} a configuration if the vertices of the clique represent the states of the processes in that configuration.
In particular, the $n$-vertex cliques in $\gt{0}$ represent all initial configurations.
For the $k$-set agreement problem, $\vt{0} = \left\{ (i,a) \ |\ i \in \{1,\ldots, n\} \mbox{ and } a \in \{0, \ldots, k\} \right\}$
and $\{(i,a),(j,b)\} \in \et{0}$ if and only if $i \neq j$.
A vertex $v$ is {\em active} if the state it represents is active and we use $\delta(v) = \bot$ to denote this.
A vertex $v$ is {\em terminated} if the state it represents is terminated and we use $\delta(v) \neq \bot$ to denote the
value the process outputs in this state.
An $n$-vertex clique represents a final configuration if and only if all its vertices are terminated.

We show how to construct $\gt{t+1}$ from $\gt{t}$,
given $\delta(v)$ for all $v \in \vt{t}$.
We start with an $n$-vertex clique $\sigma$ in $\gt{t}$, which represents some configuration $C$ reachable from an initial configuration by a $t$-round schedule,
and construct the $n$-vertex cliques of $\gt{t+1}$ representing configurations reachable from $C$ by 1-round schedules.

Consider any subset $\tau$ of the active vertices in $\sigma$.
Let $\mathit{id}(\tau) = \{ \mathit{id}(v) \ |\ v \in \tau\}$ be the set of identifiers of processes whose states are represented by
vertices in $\tau$.
Each process $p_i$, for $i \in \mathit{id}(\tau)$, is poised to 
perform an $\id{update}$ to $S_{t+1}$ in configuration $C$.
Suppose process $p_i$ performs its  $\id{update}$ to $S_{t+1}$, for each $i \in \mathit{id}(\tau)$, but no other process does so.
Then, for each $v \in \tau$, $S_{t+1}[\mathit{id}(v)]$ is the state represented by $v$
and, for each $j \not\in \mathit{id}(\tau)$, $S_{t+1}[j] = -$.
If some process $p_i$ now performs a $\id{scan}$ of $S_{t+1}$, the result is an $n$-component vector containing these values. Since there is a one-to-one correspondence between $\tau$ and this vector, we can represent the resulting state of
process $p_i$ by the pair $(i,\tau)$.

Given $\delta(v)$ for each $v \in \sigma$, we can define the graph $\chi(\sigma,\delta)$, representing the configurations reachable from $C$
by 1-round schedules, as follows:
\begin{itemize}
\item
$v$ is a vertex in $\chi(\sigma,\delta)$ if and only if
\begin{itemize}
\item
$v$ is a terminated vertex in $\sigma$
or
\item
$v = (i,\tau)$, where $\tau$ is a subset of the active vertices in $\sigma$ and $i \in \mathit{id}(\tau)$.
\end{itemize}
If $v = (i,\tau)$, then $\mathit{id}(v) = i$.
\item
$\{v,v'\}$ is an edge in $\chi(\sigma,\delta)$ if and only if $\mathit{id}(v) \neq \mathit{id}(v')$ and 
\begin{itemize}
\item
at least one of $v$ and $v'$ is a terminated vertex in $\sigma$ or
\item
$v = (\mathit{id}(v), \tau)$ and $v' = (\mathit{id}(v'), \tau')$,
where $\tau$ and $\tau'$ are subsets of the active vertices in $\sigma$ such that
$\tau \subseteq \tau'$ or $\tau' \subseteq \tau$.
\end{itemize}
\end{itemize}

A vertex is in both $\sigma$ and $\chi(\sigma,\delta)$ if and only if it is terminated.
If vertex $(i,\tau)$ is  in $\chi(\sigma,\delta)$, but not in $\sigma$, then it represents the state of process $p_i$ 
immediately after it has performed its \id{scan} of $S_{t+1}$, $\tau$ represents the result of the \id{scan}, 
and $\mathit{id}(\tau)$ is the set of identifiers of the processes that performed an  \id{update} to $S_{t+1}$ prior to this \id{scan}.
Note that $i \in \mathit{id}(\tau)$, since process $p_i$ performs its \id{update} to $S_{t+1}$ before its \id{scan}.

\figureref{fig:subdivision} illustrates the subdivisions of two different 3-vertex cliques. In $\sigma$, all three vertices are active,
the state of process $p_1$ is represented by vertex $x$, the state of $p_2$ is represented by vertex $y$, and the state of $p_3$ is represented by vertex $z$.
In $\sigma'$, process $p_1$ and $p_3$ have the same states, but the state of $p_2$ is represented by vertex $y'$,
which is terminated. 
For readability, process identifiers are omitted from the representation of states in $\chi(\sigma,\delta)$ and $\chi(\sigma',\delta)$.
Instead, white vertices indicate states of $p_1$, red vertices indicate states of $p_2$, and black vertices indicate states of $p_3$.
  
\begin{figure*} 
\centering
\includegraphics[width=0.7\textwidth, scale=0.5]{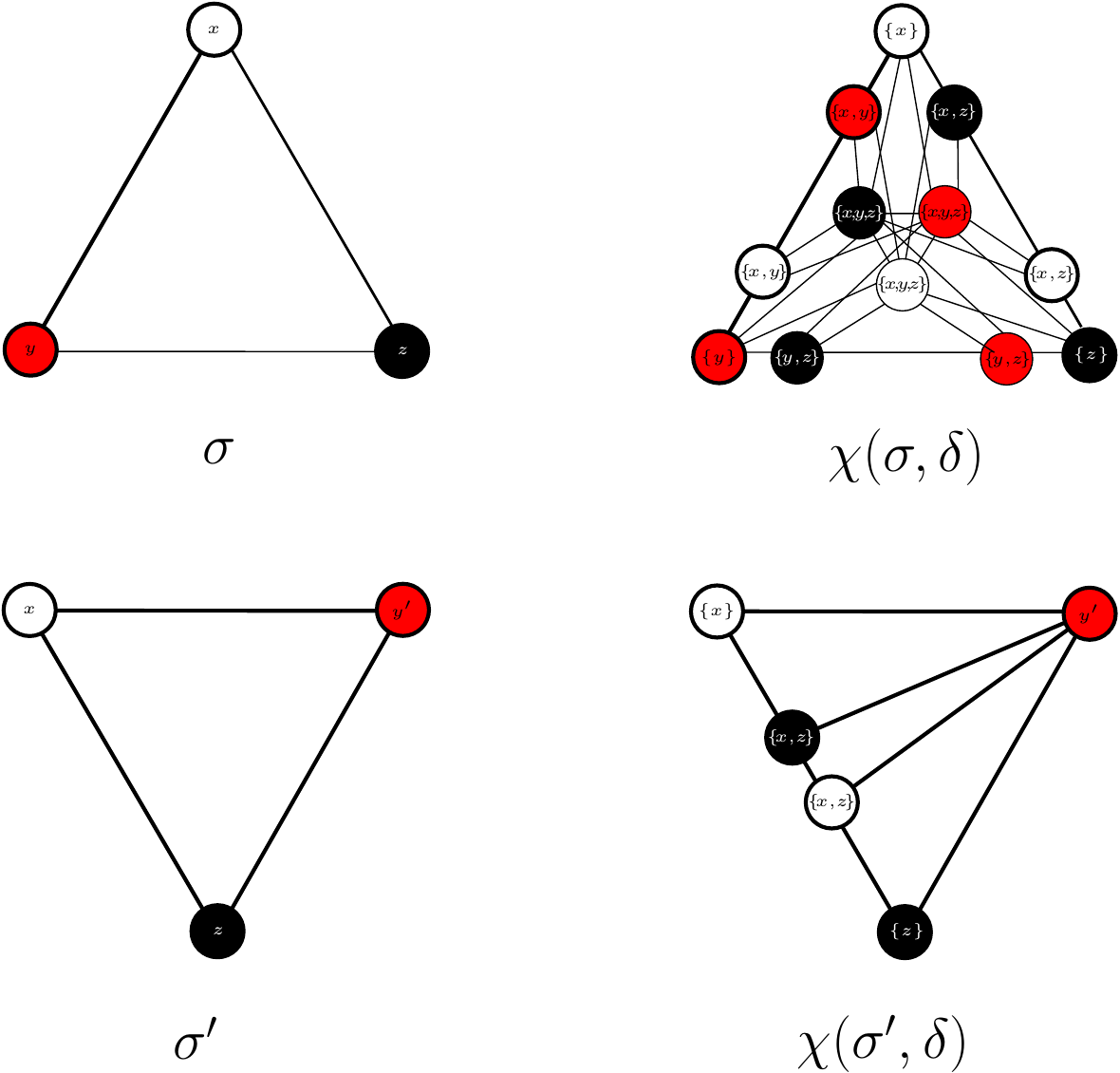}
\caption{The subdivisions of two 3-vertex cliques.}
\label{fig:subdivision}
\end{figure*}

The next two results show that there is a correspondence between the $n$-vertex cliques in $\chi(\sigma,\delta)$
and the configurations reachable from $C$ by 1-round schedules.

\begin{lemma}
Let $\sigma$ be an $n$-vertex clique
that represents a configuration $C$ in which all active processes are poised to \id{update} the same snapshot object.
If $\beta$ is a 1-round schedule from $C$, then
the configuration $C\beta$ is represented by an $n$-vertex clique in $\chi(\sigma,\delta)$.
\label{lem:config-clique}
\end{lemma}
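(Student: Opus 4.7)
The plan is to exhibit, for each process $p_i$, a specific vertex of $\chi(\sigma,\delta)$ that represents $p_i$'s state in $C\beta$, and then verify that these $n$ vertices are pairwise adjacent in $\chi(\sigma,\delta)$. Write $S_{t+1}$ for the common snapshot object that all active processes in $C$ are poised to access.

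The vertex construction splits into two cases. If $p_i$ is terminated in $C$, then $p_i$ takes no step in $\beta$, so its state in $C\beta$ is the same as in $C$, and the vertex representing it in $\sigma$ is a terminated vertex of $\sigma$, hence also a vertex of $\chi(\sigma,\delta)$ by the first bullet of the vertex definition. If $p_i$ is active in $C$, then the two occurrences of $p_i$ in $\beta$ are an $\id{update}(s_i)$ to $S_{t+1}$ followed by a $\id{scan}$ of $S_{t+1}$, where $s_i$ is $p_i$'s state in $C$ (since $p_i$ has taken no earlier step in $\beta$). Define $\tau_i$ to be the set of vertices $u$ of $\sigma$ such that process $p_{\mathit{id}(u)}$ performed its \id{update} step before $p_i$'s \id{scan}. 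Only processes active in $C$ take steps in $\beta$, so $\tau_i$ is a subset of the active vertices of $\sigma$. Because $p_i$ performs its own \id{update} before its \id{scan}, the vertex $v_i$ representing $p_i$'s state in $\sigma$ lies in $\tau_i$, giving $i\in \mathit{id}(\tau_i)$. Moreover, each $p_{\mathit{id}(u)}\in \mathit{id}(\tau_i)$ has taken no step prior to its \id{update}, so $S_{t+1}[\mathit{id}(u)]$ equals the state represented by $u$ at the moment of $p_i$'s \id{scan}, and the remaining components equal $-$. Therefore $p_i$'s state in $C\beta$ is exactly the one represented by the pair $(i,\tau_i)$, which is a vertex of $\chi(\sigma,\delta)$ by the second bullet.

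To check the clique property, pick any pair of vertices $v,v'$ among the $n$ just constructed, belonging to distinct processes $p_i,p_j$; their identifiers differ. If at least one of $p_i,p_j$ is terminated in $C$, then at least one of $v,v'$ is a terminated vertex of $\sigma$, and the edge condition is immediately satisfied. Otherwise both vertices are of the form $(i,\tau_i)$ and $(j,\tau_j)$ as above. Here the key observation is that the steps of $\beta$ are totally ordered, so WLOG $p_i$'s \id{scan} precedes $p_j$'s \id{scan}; then every \id{update} that occurs before $p_i$'s \id{scan} also occurs before $p_j$'s \id{scan}, yielding $\tau_i \subseteq \tau_j$, which is exactly the second edge condition. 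Thus all $n$ vertices are pairwise adjacent and form an $n$-vertex clique in $\chi(\sigma,\delta)$ representing $C\beta$.

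The proof is essentially a bookkeeping exercise; the only nontrivial ingredient is the inclusion $\tau_i\subseteq \tau_j$ in the last case, which I expect to be the main point deserving explicit justification. It rests on the fact that in a 1-round schedule each process performs exactly one \id{update} and one \id{scan} of $S_{t+1}$, so the sets of \id{update}s seen by successive \id{scan}s form a monotone chain determined by the linear order in which the \id{scan}s occur in $\beta$.
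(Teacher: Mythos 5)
Your proof is correct and follows essentially the same route as the paper's: identify each active process's post-scan state with the pair $(i,\tau_i)$ determined by the prefix of $\beta$ before its \id{scan}, handle terminated processes via the first bullet of the vertex definition, and obtain the edges from the containment $\tau_i\subseteq\tau_j$ induced by the order of the \id{scan}s. The inclusion you flag as the key point is exactly the step the paper also singles out.
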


\begin{proof}
Let $A$ be the set of active processes in configuration $C$
and let $S_{t+1}$ be the snapshot object the processes in $A$ are poised to \id{update}.
Then a 1-round schedule $\beta$ from $C$ is a sequence consisting of two copies of each process in $A$.
Consider the second occurrence in $\beta$ of a process $p_i$. This corresponds to the step in the schedule $\beta$
at which $p_i$ performs its \id{scan} of $S_{t+1}$. Let $\alpha$ be the prefix of $\beta$ prior to this step.
For each $p_j \in A$, $p_j$ occurs in $\alpha$ if and only if $S_{t+1}[j]$ in configuration $C\alpha$ contains the state of $p_j$
in configuration $C$.
Let $\tau$ be the set of vertices in $\sigma$ that represent the states of processes appearing in $S_{t+1}$ in configuration $C\alpha$.
Then $\tau$ represents the result of the  \id{scan} of $S_{t+1}$ by process $p_i$.
In particular, $p_i$ occurs in $\alpha$,  since it performs its \id{update} to $S_{t+1}$ before its \id{scan}.
Hence $i \in \mathit{id}(\tau)$ and $(i,\tau) \in \chi(\sigma,\delta)$.

Suppose $j \neq i$, $p_j \in A$, and
the second occurrence of $p_j$ in $\beta$ occurs after $p_i$.
Let $\rho$ be the subset of $\sigma$ representing the result of $p_j$'s \id{scan}, so $(j,\rho) \in \chi(\sigma,\delta)$.
Then the prefix of $\beta$ prior to the second occurrence of $p_j$ begins with $\alpha$.
Hence  $\tau \subseteq \rho$ and $\{(i,\tau), (j,\rho)\}$ is an edge in $\chi(\sigma,\delta)$.

For each process $p_i$ that is terminated in $C$, the vertex in $\sigma$ with identifier $i$ is also in $\chi(\sigma,\delta)$
and this vertex is connected to every other vertex in $\chi(\sigma,\delta)$ with a different identifier.
Thus the configuration $C\beta$ is represented by an $n$-vertex clique in $\chi(\sigma,\delta)$.
\end{proof}

\begin{lemma}
Let $\sigma$ be an $n$-vertex clique
that represents a configuration $C$  in which all active processes are poised to \id{update} the same snapshot object.
Every $n$-vertex clique in $\chi(\sigma,\delta)$ represents a configuration reachable from
$C$ by a 1-round schedule.
\label{lem:clique-config}
\end{lemma}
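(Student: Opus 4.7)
The plan is to explicitly exhibit a 1-round schedule from $C$ that produces the configuration represented by a given $n$-vertex clique $\sigma^*$ of $\chi(\sigma,\delta)$.

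First I would unpack the structure of $\sigma^*$. By the definition of $\chi(\sigma,\delta)$, each of the $n$ vertices in $\sigma^*$ is either a terminated vertex of $\sigma$ or a pair $(i,\tau_i)$ where $\tau_i$ is a subset of the active vertices of $\sigma$ with $i\in \mathit{id}(\tau_i)$. Since $\sigma^*$ has $n$ distinct identifiers, the terminated vertices carry the ids of terminated processes in $C$, and for each active process $p_i$ in $C$ there is exactly one vertex $(i,\tau_i)$ in $\sigma^*$. The edge condition in $\chi(\sigma,\delta)$ forces the family $\{\tau_i : p_i \text{ active in } C\}$ to be totally ordered by inclusion. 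Let $\tau^{(1)} \subsetneq \tau^{(2)} \subsetneq \cdots \subsetneq \tau^{(m)}$ be the distinct values and let $A_k = \{ i : \tau_i = \tau^{(k)}\}$. Note that $i \in \mathit{id}(\tau_i) = \mathit{id}(\tau^{(k)})$ for each $i \in A_k$, and if $k \leq \ell$ then $\mathit{id}(\tau^{(k)}) \subseteq \mathit{id}(\tau^{(\ell)})$; in particular every active process's id lies in $\mathit{id}(\tau^{(m)})$.

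Next I would construct a 1-round schedule $\beta$ from $C$ in blocks indexed by $k = 1,\dots,m$. Block $k$ first has every process whose id lies in $\mathit{id}(\tau^{(k)}) \setminus \mathit{id}(\tau^{(k-1)})$ (with $\tau^{(0)} = \emptyset$) perform its \id{update} to $S_{t+1}$, in any order, and then has every process in $A_k$ perform its \id{scan} of $S_{t+1}$. Since every active process $p_i$ has $i \in \mathit{id}(\tau^{(m)})$, and belongs to exactly one $A_k$, each active process appears exactly once in an \id{update} sub-block and exactly once in a \id{scan} sub-block; hence $\beta$ is indeed a 1-round schedule from $C$.

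Finally I would verify that $C\beta$ is represented by $\sigma^*$. Immediately before a process $p_i \in A_k$ performs its \id{scan}, the processes that have \id{update}d $S_{t+1}$ are exactly those with ids in $\mathit{id}(\tau^{(k)})$, and each has written the state recorded by the corresponding vertex of $\tau^{(k)}$. Hence the \id{scan} returns precisely the vector encoded by $\tau^{(k)}$, so the resulting state of $p_i$ is $(i,\tau^{(k)}) = (i,\tau_i)$. Terminated processes have the same state in $C$ and $C\beta$, so the states in $C\beta$ are exactly the vertices of $\sigma^*$, and by \observationref{obs:configstates} this determines the configuration. The only real content is the bookkeeping in the previous paragraph; the main thing to be careful about is the direction of the chain condition, which is exactly what makes the interleaved blocks of \id{update}s and \id{scan}s yield the prescribed scan results.
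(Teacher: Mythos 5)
Your proposal is correct and follows essentially the same route as the paper's proof: both extract the chain $\tau^{(1)}\subsetneq\cdots\subsetneq\tau^{(m)}$ from the edge condition of $\chi(\sigma,\delta)$ and build the 1-round schedule by ordering the \id{update}s according to that chain and placing each \id{scan} immediately after the \id{update}s that realize its $\tau_i$. Your block-by-block presentation is just a repackaging of the paper's ``insert the second occurrence of $p_i$ after the right \id{update}'' construction, and your verification (telescoping of the update sub-blocks, unchanged terminated processes, \observationref{obs:configstates}) matches the paper's.
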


\begin{proof}
Let $S_{t+1}$ be the snapshot object the active processes in $C$ are poised to \id{update}.
Let $\sigma'$ be an $n$-vertex clique in $\chi(\sigma,\delta)$.
Since $\mathit{id}(v) \neq \mathit{id}(v')$ for all edges $\{v,v'\}$ in $\chi(\sigma,\delta)$,  
there is vertex $v_i \in \sigma'$ with $\mathit{id}(v_i) = i$ for each $i \in \{1,\ldots, n\}$.
Let $I$ be the set of indices of active processes in configuration $C$. 
The definition of $\chi(\sigma,\delta)$ implies that $v_i \in \sigma$, for each $i \not\in I$, and
$v_i = (i,\tau_i)$, for each $i  \in I$, where $\tau_i$ is a subset of the active vertices in $\sigma$ and $i \in \mathit{id}(\tau_i)$.
Furthermore, if $i,j \in I$ and $i \neq j$, then either $\tau_i \subseteq \tau_j$ or $\tau_j \subseteq \tau_i$,
since $\{v_i,v_j\}$ is an edge of $\chi(\sigma,\delta)$.
Since $\tau_i \subseteq \tau_j$ implies $\mathit{id}(\tau_i) \subseteq \mathit{id}(\tau_j)$, the sets
$\mathit{id}(\tau_i)$  for $i \in I$ can be ordered by inclusion.
Let $\prec$ be a total order on $I$ such that if $i$ occurs in more of these sets than $j$ does, then $i \prec j$.
In other words, for all $i \in I$, the elements of $\mathit{id}(\tau_i)$ occur before the elements of $I-\mathit{id}(\tau_i)$.

Let $\alpha'$ be a sequence containing one copy of each process whose identifier is in $I$ such that $p_i$ occurs before $p_j$ if and only if $i \prec j$.
If the schedule $\alpha'$ is performed starting from $C$, then, for each $i \in I$, $\tau_i$ represents the contents of $S_{t+1}$  at some point during the execution.
Note that, since $i \in \mathit{id}(\tau)$, this point occurs after $p_i$ performs its \id{update}.
For each $i \in I$, insert a second copy of $p_i$ after the process in $\alpha'$ whose \id{update}
causes the contents of $S_{t+1}$ to be represented by $\tau_i$ and before the next process in $\alpha'$.
Let $\alpha$ be the resulting sequence.
Then $\alpha$ is a 1-round schedule such that $v_i = (i,\tau_i)$ represents the state of $p_i$ in  configuration $C\alpha$, for each $i \in I$.
For each $i \not\in I$, $v_i \in \sigma$,
so it represents the state of the terminated process $p_i$ in $C$ and, thus, the state of $p_i$ in $C\alpha$, too.
Hence $\sigma'$ represents the configuration $C\alpha$.
\end{proof}

For any two graphs $G=(V,E)$ and $G' = (V', E')$,
the \emph{union} of  $G$ and $G'$ is the graph $G \cup G' = (V\cup V', E \cup E')$.
Then $\gt{t}$ is a union of $n$-vertex cliques.
Consider any subgraph $\at$ of $\gt{t}$ that is the union of $n$-vertex cliques.
We define  $\chi(\at,\delta)$ to be the union of the graphs  $\chi(\sigma,\delta)$ for all $n$-vertex cliques $\sigma$ in $\at$.
In particular,  $\chi(\gt{t},\delta)$ is the union of $\chi(\sigma,\delta)$ for all $n$-vertex cliques $\sigma$ in $\gt{t}$.
By \lemmaref{lem:config-clique} and  \lemmaref{lem:clique-config}, it follows that $\gt{t+1} = \chi(\gt{t},\delta)$.
This method for obtaining 
$\gt{t+1}$ from $\gt{t}$ is closely related to
the non-uniform chromatic subdivision of
a simplicial complex representing a protocol in the NIIS model, introduced by Hoest and Shavit~\cite{HS2006}.
Consequently, we will call the graph $\chi(\sigma,\delta)$ the \emph{subdivision} of the clique $\sigma$
and the graph $\gt{t+1}$ the \emph{subdivision} of the graph $\gt{t}$.
However, Herlihy and Shavit~\cite[page 884]{HS99} mention that $\chi(\sigma,\delta)$ is not necessarily a topological subdivision of $\sigma$.

By definition, a clique is connected. We show that subdivision of a clique in $\gt{t}$ is still connected.

\begin{lemma}
\label{lem:simp-conn}
The  subdivision $\chi(\sigma,\delta)$ of every $n$-vertex clique $\sigma$ in $\gt{t}$ is connected.
\end{lemma}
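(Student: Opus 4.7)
The plan is to exhibit one distinguished $n$-vertex clique $K \subseteq \chi(\sigma,\delta)$ and show that every other vertex of $\chi(\sigma,\delta)$ has an edge to some vertex of $K$; this clearly implies that $\chi(\sigma,\delta)$ is connected.

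Let $A$ denote the set of active vertices of $\sigma$. If $A = \emptyset$, then $\chi(\sigma,\delta) = \sigma$ is itself an $n$-vertex clique and there is nothing to prove, so I will assume $A \neq \emptyset$. I take $K$ to consist of all terminated vertices of $\sigma$ together with $(i, A)$ for every $i \in \mathit{id}(A)$. Informally, $K$ is the clique corresponding (via \lemmaref{lem:config-clique}) to the 1-round schedule from $C$ in which every active process performs its $\id{update}$ to $S_{t+1}$ before any active process performs its $\id{scan}$, so each scanner sees exactly $A$. Alternatively, one can verify directly from the edge rules that $K$ has $n$ vertices with distinct identifiers and is pairwise connected: any two of the new vertices $(i, A), (j, A)$ are joined because $A \subseteq A$, and each terminated vertex is joined to every other vertex of $K$ with a different identifier by the first bullet of the edge rule.

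For the second step, fix a vertex $v \in \chi(\sigma,\delta) \setminus K$. Every terminated vertex of $\sigma$ already belongs to $K$, so $v$ must have the form $(i, \tau)$ with $\tau$ a proper subset of $A$ and $i \in \mathit{id}(\tau)$. Note that $|A| \geq 2$: otherwise $A = \{w\}$ with $\mathit{id}(w) = i$, which would force $\tau = A$ and contradict $v \notin K$. Pick any $j \in \mathit{id}(A) \setminus \{i\}$; then $(j, A) \in K$, and since $\tau \subseteq A$ and $i \neq j$, the pair $\{v, (j, A)\}$ is an edge by the second bullet of the edge rule for $\chi(\sigma,\delta)$. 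Hence every vertex of $\chi(\sigma,\delta)$ lies in $K$ or is adjacent to a vertex of $K$, proving connectivity. I do not expect a real obstacle here; the only thing to get right is the boundary cases $|A| \in \{0,1\}$, which is why I single them out above.
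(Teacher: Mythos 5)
Your proof is correct and follows essentially the same approach as the paper: in the all-active case your dominating clique $K$ is exactly the paper's central clique $\{(i,\sigma)\ |\ i \in \{1,\ldots,n\}\}$, and every other vertex is shown adjacent to it via the containment rule. The paper handles the case with a terminated vertex by the even simpler observation that such a vertex is adjacent to every other vertex of $\chi(\sigma,\delta)$, whereas you fold both cases into the single clique $K$; both treatments are fine, and your boundary-case checks ($|A| \in \{0,1\}$) are accurate.
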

\begin{proof}
First suppose that $\sigma$ contains some terminated vertex $u$. By definition, $u \in \chi(\sigma,\delta)$
and no other vertex of $\chi(\sigma,\delta)$ has the same $id$.
Consider any other vertex $v \in \chi(\sigma,\delta)$. By definition, $\{u,v\}$ is an edge in $\chi(\sigma,\delta)$. 
Thus every vertex in $\chi(\sigma,\delta)$ is connected to $u$, so the graph $\chi(\sigma,\delta)$ is connected.

Now suppose that $\sigma$ contains only active vertices. Then, for each $i \in \{1,\ldots,n\} = \mathit{id}({\sigma})$,
$(i, \sigma) \in \chi(\sigma,\delta)$. Furthermore, if $i,j \in \{1,\ldots,n\}$ and $i \neq j$, then
$\{(i,\sigma),(j,\sigma)\}$ is an edge in  $\chi(\sigma,\delta)$, so the
vertices $(i,\sigma)$ for $i \in \{1,\ldots,n\}$ form an $n$-vertex clique.
Now consider any vertex $(i, \tau) \in \chi(\sigma,\delta)$, where $\tau \subsetneq \sigma$.
Then, by definition, $\{(i,\tau), (j,\sigma)\}$ is an edge in $\chi(\sigma,\delta)$
for any $j \in \{1,\ldots,n\} -\{i\}$. Since $n \geq 2$, such a $j$ exists.
Thus every vertex in $\chi(\sigma,\delta)$ is connected to this clique, so the graph $\chi(\sigma,\delta)$ is connected.
\end{proof}

More generally, connectivity is preserved by subdivision.

\begin{lemma}
\label{lem:nucs-connected}
Let $\at$ be a connected subgraph of $\gt{t}$ that is the union of $n$-vertex cliques.
Then $\chi(\at,\delta)$ is a connected subgraph of $\gt{t+1}$.
\end{lemma}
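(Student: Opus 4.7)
The plan is to chain together the connected pieces $\chi(\sigma,\delta)$ given by \lemmaref{lem:simp-conn}, using the fact that consecutive $n$-vertex cliques in the connected graph $\at$ share vertices, and then observing that shared vertices between cliques force their subdivisions to share vertices as well.

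First, I would establish a ``clique chaining'' lemma: because $\at$ is connected and is a union of $n$-vertex cliques, any two $n$-vertex cliques $\sigma, \sigma'$ in $\at$ are linked by a sequence of $n$-vertex cliques $\sigma = \sigma_0, \sigma_1, \ldots, \sigma_k = \sigma'$ in $\at$ such that consecutive cliques share a vertex. The reason is that a path in $\at$ between a vertex of $\sigma$ and a vertex of $\sigma'$ consists of edges, each of which is contained in some $n$-vertex clique of $\at$, and consecutive edges on the path share a common endpoint.

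Second, I would show that if two $n$-vertex cliques $\sigma, \sigma'$ in $\gt{t}$ share a vertex $v$, then $\chi(\sigma,\delta)$ and $\chi(\sigma',\delta)$ share a vertex. Note that $v$ has a fixed value of $\delta(v)$, so it is terminated in both cliques or active in both. If $\delta(v) \neq \bot$, then by the definition of $\chi$, the terminated vertex $v$ itself lies in both $\chi(\sigma,\delta)$ and $\chi(\sigma',\delta)$. If $\delta(v) = \bot$, let $i = \mathit{id}(v)$; then $\{v\}$ is a subset of the active vertices of $\sigma$ and of $\sigma'$ with $i \in \mathit{id}(\{v\})$, so the vertex $(i, \{v\})$ lies in both $\chi(\sigma,\delta)$ and $\chi(\sigma',\delta)$.

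Finally, I would put the pieces together. Given any two vertices $u, u'$ of $\chi(\at,\delta)$, pick $n$-vertex cliques $\sigma, \sigma'$ in $\at$ so that $u \in \chi(\sigma,\delta)$ and $u' \in \chi(\sigma',\delta)$, and a clique chain $\sigma_0, \ldots, \sigma_k$ as above. By \lemmaref{lem:simp-conn} each $\chi(\sigma_j,\delta)$ is connected, and by the second step consecutive subdivisions $\chi(\sigma_j,\delta)$ and $\chi(\sigma_{j+1},\delta)$ share a vertex, so $\bigcup_{j=0}^{k}\chi(\sigma_j,\delta)$ is connected and contains both $u$ and $u'$. This union is a subgraph of $\chi(\at,\delta)$, giving a path from $u$ to $u'$ in $\chi(\at,\delta)$. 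The inclusion $\chi(\at,\delta) \subseteq \gt{t+1}$ is immediate from $\at \subseteq \gt{t}$ and $\gt{t+1} = \chi(\gt{t},\delta)$.

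No step looks like a real obstacle; the only thing that needs care is the shared-vertex argument for the subdivisions, and it is handled cleanly by splitting on whether $\delta(v) = \bot$.
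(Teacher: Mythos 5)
Your proposal is correct and follows essentially the same route as the paper: the paper also lifts each vertex $w$ of a path in $\at$ to $w$ itself (if terminated) or $(\mathit{id}(w),\{w\})$ (if active), notes that consecutive lifted vertices lie in the subdivision of a common clique containing the corresponding edge, and applies \lemmaref{lem:simp-conn} to connect them. Your reorganization into a clique chain with shared vertices is just a repackaging of the same argument, and your shared-vertex case split is exactly the paper's construction.
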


\begin{proof}
Consider any two vertices $u',v' \in \chi(\at,\delta)$. Then $u' \in \chi(\sigma,\delta)$ and $v' \in \chi(\tau,\delta)$
for some $n$-vertex cliques $\sigma,\tau \subseteq \at$.
Since $\at \subseteq \gt{t}$ is connected, there is a path $w_0,\ldots ,w_\ell$ in $\at$
of length $\ell \geq 0$ in $\at$ such that $w_0 \in \sigma$ and $w_\ell \in \tau$.
For $0 \leq i \leq \ell$, let $w'_i = w_i$ if $\delta(w_i) \neq \bot$,  and let $w'_i = (\mathit{id}(w_i),\{w_i\})$
 if $\delta(w_i) = \bot$.

Consider any $i$ such that $1 \leq i \leq \ell$. Since $\{w_{i-1},w_i\}$ is an edge of $\at$, there exists an $n$-vertex clique $\sigma_i \subseteq \at$ that contains this edge. Since $w_{i-1},w_i \in \sigma_i$, it follows by construction that $w'_{i-1},w'_i \in \chi(\sigma_i,\delta)$.
By \lemmaref{lem:simp-conn}, the subdivision
$\chi(\sigma_i,\delta)$ of $\sigma_i$ is connected. Thus, there exists a path between $w'_{i-1}$ and $w'_i $ in 
$\chi(\sigma_i,\delta) \subseteq \chi(\at,\delta)$.
By \lemmaref{lem:simp-conn},
$\chi(\sigma,\delta)$ and $\chi(\tau,\delta)$
are connected, so there exist a path between
$u'$ and $w'_0$ in $\chi(\sigma,\delta) \subseteq \chi(\mathbb{A},\delta)$ and 
a path between
$w'_\ell$ and $v'$ in $\chi(\tau,\delta) \subseteq \chi(\mathbb{A},\delta)$.
Hence, there is a path between $u'$ and $v'$
in $\chi(\mathbb{A},\delta)$.

Since $u'$ and $v'$ are arbitrary, $\chi(\at,\delta)$ is connected.
\end{proof}

The next result follows by induction, because $\gt{t+1} = \chi(\gt{t},\delta)$.

\begin{corollary}
    \label{cor:nucs-connected}
    If $\gt{0}$ is connected, then, for all $t \geq 1$, $\gt{t}$ is connected.
\end{corollary}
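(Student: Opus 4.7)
The plan is to prove the corollary by a straightforward induction on $t$, leveraging \lemmaref{lem:nucs-connected} as the inductive engine and the identity $\gt{t+1} = \chi(\gt{t},\delta)$ established right before the corollary.

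For the base case, I would handle $t=1$ directly: by hypothesis $\gt{0}$ is connected, and $\gt{0}$ is a union of $n$-vertex cliques (the cliques corresponding to initial configurations, as noted right after the definition of the graphical representation). Applying \lemmaref{lem:nucs-connected} with $\at = \gt{0}$ gives that $\chi(\gt{0},\delta) = \gt{1}$ is connected.

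For the inductive step, assume $\gt{t}$ is connected for some $t \geq 1$. By \lemmaref{lem:config-clique} and \lemmaref{lem:clique-config}, $\gt{t}$ is a union of $n$-vertex cliques (each clique representing a reachable configuration after a $t$-round schedule). Thus \lemmaref{lem:nucs-connected} applies with $\at = \gt{t}$, yielding that $\chi(\gt{t},\delta) = \gt{t+1}$ is connected.

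There is essentially no obstacle here: the structural lemma doing the real work (\lemmaref{lem:nucs-connected}) has already been proved, and the only thing to verify is that $\gt{t}$ itself satisfies the hypotheses of that lemma, namely being a connected union of $n$-vertex cliques. The former is the inductive hypothesis and the latter is immediate from the construction of $\gt{t}$ via iterated subdivision. So the proof amounts to little more than citing the previous lemma inside an induction.
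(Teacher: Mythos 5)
Your proof is correct and matches the paper's intent exactly: the paper dispatches this corollary with the single remark that it "follows by induction, because $\gt{t+1} = \chi(\gt{t},\delta)$," which is precisely the induction you spell out using \lemmaref{lem:nucs-connected}. Your extra care in checking that $\gt{t}$ is a union of $n$-vertex cliques is a reasonable elaboration of the same argument, not a different route.
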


If $\Tt \subseteq \vt{t}$ is a set of terminated vertices in $\gt{t}$,
we define $\chi(\Tt,\delta) = \Tt \subseteq \vt{t+1}$.
Let $\at$ and $\bt$ each be either a nonempty set of terminated vertices in $\gt{t}$ or the nonempty union of $n$-vertex cliques in $\gt{t}$.
Then the \emph{distance between} $\at$ and $\bt$ (in $\gt{t}$) is the minimum of the length of the paths
between $u \in \at$ and $v \in \bt$.
If $\gt{0}$, is connected, then \corollaryref{cor:nucs-connected} implies that 
at least one such path exists.
Now we show that if the distance between $\at$ and $\bt$ is 0 (i.e. they intersect),
  then the same is true for $\chi(\mathbb{A},\delta)$ and $\chi(\mathbb{B},\delta)$ 
  and, if the distance between $\at$ and $\bt$ is greater than  0 (i.e., they are disjoint), then so are
$\chi(\mathbb{A},\delta)$ and $\chi(\mathbb{B},\delta)$.

\begin{lemma}
    \label{lem:nucs-disjoint}
Suppose $\at$ and $\bt$ are each either a set of terminated vertices in $\gt{t}$ or the union of $n$-vertex cliques in $\gt{t}$.
Then  $\at$ and $\bt$ are disjoint if and only if
$\chi(\at,\delta)$ and $\chi(\bt,\delta)$ are disjoint. 
\end{lemma}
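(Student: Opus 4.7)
The plan is to prove both directions by a short case analysis that leverages the explicit construction of $\chi$, in particular the fact that every vertex of $\chi(\sigma,\delta)$ is either a terminated vertex already present in $\sigma$ or a ``fresh'' vertex of the form $(i,\tau)$ with $i \in \mathit{id}(\tau)$, and these two kinds of vertices are disjoint (the second kind are not in $\vt{t}$).

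For the ``only if'' direction I would suppose $\at$ and $\bt$ are disjoint and assume, for contradiction, that some $v' \in \chi(\at,\delta) \cap \chi(\bt,\delta)$ exists. I would split on whether $v'$ lies in $\vt{t}$ or not. If $v' \in \vt{t}$, then $v'$ must be a terminated vertex: either $\at$ itself is a set of terminated vertices, in which case $\chi(\at,\delta)=\at$ and so $v' \in \at$, or $\at$ is a union of $n$-vertex cliques, in which case $v'$ must be a terminated vertex belonging to some clique $\sigma \subseteq \at$ (since fresh vertices $(i,\tau)$ are not in $\vt{t}$), and hence $v' \in \at$. The same reasoning applies to $\bt$, so $v' \in \at \cap \bt$, a contradiction. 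If $v' \notin \vt{t}$, then $v'=(i,\tau)$ for some subset $\tau$ of active vertices; this case cannot arise from a set of terminated vertices, so both $\at$ and $\bt$ are unions of cliques, and there exist $n$-vertex cliques $\sigma \subseteq \at$ and $\sigma' \subseteq \bt$ with $v' \in \chi(\sigma,\delta) \cap \chi(\sigma',\delta)$. By the definition of $\chi$, $\tau$ is a subset of the active vertices of both $\sigma$ and $\sigma'$, and $\tau$ is nonempty since $i \in \mathit{id}(\tau)$. Any $w \in \tau$ therefore lies in $\sigma \cap \sigma' \subseteq \at \cap \bt$, again contradicting disjointness.

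For the ``if'' direction I would assume some $v \in \at \cap \bt$ and exhibit a vertex in $\chi(\at,\delta) \cap \chi(\bt,\delta)$. If $v$ is terminated, then $v \in \chi(\at,\delta)$ and $v \in \chi(\bt,\delta)$ directly from the construction (in the terminated-set case $\chi(\cdot,\delta)$ equals the set itself; in the union-of-cliques case $v$ is a terminated vertex in a clique containing it, hence a vertex of $\chi$ of that clique). If $v$ is active, then neither $\at$ nor $\bt$ can be a set of terminated vertices, so there are cliques $\sigma \subseteq \at$ and $\sigma' \subseteq \bt$ with $v \in \sigma \cap \sigma'$; then the fresh vertex $(\mathit{id}(v),\{v\})$ lies in $\chi(\sigma,\delta) \cap \chi(\sigma',\delta) \subseteq \chi(\at,\delta) \cap \chi(\bt,\delta)$.

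The main thing to be careful about is the bookkeeping: making sure the ``set of terminated vertices'' alternative is compatible with fresh $(i,\tau)$ vertices only through the union-of-cliques branch, and that $\tau \neq \emptyset$ so we actually extract a common vertex of $\at$ and $\bt$. No genuinely hard step is anticipated; the work is entirely in unwinding the definition of $\chi(\sigma,\delta)$.
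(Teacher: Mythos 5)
Your proposal is correct and follows essentially the same argument as the paper's proof: both directions reduce to a case analysis on whether the relevant vertex is terminated (and hence carried over unchanged by $\chi$) or active (in which case the witness is the fresh vertex $(\mathit{id}(v),\{v\})$, respectively the nonempty set $\tau$ shared by cliques of $\at$ and $\bt$). The only difference is presentational — you phrase the forward direction as a contradiction and make explicit the bookkeeping that fresh vertices $(i,\tau)$ are not in $\vt{t}$, which the paper leaves implicit.
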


\begin{proof}
When $\at$ or $\bt$ is a set of terminated vertices in $\gt{t}$, any vertex $u \in  \at \cap \bt$ is terminated, so
$u \in \at \cap \bt$ if and only if $u \in \chi(\at,\delta) \cap \chi(\bt,\delta)$.
So, assume that $\at$ and $\bt$ are the unions of $n$-vertex cliques. 

Suppose that $\at$ and $\bt$ share a common vertex $u$.
Let $\sigma$ be an $n$-vertex clique in $\at$ that contains $u$ and 
let $\rho$ be an $n$-vertex clique in $\bt$  that contains $u$.
If $u$ is a terminated vertex in $\gt{t}$, then, by definition, 
$u$ is a vertex in both $\chi(\sigma,\delta)$ and $\chi(\rho,\delta)$.
Otherwise, $u$ is active in $\gt{t}$. In this case, let $\tau = \{u\}$ and $i = \mathit{id}(u)$.
Then $i \in {\mathit id}(\tau)$,  $\tau \subseteq \sigma$, and $\tau \subseteq \rho$.
By definition $(i,\tau)$ is a vertex in both  $\chi(\sigma,\delta)$ and $\chi(\rho,\delta)$.
Since $\chi(\sigma,\delta)$ is a subgraph of $\chi(\mathbb{A},\delta)$
and $\chi(\rho,\delta)$ is a subgraph of $\chi(\mathbb{B},\delta)$, in both cases it follows that
$\chi(\mathbb{A},\delta)$ and $\chi(\mathbb{B},\delta)$ are not disjoint.

Conversely, suppose that $\chi(\mathbb{A},\delta)$ and $\chi(\mathbb{B},\delta)$ share a common vertex $v$.
By definition, there exists an $n$-vertex clique 
$\sigma \subseteq \at$,  such that $v \in \chi(\sigma,\delta)$.
Similarly, there exists an $n$-vertex clique 
$\rho \subseteq \bt$ such that $v \in \chi(\rho,\delta)$.
If $v$ is a terminated vertex in $\gt{t}$, then
$v$ is a vertex in both $\sigma$ and $\rho$.
Otherwise, $v= (i,\tau)$ where $i \in {\mathit id}(\tau)$, 
$\tau \subseteq \sigma$, and $\tau \subseteq \rho$. Hence, in both cases,  $\at$ and $\bt$ are not disjoint.
\end{proof}

\lemmaref{lem:nucs-disjoint} can be generalized to show that subdividing does not decrease distances.

\begin{lemma}
	\label{lem:dist1}
Suppose $\at,\bt\subseteq \gt{t}$ are nonempty and each is
either a set of terminated vertices or the union of $n$-vertex cliques.
Then the distance between $\chi(\at,\delta)$ and $\chi(\bt,\delta)$ in $\gt{t+1}$
is at least as large as the distance between $\at$ and $\bt$ in $\gt{t}$.
\end{lemma}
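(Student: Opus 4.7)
The plan is to produce, for every path in $\gt{t+1}$ from $\chi(\at,\delta)$ to $\chi(\bt,\delta)$, a walk in $\gt{t}$ from $\at$ to $\bt$ of no greater length. This is achieved by a natural ``projection'' map $\pi: \vt{t+1} \to \vt{t}$ defined as follows: if $v \in \vt{t+1}$ is a terminated vertex (and hence already lies in $\vt{t}$), set $\pi(v) = v$; if $v = (i,\tau) \in \vt{t+1}$ is active, where $\tau$ is a set of active vertices of some $n$-vertex clique in $\gt{t}$ with $i \in \mathit{id}(\tau)$, set $\pi(v)$ to be the unique vertex of $\tau$ with identifier $i$.

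The first key claim is that $\pi$ maps $\chi(\at,\delta)$ into $\at$ (and similarly for $\bt$). If $\at$ is a set of terminated vertices, then $\chi(\at,\delta) = \at$ and $\pi$ is the identity on these vertices. If $\at$ is a union of $n$-vertex cliques, then any $u' \in \chi(\at,\delta)$ belongs to $\chi(\sigma,\delta)$ for some $n$-vertex clique $\sigma \subseteq \at$, and from the construction of $\chi(\sigma,\delta)$, $\pi(u')$ lies in $\sigma \subseteq \at$ in both the terminated and the active case.

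The second key claim is that $\pi$ does not increase distance by more than a factor of one per edge: if $\{u',v'\}$ is an edge of $\gt{t+1}$, then either $\pi(u') = \pi(v')$ or $\{\pi(u'), \pi(v')\}$ is an edge of $\gt{t}$. The proof is a short case analysis. Since $\{u',v'\}$ is an edge, it lies in some $\chi(\sigma,\delta)$ for an $n$-vertex clique $\sigma$ in $\gt{t}$; checking the three cases of the edge definition (both terminated in $\sigma$; one terminated in $\sigma$ and one of the form $(j,\tau')$ with $\tau' \subseteq \sigma$; both of the form $(i,\tau), (j,\tau')$ with $\tau,\tau' \subseteq \sigma$) shows that $\pi(u')$ and $\pi(v')$ both lie in $\sigma$, so they are either equal or adjacent in $\gt{t}$.

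Combining these two claims finishes the proof: given a shortest path $u'_0, u'_1, \ldots, u'_d$ in $\gt{t+1}$ with $u'_0 \in \chi(\at,\delta)$ and $u'_d \in \chi(\bt,\delta)$, the sequence $\pi(u'_0), \pi(u'_1), \ldots, \pi(u'_d)$ is a walk in $\gt{t}$ (after collapsing consecutive equal vertices, a path) of length at most $d$ starting in $\at$ and ending in $\bt$. I do not expect a real obstacle here; the only point requiring minor care is verifying that $\pi$ is well defined on terminated vertices (which involves noting that such vertices appear in $\vt{t+1}$ via the definition of $\chi(\sigma,\delta)$) and writing out the three-case edge analysis cleanly.
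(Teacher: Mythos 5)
Your proof is correct, but it takes a genuinely different route from the paper's. You define a projection $\pi$ from $\mathbb{G}_{t+1}$ back to $\mathbb{G}_{t}$ (the identity on terminated vertices, and $(i,\tau)\mapsto{}$the vertex of $\tau$ with identifier $i$ otherwise), check that $\pi$ carries $\chi(\mathbb{A},\delta)$ into $\mathbb{A}$ and $\chi(\mathbb{B},\delta)$ into $\mathbb{B}$, and check that $\pi$ sends each edge of $\mathbb{G}_{t+1}$ to an edge of $\mathbb{G}_{t}$: both endpoints of any edge of $\chi(\sigma,\delta)$ project into $\sigma$ and retain their (distinct) identifiers, so their images are adjacent in the clique $\sigma$. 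Projecting a shortest path then yields a walk in $\mathbb{G}_{t}$ from $\mathbb{A}$ to $\mathbb{B}$ of no greater length, which is exactly the claimed inequality. The paper instead argues by induction on the distance $d$ between $\mathbb{A}$ and $\mathbb{B}$: the base case $d=1$ invokes \lemmaref{lem:nucs-disjoint} (disjointness is preserved by subdivision), and the inductive step forms $\mathbb{A}'$, the union of all $n$-vertex cliques meeting $\mathbb{A}$, shows the distance from $\mathbb{A}'$ to $\mathbb{B}$ is exactly $d-1$, shows the second vertex of any shortest path from $\chi(\mathbb{A},\delta)$ to $\chi(\mathbb{B},\delta)$ lies in $\chi(\mathbb{A}',\delta)$, and applies the induction hypothesis. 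Your argument is non-inductive, does not need \lemmaref{lem:nucs-disjoint}, and makes the underlying reason transparent: $\pi$ is the graph analogue of the standard simplicial projection from a chromatic subdivision onto the original complex, and a graph homomorphism between the relevant sets can never increase distance. What the paper's formulation buys is uniformity with the surrounding development, which repeatedly reasons about one-clique-layer expansions and disjointness of subdivided subgraphs (as in \lemmaref{lem:dist2}), so the inductive proof stays inside the toolkit already being built. The two points of care you flag---well-definedness of $\pi$ (an active vertex of $\mathbb{G}_{t+1}$ literally is a pair $(i,\tau)$ with $i\in\mathit{id}(\tau)$ and $\tau$ has distinct identifiers, so the vertex of $\tau$ with identifier $i$ is unique) and the fact that every edge of $\mathbb{G}_{t+1}$ lies in some $\chi(\sigma,\delta)$ because $\mathbb{G}_{t+1}$ is defined as the union of these graphs---are exactly the right ones, and both go through.
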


\begin{proof}
Let $d$ be the distance between $\at$ and $\bt$ in $\gt{t}$. The proof is by induction on $d$.
If $d =0$, then the claim is true, since distances are always non-negative.
If $d =1$, then $\at$ and $\bt$ are disjoint. By \lemmaref{lem:nucs-disjoint}, $\chi(\at,\delta)$ and $\chi(\bt,\delta)$
are also disjoint, so the distance between them is at least 1.

Now suppose that $d \geq 2$ and the claim is true for all nonempty 
$\at^*,\bt^* \subseteq \gt{t}$
such that the distance between $\at^*$ and $\bt^*$ is $d-1$
and each is either a set of terminated vertices or the union of $n$-vertex cliques. 
Consider any vertex $v \in \vt{t}$ at distance 1 from $\at$. Then there exists
a vertex $u\in \at$ such that $\{u,v\} \in \et{t}$.
Since $\gt{t}$ is a union of $n$-vertex cliques, there exist $n-2$ other vertices that form a clique with $\{u,v\}$.
Since these vertices are adjacent to $u$, they are all at distance at most 1 from $\at$.
Let $\at'$ denote the union of all $n$-vertex cliques in $\gt{t}$ that contain at least one vertex in $\at$.

Consider any path $u_0, u_1, \ldots, u_d$ of length $d$ between $\at$ and $\bt$ in $\gt{t}$. 
Note that $u_1 \not\in \at$, since the distance between $\at$ and $\bt$ in $\gt{t}$ is $d$.
Thus $u_1$ is a vertex at distance 1 from $\at$ and, hence, is in $\at'$. Therefore the distance between $\at'$ and $\bt$ in $\gt{t}$ is at most $d-1$.
In fact, the distance between $\at'$ and $\bt$ is exactly $d-1$.
Suppose not.
Then there exists a path $v_0, \ldots, v_\ell$ in $\gt{t}$ between $\at'$ and $\bt$ where $\ell < d-1$.
If $v_0 \in \at$, then this path is between $\at$ and $\bt$. 
If $v_0 \in \at' - \at$, then, by
definition of $\at'$, there exists a vertex $u \in \at$ such that $\{u,v_0\} \in \et{t}$.
But then $u,v_0,\ldots,\ell$ is a path between  $\at$ and $\bt$.
In both cases, this shows that the distance between $\at$ and $\bt$ is less than $d$, which 
contradicts the definition of $d$.

Consider any shortest path $w_0,w_1,\ldots,w_\ell$ between $\chi(\at,\delta)$ and $\chi(\bt,\delta)$ in $\gt{t+1}$.
Note that $w_1 \not\in \chi(\at,\delta)$, since this is a shortest path. By definition of $\gt{t+1}$, $\{w_0,w_1\}$ is an edge of
$\chi(\sigma,\delta)$ for some $n$-vertex clique $\sigma \subseteq \gt{t}$.
If $w_0$ is terminated in $\gt{t}$, then $w_0 \in \at$ and $w_0 \in \sigma$,
so, by definition, $\sigma$ is in $\at'$.
Otherwise, 
since $w_0 \in \chi(\sigma,\delta)$,
there is a process identifier $i$ and a set of vertices $\tau_i \subseteq \sigma$
such that $w_0 = (i,\tau_i)$  and $i \in {\mathit id}(\tau_i)$.
Moreover, since $w_0 \in \chi(\at,\delta)$,
there exists an $n$-vertex clique $\rho \subseteq \at$
such that $w_0 \in \chi(\rho,\delta)$, so $\tau_i \subseteq \rho$.
In this case, let  
$v_i \in \tau_i$ be such that ${\mathit id}(v_i) = i$.
Since $\tau_i \subseteq \rho \subseteq \at$, we have $v_i \in \at$ and, since $\tau_i \subseteq \sigma$, we have $\sigma \in \at'$.
Hence, in both cases, $w_1 \in \chi(\at',\delta)$.
By the induction hypothesis, the distance between $\chi(\at',\delta)$ and $\chi(\bt,\delta)$ in $\gt{t+1}$ is at least $d-1$.
Thus, $\ell \geq d$.
\end{proof}

If $\at$ and $\bt$ are disjoint unions of $n$-vertex cliques
and $\ct$ is a union of $n$-vertex cliques all of whose vertices are active,     
then there is no edge in the subdivision of $\ct$ that connects the subdivisions of $\at$ and $\bt$.

\begin{lemma}
    \label{lem:nucs-distincrease}
Suppose $\at$, $\bt$, and $\ct$ are nonempty unions of $n$-vertex cliques in $\gt{t}$,
$\at \cap \ct$ is nonempty,
$\bt \cap \ct$ is nonempty,
$\at$ and $\bt$ are disjoint,
and all vertices in $\ct$ are active.
Then the distance between $\chi(\at,\delta)\cap \chi(\ct,\delta)$ and $\chi(\bt,\delta)\cap \chi(\ct,\delta)$ in $\gt{t+1}$ is at least 2.
\end{lemma}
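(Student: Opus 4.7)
The plan is to split the conclusion into two pieces, distance $\geq 1$ (the two sets are disjoint) and distance $\geq 2$ (no edge of $\gt{t+1}$ joins them), and to derive both from a single structural observation about $\chi(\ct,\delta)$. Namely, since every vertex of $\ct$ is active, no vertex of any clique $\sigma\subseteq\ct$ is terminated, so the first bullet in the definition of $\chi(\sigma,\delta)$ is vacuous, and every vertex of $\chi(\ct,\delta)$ has the form $(i,\tau)$ where $\tau$ is a nonempty set of active vertices contained in some $n$-vertex clique $\sigma\subseteq\ct$; in particular $\tau\subseteq\ct$. If such a vertex additionally lies in $\chi(\at,\delta)$, then by definition there is an $n$-vertex clique $\rho\subseteq\at$ with $(i,\tau)\in\chi(\rho,\delta)$, forcing $\tau\subseteq\rho\subseteq\at$. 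Thus every vertex of $\chi(\at,\delta)\cap\chi(\ct,\delta)$ is of the form $(i,\tau)$ with $\emptyset\neq\tau\subseteq\at\cap\ct$, and symmetrically every vertex of $\chi(\bt,\delta)\cap\chi(\ct,\delta)$ is of the form $(j,\tau')$ with $\emptyset\neq\tau'\subseteq\bt\cap\ct$.

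For distance $\geq 1$, suppose some $(i,\tau)$ belongs to both intersections; then $\tau$ would have to lie inside both $\at\cap\ct$ and $\bt\cap\ct$, but these are disjoint because $\at\cap\bt=\emptyset$, contradicting $\tau\neq\emptyset$.

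For distance $\geq 2$, suppose $(i,\tau)\in\chi(\at,\delta)\cap\chi(\ct,\delta)$ and $(j,\tau')\in\chi(\bt,\delta)\cap\chi(\ct,\delta)$ are adjacent in $\gt{t+1}=\chi(\gt{t},\delta)$. Then this edge lives in $\chi(\sigma'',\delta)$ for some $n$-vertex clique $\sigma''\subseteq\gt{t}$, and since both endpoints are active, the edge rule of the definition of $\chi(\sigma'',\delta)$ forces $\tau\subseteq\tau'$ or $\tau'\subseteq\tau$. This is impossible, because $\tau$ and $\tau'$ are nonempty and contained, respectively, in the disjoint sets $\at\cap\ct$ and $\bt\cap\ct$.

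I do not expect a serious obstacle; the only slightly delicate point is the bookkeeping in the structural observation, where a vertex $(i,\tau)$ is a single fixed object but can witness membership in several different $\chi(\sigma,\delta)$'s, and one has to exploit each witness to conclude $\tau\subseteq\at\cap\ct$ (respectively $\tau'\subseteq\bt\cap\ct$). Once that is laid out, the hypothesis $\at\cap\bt=\emptyset$ together with the nonemptiness of $\tau$ and $\tau'$ finishes both cases essentially for free.
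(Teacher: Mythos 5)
Your proposal is correct and follows essentially the same route as the paper's proof: use the activity of all vertices in $\ct$ to force every vertex of the two intersections to have the form $(i,\tau)$ with $\tau$ nonempty and contained in $\at$ (resp.\ $\bt$), then rule out an edge via the containment condition $\tau\subseteq\tau'$ or $\tau'\subseteq\tau$, which would contradict $\at\cap\bt=\emptyset$. The only cosmetic difference is that the paper obtains the ``distance at least 1'' step by citing \lemmaref{lem:nucs-disjoint} rather than rederiving disjointness from the structural observation as you do.
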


\begin{proof}
Since $\at \cap \ct$ and $\bt \cap \ct$ are nonempty, \lemmaref{lem:nucs-disjoint} says that $\chi(\at,\delta) \cap \chi(\ct,\delta)$ and $\chi(\bt,\delta) \cap \chi(\ct,\delta)$ are nonempty.
Since $\at$ and $\bt$ are disjoint, 
it also
says that the distance between 
$\chi(\at,\delta)$ and $\chi(\bt,\delta)$ in $\gt{t+1}$ is at least 1.
Hence, the distance between $\chi(\at,\delta)\cap \chi(\ct,\delta)$ and $\chi(\bt,\delta)\cap \chi(\ct,\delta)$ in $\gt{t+1}$ is at least 1.

To obtain a contradiction, suppose that the distance between $\chi(\at,\delta)\cap \chi(\ct,\delta)$ and $\chi(\bt,\delta)\cap \chi(\ct,\delta)$ in $\gt{t+1}$ is 1.
Then there exist vertices  $u \in\chi(\at,\delta)\cap \chi(\ct,\delta)$ and $v \in\chi(\bt,\delta)\cap \chi(\ct,\delta)$
such that $\{u,v\} \in \et{t+1}$. 
Since $u,v \in \chi(\ct,\delta)$ and all vertices in $\ct$ are active, 
$u = (i, \tau_i)$ and $v = (j, \tau_j)$,
where $i \in \mathit{id}(\tau_i)$, $j \in \mathit{id}(\tau_j)$, $\tau_i \subseteq \sigma_i$, and $\tau_j \subseteq \sigma_j$ 
for  some $n$-vertex cliques $\sigma_i,\sigma_j \subseteq \ct$.
Since $u \in \chi(\at,\delta)$, it follows that $\tau_i \subseteq \rho_i$ for some $n$-vertex clique $\rho_i \subseteq \at$.
Similarly,  $\tau_j \subseteq \rho_j$ for some $n$-vertex clique $\rho_j\subseteq \bt$.
Since $\{u,v\} \in \et{t+1}$, $i \neq j$ and either
$\tau_i \subseteq \tau_j \subseteq \bt$ or $\tau_j \subseteq \tau_i \subseteq \at$.
In both cases, $\at$ and $\bt$ are not disjoint, contrary to assumption.
\end{proof}	

We now prove one of the main technical tools used in this paper. It shows that the distance between $\at$ and $\bt$ in $\gt{t}$ is less than the distance between their subdivisions in $\gt{t+1}$, provided that there is no path between $\at$ and $\bt$ 
in which every edge 
contains at least one terminated vertex.

\figureref{fig:dist} illustrates \lemmaref{lem:dist2}. In the top diagram, which is part of $\gt{t}$, the grey triangle represents $\mathbb{A}$, which consists of one 3-vertex clique and $\mathbb{B}= \{v_4\}$ is a set containing one terminated vertex.
The blue path, which has length 4, is a shortest path between $\mathbb{A}$ and $\mathbb{B}$ in $\gt{t}$.
Note that $v_2$ and $v_3$ are both active vertices.
In the bottom diagram, which is part of $\gt{t+1}$, the grey triangle represents $\chi(\mathbb{A},\delta)$
and $\chi(\mathbb{B},\delta) = \mathbb{B}$.
The blue path, which now has length 5,  is a shortest path between $\chi(\mathbb{A},\delta)$ and $\chi(\mathbb{B},\delta)$ in $\gt{t+1}$.
  
\begin{figure*} 
\centering
\includegraphics[width=0.7\textwidth, scale=0.5]{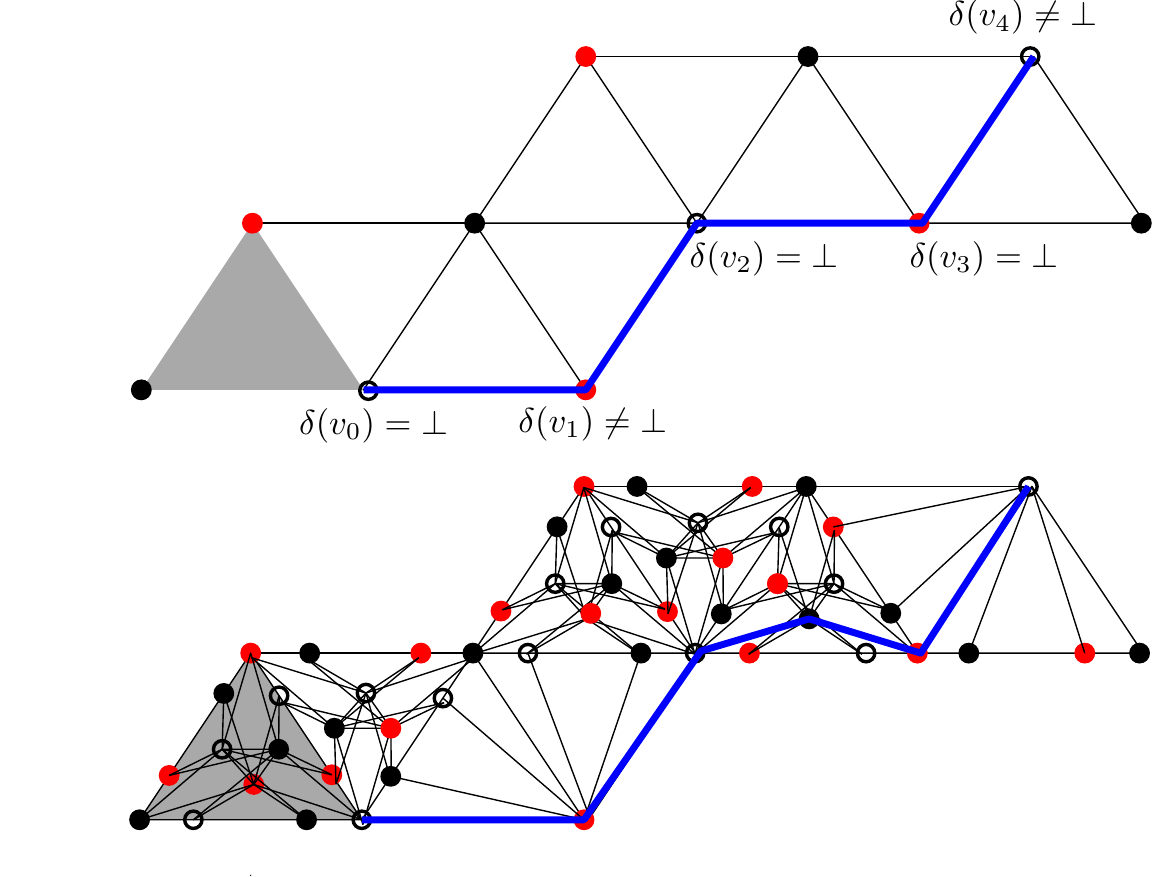}
\caption{An illustration of \lemmaref{lem:dist2}.}
\label{fig:dist}
\end{figure*}
	
\begin{lemma}
	\label{lem:dist2}
Suppose $\at,\bt\subseteq \gt{t}$ are nonempty and each is either a set of terminated vertices or the union of $n$-vertex cliques.
If every path between $\mathbb{A}$ and $\mathbb{B}$ in $\gt{t}$ contains at least one edge between active vertices, then 
the distance between $\chi(\at,\delta)$ and $\chi(\bt,\delta)$ in $\gt{t+1}$ is larger than the distance between $\at$ and $\bt$ in $\gt{t}$.
\end{lemma}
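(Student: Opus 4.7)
The plan is induction on $d := \dist(\at,\bt)$ in $\gt{t}$; the hypothesis forces $d \geq 1$, since a length-zero path has no edges. For the base case $d = 1$, the sets $\at$ and $\bt$ are disjoint and every $\at$--$\bt$ edge of $\gt{t}$ is active-active. Suppose for contradiction that $\dist(\chi(\at,\delta),\chi(\bt,\delta))=1$, witnessed by an edge $\{w,w'\}$ in some $\chi(\sigma,\delta)$. I case on the classification of $w$ and $w'$ in the definition of $\chi(\sigma,\delta)$: each is either a terminated vertex of $\sigma$ or a new vertex of the form $(j,\tau)$. If at least one of them is terminated, a short check inside $\sigma$ produces an $\at$--$\bt$ edge of $\gt{t}$ with a terminated endpoint, contradicting the hypothesis. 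If both are new, say $w=(i,\tau_0)$ and $w'=(j,\tau_1)$ with (WLOG) $\tau_0\subseteq\tau_1$, then $w\in\chi(\at,\delta)$ forces $\tau_0\subseteq\at$, while $w'\in\chi(\bt,\delta)$ forces $\tau_0\subseteq\tau_1\subseteq\bt$, contradicting $\at\cap\bt=\emptyset$.

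For the inductive step $d \geq 2$, I mimic the structure of the proof of \lemmaref{lem:dist1}. Let $\at'$ consist of $\at$ together with every $n$-vertex clique of $\gt{t}$ that contains both a vertex of $\at$ and at least one terminated vertex. Then $\at\subseteq\at'\subseteq N[\at]$, so $\dist(\at',\bt)\in\{d-1,d\}$. The hypothesis transfers to $(\at',\bt)$: any path from $v_0\in\at'\setminus\at$ to $\bt$ can be extended by prepending a short non-active-active walk $u\to t\to v_0$ (or simply $u\to v_0$ if $u=t$), where $u\in\at$ and $t$ is the terminated vertex in the clique placing $v_0$ into $\at'$; since every prepended edge touches $t$, the hypothesis on $(\at,\bt)$ forces the required active-active edge to lie in the original path. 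The inductive hypothesis then gives $\dist(\chi(\at',\delta),\chi(\bt,\delta))\geq\dist(\at',\bt)+1$ in $\gt{t+1}$. If $\dist(\at',\bt)=d$, monotonicity already yields $\dist(\chi(\at,\delta),\chi(\bt,\delta))\geq d+1$; if $\dist(\at',\bt)=d-1$, the extra-step argument from the end of the proof of \lemmaref{lem:dist1}---verifying that the second vertex $w_1$ of any shortest path lies in $\chi(\at',\delta)$---gives $\ell\geq d+1$.

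The hard part will be verifying $w_1\in\chi(\at',\delta)$ in the subcase where the clique $\sigma$ underlying the first edge $\{w_0,w_1\}$ contains no terminated vertex. In that situation $w_0=(i,\tau_0)$ is necessarily a new vertex with $\tau_0\subseteq\sigma\cap\at$ nonempty; shortest-path minimality forces $\sigma\not\subseteq\at$, and $d\geq 2$ forces $\sigma\cap\bt=\emptyset$, since otherwise an active-active edge inside the all-active $\sigma$ between $\tau_0\subseteq\at$ and $\sigma\cap\bt$ would witness $\dist(\at,\bt)\leq 1$. I plan to handle this residual case by applying \lemmaref{lem:nucs-distincrease} with $\ct=\sigma$, combined with careful tracking of how the tail $w_1,\ldots,w_\ell$ of the shortest path must exit $\chi(\sigma,\delta)$ en route to $\chi(\bt,\delta)$, in order to extract the missing step of distance.
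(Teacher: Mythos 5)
Your base case is correct (it is essentially a mild generalization of \lemmaref{lem:nucs-distincrease}), and your observation that the hypothesis transfers from $(\at,\bt)$ to $(\at',\bt)$ by prepending a detour through a terminated vertex is sound. But the inductive step has two genuine gaps. First, when $\dist(\at',\bt)=d$ the induction hypothesis does not apply to $(\at',\bt)$: you are inducting on the distance, and this pair is at the \emph{same} distance $d$, so invoking the claimed bound $\dist(\chi(\at',\delta),\chi(\bt,\delta))\geq d+1$ and then ``monotonicity'' is circular --- it assumes the very inequality being proved. This case is not vacuous: if no $n$-vertex clique contains both a vertex of $\at$ and a terminated vertex, then $\at'=\at$, the step makes no progress, and the entire burden falls on your residual case. (Iterating the construction does not help; it reaches a fixed point in the same situation.)

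Second, the residual case is precisely the heart of the lemma, and the plan you give for it does not go through. \lemmaref{lem:nucs-distincrease} with $\ct=\sigma$ requires $\bt\cap\ct\neq\emptyset$, which, as you note, fails whenever $d\geq 2$; and a single all-active clique $\sigma$ need not separate $\at$ from $\bt$, so nothing forces the path to pay an extra step inside $\chi(\sigma,\delta)$ --- the tail can leave immediately, and the missing unit of distance has no identified source. The paper's proof supplies exactly this missing structure globally rather than locally: it first shows (by rerouting active--active edges through terminated vertices) that every $\at$--$\bt$ path contains an edge \emph{all} of whose containing cliques are all-active; it then takes a \emph{minimal} union $\ct$ of all-active cliques meeting every $\at$--$\bt$ path, defines $\at'$ and $\bt'$ as the unions of the remaining cliques connected to $\at$ and to $\bt$, and uses minimality to show every clique of $\ct$ meets both $\at'$ and $\bt'$, so that every path (before and after subdividing) decomposes into a segment to $\at'\cap\ct$, a crossing of $\ct$, and a segment from $\bt'\cap\ct$. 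Then \lemmaref{lem:dist1} bounds the two end segments and \lemmaref{lem:nucs-distincrease}, applied to $\at'$, $\bt'$, $\ct$, shows the crossing costs $2$ instead of $1$. That separator decomposition is absent from your sketch, and without it (or a worked-out substitute) the proof is incomplete.
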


\begin{proof}
Assume that every path between $\at$ and $\bt$ in $\gt{t}$ contains at least one edge between active vertices.
Consider any such path $v_0,\ldots,v_\ell$ between $\at$ and $\bt$.
Suppose that, for all $1 \leq i \leq \ell$, the edge $\{v_{i-1},v_i\}$ is contained in an $n$-vertex clique that contains a terminated vertex $u_i$.
Replace each edge $\{v_{i-1},v_i\}$ that is  between active vertices by the subpath $v_{i-1},u_i,v_i$.
The result is a path between $\at$ and $\bt$ in $\gt{t}$ that contains no edges
between active vertices, contrary to our assumption.
Therefore, every path between $\at$ and $\bt$ in $\gt{t}$ contains at least one edge such that every $n$-vertex clique which contains this edge
is comprised of
active vertices.

Let $\ct$ be the union of a minimal set of $n$-vertex cliques in $\gt{t}$
comprised of
active vertices
such that every path between
$\at$ and $\bt$ in $\gt{t}$ contains at least one edge in $\ct$.
Let $\ft$ be the union 
of all other cliques in $\gt{t}$. Then there are no paths between $\at$ and $\bt$ in $\ft$.
Let $\at'$ be the union of all cliques in $\ft$ that are connected to $\at$
and let $\bt'$ be the union of all 
cliques in $\ft$ that are connected to $\bt$.
If $\at$ is the union of $n$-vertex cliques, then $\at'$ is nonempty, since $\at \subseteq \at'$.
If $\at$ is a set of terminated vertices, consider the first vertex of some path from $\at$ to $\bt$ in $\gt{t}$.
By definition, it is contained in some $n$-vertex clique $\sigma \subseteq \gt{t}$.
Since the first vertex of this path is in $\at$ and all vertices in $\ct$ are active, $\sigma \not\subseteq \ct$.
Hence $\sigma \subseteq \at'$, so $\at'$ is nonempty.
Similarly, 
$\bt'$ is nonempty.

Every path between $\at'$ and $\bt'$ in $\gt{t}$ contains at least one edge in $\ct$.
Consider any path $v_0,\ldots,v_\ell$ between $\at'$ and $\bt'$. Suppose $\{ v_i, v_{i+1} \}$ is the first edge on this path
that is contained in $\ct$. Then $v_i \in \at'$. Hence every path between $\at'$ and $\bt'$ in $\gt{t}$ and, hence, 
every path between $\at$ and $\bt$ in $\gt{t}$ contains an edge in $\ct$ 
with one endpoint in $\at'$.
By the minimality of $\ct$, every clique in $\ct$ intersects $\at'$.
Similarly, every clique in $\ct$  
intersects $\bt'$.
Therefore,
every shortest path between $\at \subseteq \at'$ and $\bt \subseteq \bt'$ in $\gt{t}$ 
consists of a path between $\at$ and $\at' \cap \ct$, followed by an edge between $\at' \cap \ct$ and $\bt' \cap \ct$,
followed by a path between $\bt' \cap \ct$ and $\bt$.

Since
$\gt{t+1} = \chi(\gt{t},\delta)$ is the union of $\chi(\sigma,\delta)$ for all $n$-vertex cliques $\sigma$ in $\gt{t}$,
it follows that $\gt{t+1}$ is the union of the  $n$-vertex cliques in $\chi(\at',\delta)$, $\chi(\bt',\delta)$, and $\chi(\ct,\delta)$.
Furthermore, since $\at'$ and $\bt'$ are disjoint, \lemmaref{lem:nucs-disjoint} implies that $\chi(\at',\delta)$ and $\chi(\bt',\delta)$
are disjoint. Thus, every path between  $\chi(\at,\delta) \subseteq \chi(\at',\delta)$ and $\chi(\bt,\delta) \subseteq \chi(\bt',\delta)$ in $\gt{t+1}$ 
consists of a path between $\chi(\at,\delta)$ and $\chi(\at',\delta) \cap \chi(\ct,\delta)$, followed by a path between $\chi(\at',\delta) \cap \chi(\ct,\delta)$ and $\chi(\bt',\delta) \cap \chi(\ct,\delta)$,
followed by a path between $\chi(\bt',\delta) \cap \chi(\ct,\delta)$ and $\chi(\bt,\delta)$.

Since $\chi(\at',\delta) \cap \chi(\ct,\delta) \subseteq \chi(\ct,\delta)$, 
the distance between $\chi(\at,\delta)$ and $\chi(\at',\delta) \cap \chi(\ct,\delta)$ in $\gt{t+1}$ is at least as large as
the distance between $\chi(\at,\delta)$ and $\chi(\ct,\delta)$ in $\gt{t+1}$.
By \lemmaref{lem:dist1}, the distance between $\chi(\at,\delta)$ and $\chi(\ct,\delta)$ in $\gt{t+1}$ is at least as large 
as the distance between $\at$ and $\ct$ in $\gt{t}$.
The distance between $\at$ and $\ct$ in $\gt{t}$ is equal to the distance between $\at$ and $\at' \cap \ct$ in $\gt{t}$, because
$\at'$ is the union of all cliques in $\ft$ that are connected to $\at$.
Hence, the distance between $\chi(\at,\delta)$ and $\chi(\at',\delta) \cap \chi(\ct,\delta)$ in $\gt{t+1}$ is at least as large as
the distance between $\at$ and $\at' \cap \ct$ in $\gt{t}$.
Similarly,  the distance between $\chi(\bt',\delta) \cap \chi(\ct,\delta)$  and $\chi(\bt,\delta)$ in $\gt{t+1}$ is at least as large as the distance between $\bt' \cap \ct$ and $\bt$ in $\gt{t}$.
By \lemmaref{lem:nucs-distincrease}, the distance between $\chi(\at',\delta)\cap \chi(\ct,\delta)$ and $\chi(\bt',\delta)\cap \chi(\ct,\delta)$ in $\gt{t+1}$ is at least 2.
Therefore,

\begin{tabular}{l}
the distance between $\chi(\at,\delta)$ and $\chi(\bt,\delta)$ in $\gt{t+1}$\\
$\geq$ the distance between $\chi(\at,\delta)$ and $\chi(\at',\delta) \cap \chi(\ct,\delta)$ in $\gt{t+1}$\\
$+$ the distance between $\chi(\at',\delta) \cap \chi(\ct,\delta)$ and $\chi(\bt',\delta) \cap \chi(\ct,\delta)$ in $\gt{t+1}$\\
$+$ the distance between $\chi(\bt',\delta) \cap \chi(\ct,\delta)$ and $\chi(\bt,\delta)$ in $\gt{t+1}$\\
$\geq$ the distance between $\at$ and $\at' \cap \ct$ in $\gt{t}$\\
$+$ 2\\
$+$ the distance between $\bt' \cap \ct$ and $\bt$ in $\gt{t}$\\
$>$ the distance between $\at$ and $\bt$ in $\gt{t}$.
\end{tabular}

\end{proof}
\section{Why Extension-Based Proofs Fail}
\label{sec:ext-extensions}
\newcommand{\nta}{\mathbb{N}_t(a)}
\newcommand{\Nta}[2]{\mathbb{N}_{#1}(#2)}
\newcommand{\tta}[1]{\mathbb{T}_t(#1)}
\newcommand{\Tta}[2]{\mathbb{T}_{#1}(#2)}
\newcommand{\xta}[1]{\mathbb{X}_t(#1)}
\newcommand{\Xta}[2]{\mathbb{X}_{#1}(#2)}
\newcommand{\Rt}{\mathbb{Q}}
  
In this section, we prove that  no extension-based proof can show the
impossibility of deterministically solving $k$-set agreement in a wait-free manner in the NIS model, for $n > k \geq 2 $ processes. Specifically, we define an adversary that is able to win against every extension-based prover. The adversary maintains a \emph{partial} specification of $\delta$ (the protocol it is adaptively constructing) and an integer $t \geq 0$. The integer $t$ represents the number of times it has subdivided the input complex, $\gt{0}$.  
Once the adversary has defined $\delta$ for each vertex in $\gt{t}$,
it may subdivide $\gt{t}$, construct $\gt{t+1} = \chi(\gt{t},\delta)$, and increment $t$.

For each $0 \leq r \leq t$ and each input value $a$,
let $\Tta{r}{a}$ be the subset of terminated vertices in $\vt{r}$ that have output $a$.
The following simple property is true because
every terminated vertex remains unchanged when a subdivision is performed.

\begin{proposition}
\label{prop:tfacts}
For all input values $a$ and all $0 \leq r < t$, $\Tta{r}{a} = \chi(\Tta{r}{a},\delta)  \subseteq \Tta{r+1}{a}$.
If the adversary defines $\delta(v) = \bot$ for each vertex $v \in \gt{t}$ where $\delta(v)$ is undefined, and subdivides $\gt{t}$ to construct $\gt{t+1}$, but does not terminate any additional vertices in $\vt{t+1}$, then $\tta{a} = \Tta{t+1}{a}$.
\end{proposition}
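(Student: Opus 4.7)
The proposition has two parts, both of which should follow almost directly from the definitions of $\chi$ on a set of terminated vertices and on a clique. My plan is to handle the equality $\Tta{r}{a} = \chi(\Tta{r}{a},\delta)$ first, then the inclusion $\chi(\Tta{r}{a},\delta) \subseteq \Tta{r+1}{a}$, and finally the reverse inclusion under the adversary's assumption.

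For the first equality, I would simply invoke the definition given immediately before \lemmaref{lem:nucs-disjoint}: for any set $\Tt$ of terminated vertices in $\gt{r}$, $\chi(\Tt,\delta) = \Tt$. Applying this to $\Tt = \Tta{r}{a}$ is all that is needed, so this step is a single sentence.

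For the inclusion $\chi(\Tta{r}{a},\delta) \subseteq \Tta{r+1}{a}$, I would take an arbitrary $v \in \Tta{r}{a}$. By definition of $\Tta{r}{a}$, $v$ is a vertex of $\gt{r}$ with $\delta(v) = a$. From the construction of $\chi(\sigma,\delta)$, a terminated vertex of $\sigma$ is also a vertex of $\chi(\sigma,\delta)$; choosing any $n$-vertex clique $\sigma \subseteq \gt{r}$ containing $v$ shows that $v$ is a vertex of $\chi(\sigma,\delta) \subseteq \chi(\gt{r},\delta) = \gt{r+1}$, with the same state, hence with $\delta(v) = a$. Iterating from $r$ up to any larger index (or just using $r$ and $r+1$ directly as stated) yields $v \in \Tta{r+1}{a}$.

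For the final claim $\tta{a} = \Tta{t+1}{a}$, the inclusion $\tta{a} \subseteq \Tta{t+1}{a}$ is the $r = t$ case of the inclusion already proved. For the reverse inclusion, I take any $v \in \Tta{t+1}{a}$ and argue that $v$ must already be in $\gt{t}$ as a terminated vertex: the new vertices of $\gt{t+1}$ (those in $\vt{t+1} \setminus \vt{t}$) are, by construction, of the form $(i,\tau)$ and the adversary's hypothesis is that none of these are terminated in this round, so their $\delta$ value remains $\bot$, not $a$. Hence $v \in \vt{t}$ and $\delta(v) = a$, i.e.\ $v \in \tta{a}$. I do not anticipate any real obstacle; the proposition is genuinely a bookkeeping observation about how the subdivision treats terminated vertices and about which vertices are newly created in a round.
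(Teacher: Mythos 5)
Your proof is correct and matches the paper's reasoning: the paper gives no formal proof of \propositionref{prop:tfacts}, only the one-sentence justification that every terminated vertex remains unchanged when a subdivision is performed, which is exactly the observation you unpack (the convention $\chi(\Tt,\delta)=\Tt$ for sets of terminated vertices, the fact that terminated vertices of a clique $\sigma$ persist into $\chi(\sigma,\delta)$ with the same state, and the fact that the only new vertices of $\gt{t+1}$ have the form $(i,\tau)$ and are not terminated under the stated hypothesis). No gaps.
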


We say that a vertex $v \in \vt{0}$ \emph{has seen} input value $a$ if it denotes the state of a process whose input has value $a$.
Inductively, we say that $v \in \vt{r+1}$ \emph{has seen} input value $a$ if $v \in \vt{r}$ and $v$ has seen $a$
or $v = (i,\tau)$ for some subset $\tau$ of active vertices of an $n$-vertex clique in $\gt{r}$ such that $i \in \mathit{id}(\tau)$
and some vertex in $\tau$ has seen $a$. In other words, if $v$ represents the state of a process $p_i$ in some configuration
reachable by an $r$-round schedule,
then $v$ has seen $a$ if and only if $p_i$ had input $x_i =a$ or,
in some round $r' \leq r$ of this schedule, there was a process $p_j$
that performed its \id{update} before $p_i$ performed its \id{scan} and the vertex representing $p_j$ has seen $a$ in round $r'-1$.
For each $r \geq 0$ and each input value $a$, let $\Nta{r}{a}$ be the the union of the $n$-vertex cliques in $\gt{r}$
none of whose vertices have seen $a$.
To avoid violating validity,  the adversary should not let any vertex in $\nta$ output the value $a$.

\begin{proposition}
\label{prop:nfacts}
For $0 \leq r < t$ and for any input value $a$,
 $\Nta{r+1}{a} = \chi(\Nta{r}{a},\delta)$.
\end{proposition}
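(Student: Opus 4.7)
The plan is to prove the equality by establishing both inclusions, in each case reducing to an individual $n$-vertex clique $\sigma' \subseteq \gt{r}$ and its subdivision $\chi(\sigma',\delta)$, and then using the definition of $\chi(\Nta{r}{a},\delta)$ as the union of these subdivisions.

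For the inclusion $\chi(\Nta{r}{a},\delta) \subseteq \Nta{r+1}{a}$, I would fix an $n$-vertex clique $\sigma' \subseteq \Nta{r}{a}$ (so no vertex of $\sigma'$ has seen $a$) and show that every $n$-vertex clique $\sigma \subseteq \chi(\sigma',\delta)$ lies in $\Nta{r+1}{a}$. By the definition of the subdivision, each vertex of $\sigma$ is either a terminated vertex of $\sigma'$, which has not seen $a$ by assumption, or a new vertex $(i,\tau)$ with $\tau$ a subset of the active vertices of $\sigma'$. Since $\tau \subseteq \sigma' \subseteq \Nta{r}{a}$, no vertex of $\tau$ has seen $a$, and by the inductive definition of ``has seen'', neither has $(i,\tau)$. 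Thus $\sigma \subseteq \Nta{r+1}{a}$, and taking the union over all such $\sigma'$ yields the desired inclusion.

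For the reverse inclusion $\Nta{r+1}{a} \subseteq \chi(\Nta{r}{a},\delta)$, I would fix an $n$-vertex clique $\sigma \subseteq \Nta{r+1}{a}$ and, using \lemmaref{lem:clique-config}, locate an $n$-vertex clique $\sigma' \subseteq \gt{r}$ such that $\sigma \subseteq \chi(\sigma',\delta)$, where $\sigma$ represents a configuration obtained from the configuration represented by $\sigma'$ via some $1$-round schedule $\beta$. The terminated vertices of $\sigma'$ are by definition already vertices of $\sigma$ and so have not seen $a$. The key step is to exhibit a vertex $(j,\tau) \in \sigma$ with $\tau$ equal to the entire set of active vertices of $\sigma'$: this follows because the final step of the $1$-round schedule $\beta$ is a \id{scan} performed after every active process has already performed its \id{update}, so this scan sees all active processes of $\sigma'$. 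Since $(j,\tau) \in \sigma \subseteq \Nta{r+1}{a}$, no vertex of $\tau$ has seen $a$, and hence no active vertex of $\sigma'$ has seen $a$. Combined with the terminated case, every vertex of $\sigma'$ is unseen by $a$, so $\sigma' \subseteq \Nta{r}{a}$ and $\sigma \subseteq \chi(\sigma',\delta) \subseteq \chi(\Nta{r}{a},\delta)$.

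I do not anticipate any real obstacle; the only care point is the ``last scan sees all active processes'' observation used in the second direction, which pins down a vertex of $\sigma$ whose associated set $\tau$ is the full collection of active vertices of $\sigma'$ and thereby lets us propagate the ``not seen $a$'' property back from $\gt{r+1}$ to $\gt{r}$.
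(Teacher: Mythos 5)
Your proof is correct. The inclusion $\chi(\Nta{r}{a},\delta) \subseteq \Nta{r+1}{a}$ is argued exactly as in the paper. For the reverse inclusion the paper takes a slightly different, more local route: given an $n$-vertex clique $\sigma \subseteq \Nta{r+1}{a}$ sitting inside $\chi(\sigma',\delta)$, it observes that for \emph{each} vertex $u$ of $\sigma'$ with identifier $i$, the vertex of $\sigma$ with identifier $i$ is either $u$ itself (if $u$ is terminated) or some $(i,\tau)$ with $i \in \mathit{id}(\tau)$, which forces $u \in \tau$; so if $u$ had seen $a$, the corresponding vertex of $\sigma$ would have seen $a$ too, and the claim follows by contraposition, with no appeal to \lemmaref{lem:clique-config} or to schedules. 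You instead exhibit a single ``full-view'' vertex $(j,\tau)$ of $\sigma$ with $\tau$ equal to the whole set of active vertices of $\sigma'$, obtained from the last \id{scan} of the $1$-round schedule, and pull back ``not seen $a$'' through that one vertex, handling the terminated vertices of $\sigma'$ separately. Both arguments are sound; the paper's per-identifier correspondence is a bit more economical (it needs only the combinatorial definition of $\chi(\sigma',\delta)$), whereas your version leans on the operational interpretation of cliques as $1$-round schedules. The only point worth making explicit in your write-up is the degenerate case in which $\sigma'$ has no active vertices (so no full-view vertex exists); there $\sigma = \sigma'$ and the terminated case already covers everything.
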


\begin{proof}
Consider any $n$-vertex clique $\sigma \subseteq \Nta{r}{a}$.
Since no vertex in $\sigma$ has seen $a$, it follows, by definition,
that no vertex in $\chi(\sigma,\delta)$ has seen $a$.
Thus $\chi(\sigma,\delta) \subseteq \Nta{r+1}{a}$ and, hence, $\chi(\Nta{r}{a},\delta) \subseteq \Nta{r+1}{a}$.

Conversely, consider any  $n$-vertex clique $\sigma' \subseteq \Nta{r+1}{a}$.
By definition of $\gt{r+1}$, $\sigma' = \chi(\sigma,\delta)$ for some $n$-vertex clique $\sigma$ in $\gt{r}$.
If some vertex in $\sigma$ has seen $a$, then the process  in $\sigma'$ with the same $\mathit{id}$ has seen $a$.
But none of the vertices in $\sigma'$ have seen $a$, so none of the vertices in $\sigma$ have seen $a$.
Hence $\sigma \subseteq \Nta{r}{a}$ and $\sigma' \subseteq \chi(\Nta{r}{a},\delta)$.
Therefore $\Nta{r+1}{a} \subseteq \chi(\Nta{r}{a},\delta)$.
\end{proof}

In fact, every vertex in $\gt{r}$ that has not seen $a$ is in $\Nta{r}{a}$.
This is the special case of the following lemma when $\tau$ contains only one vertex.

\begin{lemma}
If $\tau$ is a subset of an $n$-vertex clique in $\gt{r}$ and no vertex in $\tau$ has seen the input value $a$,
then $\tau$ is a subset of an $n$-vertex clique in $\Nta{r}{a}$.
\end{lemma}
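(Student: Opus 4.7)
The plan is to proceed by induction on $r$. In the base case $r = 0$, each vertex $(i, b)$ of $\gt{0}$ has seen exactly the single input value $b$, and an $n$-vertex clique in $\gt{0}$ consists of one vertex per identifier. Given $\tau$ with no vertex having seen $a$, I would extend $\tau$ by choosing, for each identifier $i \notin \mathit{id}(\tau)$, any vertex $(i, b)$ with $b \neq a$; such $b$ exists because $k \geq 2$, so there are at least three input values. The resulting $n$-vertex clique has no vertex that has seen $a$ and therefore lies in $\Nta{0}{a}$.

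For the inductive step, let $\tau$ lie in an $n$-vertex clique $\sigma' \subseteq \gt{r+1}$. Since $\gt{r+1} = \chi(\gt{r}, \delta)$, there is an $n$-vertex clique $\sigma \subseteq \gt{r}$ with $\sigma' \subseteq \chi(\sigma, \delta)$. I would form the ``preimage'' set
\[
\tau^{*} = \{ v \in \tau : v \text{ terminated}\} \;\cup\; \bigcup \{\, \tau_v : v = (i, \tau_v) \in \tau \,\},
\]
which is a subset of $\sigma$. By the inductive definition of ``has seen,'' no vertex in $\tau^{*}$ has seen $a$; otherwise some vertex of $\tau$ would have seen $a$. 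By the induction hypothesis, $\tau^{*}$ lies in some $n$-vertex clique $\rho \subseteq \Nta{r}{a}$.

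It remains to extend $\tau$ to an $n$-vertex clique of $\chi(\rho, \delta)$, which by \propositionref{prop:nfacts} is contained in $\chi(\Nta{r}{a},\delta) = \Nta{r+1}{a}$. For each identifier $i \notin \mathit{id}(\tau)$, let $u_i$ be the unique vertex of $\rho$ with identifier $i$; augment $\tau$ with $u_i$ if $u_i$ is terminated, and with $(i, A)$ otherwise, where $A$ is the set of all active vertices of $\rho$. The resulting set has $n$ vertices with distinct identifiers, and every vertex lies in $\chi(\rho, \delta)$ by its definition.

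The main thing to verify is that this set is a clique, which reduces to a nestedness check on the subsets attached to active vertices: the subsets $\tau_v$ attached to the original vertices of $\tau$ are pairwise nested because $\tau \subseteq \sigma'$ is a clique in $\chi(\sigma,\delta)$; each such $\tau_v$ sits inside $A$ because $\tau_v \subseteq \tau^{*} \subseteq \rho$ and every element of $\tau_v$ is active; and the new vertices $(i, A)$ are attached to the common set $A$. Edges involving the terminated vertices are automatic from the definition of $\chi(\rho, \delta)$. I expect the only mildly delicate step is this bookkeeping — in particular, confirming that the $\tau_v$ are genuinely subsets of the active vertices of $\rho$ (not merely of $\sigma$) — which follows because activity is a property of the vertex alone.
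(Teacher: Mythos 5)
Your proof is correct and follows the same induction as the paper's: the same base case (extend $\tau$ with vertices $(j,b)$ for a fixed $b\neq a$; two input values already suffice here, so the appeal to ``at least three'' values is unnecessary), and the same shape of inductive step --- pull $\tau$ back to a set of vertices of $\sigma\subseteq\gt{r}$ that have not seen $a$, invoke the induction hypothesis to get a clique $\rho\subseteq\Nta{r}{a}$, and complete $\tau$ to an $n$-vertex clique of $\chi(\rho,\delta)$ using vertices of the form $(j,\mathrm{active}(\rho))$. The one substantive difference is which set you feed to the induction hypothesis, and your choice is the more careful one. The paper pulls back by matching identifiers, taking $\{v\in\sigma \mid \mathit{id}(v)\in\mathit{id}(\tau')\}$; you pull back to $\tau^{*}$, the union of the scan-sets $\tau_v$ together with the terminated vertices. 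This matters exactly where you flag it: to conclude that an original vertex $(i,\tau_v)$ of $\tau$ is a vertex of $\chi(\rho,\delta)$ one needs $\tau_v\subseteq\mathrm{active}(\rho)$, which $\tau^{*}\subseteq\rho$ gives immediately, whereas the identifier-matched set need not contain all of $\tau_v$ --- a vertex $w\in\tau_v$ can have $\mathit{id}(w)\notin\mathit{id}(\tau')$ when the vertex of $\sigma'$ with that identifier saw $a$ through a strictly larger scan, and then the clique $\rho$ supplied by the induction hypothesis need not contain $w$. So your version of this step is airtight where the paper's, read literally, needs a small repair (e.g., pulling back to all vertices of $\sigma$ that have not seen $a$, which contains your $\tau^{*}$). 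The remaining bookkeeping --- pairwise nestedness of the $\tau_v$ inherited from the clique $\sigma'$, each $\tau_v$ sitting inside the active vertices of $\rho$ because activity is a property of the vertex alone, and automatic edges at terminated vertices --- is exactly right.
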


\begin{proof}
The proof is by induction on $r$.
Every two vertices in $\gt{0}$ that have not seen $a$ are adjacent provided they represent the states of different processes,
i.e.~they have different $\mathit{id}$s. Thus, if $\tau$ is a subset of an $n$-vertex clique in $\gt{0}$, no vertex in $\tau$ has seen the input value $a$, and $b \neq a$, then $\tau \cup \{(j,b)\ |\ j \not\in \mathit{id}(\tau)\}$ are the vertices of an $n$-vertex clique in $\Nta{0}{a}$.

Let $r \geq 0$ and assume the claim is true for $r$.
Consider any $n$-vertex clique $\sigma'$ in $\gt{r+1}$. 
Let $\tau'$ be the subset of all vertices of $\sigma'$ that have not seen $a$. Since $\gt{r+1} = \chi(\gt{r},\delta)$, there exists an $n$-vertex clique $\sigma$ in $\gt{r}$ such that $\sigma'$ is in $\chi(\sigma,\delta)$.
Let $\tau = \{v \in \sigma\ |\ \mathit{id}(v) \in \mathit{id}(\tau')\}$.
Note that, by definition, if $u \in \sigma$ has seen $a$, then every vertex $u' \in \chi(\sigma,\delta)$
with $\mathit{id}(u') = \mathit{id}(u)$ has seen $a$.
Hence, no vertex in $\tau$ has seen $a$.
By the induction hypothesis, there exists an $n$-vertex clique $\rho$ in $\Nta{r}{a}$ that contains $\tau$.

By definition of $\chi(\rho,\delta)$, it contains each vertex of $\tau'$.
Since $\tau' \subseteq \sigma'$,
the vertices in $\tau'$ are adjacent to one another.
Let $active(\rho)$ denote the set of active vertices in $\rho$ and let
$\rho' = \{ (j,active(\rho))\ |\ j  \in \mathit{id}(active(\rho))-\mathit{id}(\tau')\} \subseteq
\chi(\rho,\delta)$.
The vertices in $\rho'$ are adjacent to one another and to each vertex in $\tau'$.
Furthermore each terminated vertex in $\rho$ is adjacent to all the vertices in $\tau'$ and $\rho'$.
Hence, these vertices form an $n$-vertex clique in $\chi(\rho,\delta) \subseteq \Nta{r+1}{a}$.
Thus the claim is true for $r+1$.
\end{proof}

For each $0 \leq r \leq t$ and each input value $a$, let $\xta{a}$ be the subset of vertices in $\vt{r}$ that represent the states of processes in $P$ in configurations reachable reachable from $C$ by $P$-only schedules, for all output queries $(C,P,a)$ to which the adversary answered \none.
To avoid contradicting its responses,
the adversary should not let any vertex in $\xta{a}$ output the value $a$.

Throughout the first phase, the adversary ensures that the following invariants hold after 
its response to each query:
\begin{enumerate}
\item \label{inv:defined1} For each $0 \leq r < t$ and each vertex $v \in \vt{r}$, $\delta(v)$ is defined.

\item \label{inv:defined2} If $v \in \vt{t}$, then 
$\delta(v) \neq \bot$.

\item \label{inv:defined3}
Suppose $s$ is the state of a process in  configuration $C \in \config{A}(1) \cup \config{A}'(1)$ and
$C = C_0 \beta$ for some initial configuration $C_0 \in \config{A}(1)$.
The process occurs at most $2t+1$ times in $\beta$.
If the process occurs $2r$ times in $\beta$, then
$s \in \vt{r}$ and $\delta(s)$ is defined. 

\item \label{inv:contains} For any input value $a$, if $\tta{a}$ is nonempty, then the distance between 
$\tta{a}$ and $\Nta{t}{a}$ in $\gt{t}$ is at least 2.

\item \label{inv:outputsfar} For any two input values $a \neq b$, if $\tta{a}$ and $\tta{b}$ are nonempty, then the distance between them in $\gt{t}$ is at least 3.

\item \label{inv:x}
For every input $a$ and every vertex $v \in \xta{a}$, 
either $v \in \nta$ or there exists an input $b \neq a$ such that $v$ is distance at most 1 from $\tta{b}$.

\begin{NIIS}
\item \label{inv:none}  For every output query $(C',P',y)$ for which the adversary answered \none\ and for every
simplex $\sigma'$ in $\mathbb{S}^t$ consisting of the states of the processes in $P'$ in a configuration that is reachable from $C'$ via a $P'$-only execution, either there is some $b \neq y$ and some vertex $w \in \mathbb{T}_b^t$ that is adjacent to every vertex in $\sigma'$  or $\sigma' \subseteq \mathbb{N}_y^t$.
\end{NIIS}

\end{enumerate}

\noindent
There is nothing special about the values 2 and 3. They are simply the smallest values such that the invariants can be maintained and every chain of queries is finite.
The following lemma is a consequence of the invariants. 

\begin{lemma}
\label{lem:invariants-hold1}
For any two input values $a \neq b$,  every path between $\tta{a}$ and $\nta \cup \tta{b}$ in $\gt{t}$ contains at least one edge between active vertices.
\end{lemma}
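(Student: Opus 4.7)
The plan is to argue by contradiction. Suppose there is a path $v_0, v_1, \ldots, v_\ell$ in $\gt{t}$ with $v_0 \in \tta{a}$ and $v_\ell \in \nta \cup \tta{b}$ such that every edge on the path has at least one terminated endpoint. The immediate observation is that no two consecutive vertices on the path can both be active, since the edge between them would then lie between active vertices, contradicting the assumption.

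The first main step is to show that every terminated vertex appearing on the path is in $\tta{a}$. Let $u_0, u_1, \ldots, u_k$ be the terminated vertices of the path in the order they occur, so that $u_0 = v_0 \in \tta{a}$. Because at most one active vertex can lie between $u_j$ and $u_{j+1}$ along the path, the distance between $u_j$ and $u_{j+1}$ in $\gt{t}$ is at most $2$. If $u_j \in \tta{c_j}$ and $u_{j+1} \in \tta{c_{j+1}}$ with $c_j \neq c_{j+1}$, this would contradict \itemref{inv:outputsfar}, which gives distance at least $3$ between $\tta{c_j}$ and $\tta{c_{j+1}}$. Hence $c_j = c_{j+1}$, and by induction on $j$ every $u_j$ lies in $\tta{a}$.

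It remains to derive a contradiction by case analysis on the endpoint $v_\ell$. If $v_\ell \in \tta{b}$, then $v_\ell$ is terminated and thus equals $u_k \in \tta{a}$, contradicting $a \neq b$. If $v_\ell \in \nta$ is terminated, then $v_\ell \in \tta{a} \cap \nta$, but \itemref{inv:contains} forces these two sets to be disjoint. Finally, if $v_\ell \in \nta$ is active, then the edge $\{v_{\ell-1}, v_\ell\}$ has an active endpoint at $v_\ell$, so $v_{\ell-1}$ must be terminated; hence $v_{\ell-1} = u_k \in \tta{a}$ is adjacent to $v_\ell \in \nta$, putting the distance between $\tta{a}$ and $\nta$ at $1$ and again contradicting \itemref{inv:contains}.

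The only real work is articulating cleanly the "consecutive terminated vertices on the path are within distance two" observation; once that is set up, invariants \itemref{inv:contains} and \itemref{inv:outputsfar} combine immediately to close all three cases.
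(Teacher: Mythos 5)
Your proof is correct and rests on exactly the same ingredients as the paper's: invariants \ref{inv:contains} and \ref{inv:outputsfar} force any terminated vertex within distance $2$ of $\tta{a}$ to itself lie in $\tta{a}$, which is incompatible with the path reaching $\nta \cup \tta{b}$. The paper packages this directly rather than by contradiction---it takes the last path vertex $v_j \in \tta{a}$, notes $v_{j+1}$ and $v_{j+2}$ exist and cannot be terminated, and exhibits $\{v_{j+1},v_{j+2}\}$ as the required edge---but this is only an organizational difference from your induction along the terminated vertices of the path.
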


\begin{proof}
Consider any path $v_0,v_1,\dots,v_\ell$ between $\tta{a}$ and $\nta \cup \tta{b}$ in $\gt{t}$. Let $v_j$ be the last vertex in $\tta{a}$.
Since the invariants hold after each query and $v_\ell \in \tta{b} \cup \nta$, invariants \ref{inv:contains} and \ref{inv:outputsfar} imply that the distance between $v_j$ and $v_\ell$ is at least 2. Hence, $\ell \geq j+2$. Since $v_j$ is the last vertex in $\tta{a}$, $v_{j+1}, v_{j+2} \not\in \tta{a}$. Moreover, by invariant \ref{inv:outputsfar}, $v_{j+1}, v_{j+2} \not\in \tta{c}$ for any input value $c \neq a$. Hence, $\{v_{j+1},v_{j+2}\}$ is an edge between active vertices. 
\end{proof}

Essentially, a subdivision maintains the invariants, but increases the distance
increases between vertices that output different values and between vertices that output $a$ and vertices that have not seen $a$. 

\begin{lemma}
\label{lem:invariants-hold2}
Suppose all the invariants hold,  the adversary defines $\delta(v) = \bot$ for each vertex $v \in \vt{t}$ such that $\delta(v)$ is undefined, and subdivides $\gt{t}$ to construct $\gt{t+1}$.
If $\tta{a}$ is nonempty, then the distance between $\Tta{t+1}{a}$ and $\Nta{t+1}{a}$ in $\gt{t+1}$ is greater than the distance between $\tta{a}$ and $\nta$ in $\gt{t}$.
If  $b \neq a$ and $\tta{b}$ is also nonempty, then the distance between 
$\Tta{t+1}{a}$ and $\Tta{t+1}{b}$ in $\gt{t+1}$ is greater than the distance between $\tta{a}$ and $\tta{b}$ in $\gt{t}$.
Furthermore, if the adversary increments $t$, then all the invariants hold.
\end{lemma}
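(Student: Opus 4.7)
The plan is to dispatch parts 1 and 2 by applying \lemmaref{lem:dist2} and then to verify each invariant in turn after the adversary fills in $\delta$ with $\bot$ on the remaining undefined vertices of $\vt{t}$, subdivides, and increments $t$. For parts 1 and 2, note that since no additional vertices are terminated, \propositionref{prop:tfacts} gives $\Tta{t+1}{c} = \tta{c} = \chi(\tta{c},\delta)$ for $c \in \{a,b\}$, and \propositionref{prop:nfacts} gives $\Nta{t+1}{a} = \chi(\nta,\delta)$. \lemmaref{lem:invariants-hold1} supplies the hypothesis of \lemmaref{lem:dist2}: every path in $\gt{t}$ from $\tta{a}$ to $\nta \cup \tta{b}$, and hence every path from $\tta{a}$ to $\nta$ alone or from $\tta{a}$ to $\tta{b}$ alone, contains an edge between active vertices. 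Applying \lemmaref{lem:dist2} with $(\at,\bt) = (\tta{a},\nta)$ yields part 1, and with $(\at,\bt) = (\tta{a},\tta{b})$ yields part 2.

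For the ``easy'' invariants, invariants \ref{inv:defined1} and \ref{inv:defined2} hold because $\delta$ is now defined on all of $\vt{r}$ for every $r$ strictly less than the new $t$, while the only vertices of the new $\vt{t}$ on which $\delta$ is defined are the terminated vertices persisting from the old $\vt{t}$ (by \propositionref{prop:tfacts}), which have $\delta \neq \bot$; the newly created vertices of the form $(i,\tau)$ have $\delta$ still undefined, and so vacuously satisfy invariant \ref{inv:defined2}. Invariant \ref{inv:defined3} is preserved because $\config{A}(1)\cup\config{A}'(1)$ is unchanged, $\delta$ is defined on strictly more states than before, and the bound $2t+1$ grows when $t$ is incremented. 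Invariants \ref{inv:contains} and \ref{inv:outputsfar} follow immediately from parts 1 and 2 applied to every pair of input values: old distances of at least $2$ and $3$ strictly increase and hence remain at least $2$ and $3$.

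The main obstacle is invariant \ref{inv:x}. The plan is to classify each $v' \in \Xta{t+1}{a}$ as either a terminated vertex already present in $\xta{a}$ — in which case \propositionref{prop:tfacts} and \propositionref{prop:nfacts} transfer both subcases of the invariant essentially verbatim — or a newly created vertex $v' = (i,\tau) \in \chi(\sigma,\delta)$ for some $n$-vertex clique $\sigma \subseteq \gt{t}$ representing a configuration reachable by the witnessing $P$-only schedule, so that every vertex of $\tau$ lies in $\xta{a}$. Pick any $v \in \tau$ and invoke invariant \ref{inv:x} at time $t$: if $v \in \nta$, the lemma preceding the invariants (any set of vertices none of which has seen $a$ extends to an $n$-vertex clique in $\nta$) produces a clique $\rho \subseteq \nta$ with $\tau \subseteq \rho$, whence $v' \in \chi(\rho,\delta) \subseteq \chi(\nta,\delta) = \Nta{t+1}{a}$; otherwise $v$ is adjacent in $\gt{t}$ to some terminated $w \in \tta{b}$, and locating an $n$-vertex clique of $\gt{t}$ that contains both $\tau$ and $w$ shows that $w$ is a vertex of $\chi(\cdot,\delta)$ adjacent to $v'$, and since $w$ persists unchanged into $\Tta{t+1}{b}$ this gives $v'$ at distance at most $1$ from $\Tta{t+1}{b}$. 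Exhibiting the common clique that simultaneously carries $\tau$ and the terminated witness $w$ — so that after subdivision the witness still sits next to $v'$ — is the delicate bookkeeping step on which the invariant really hinges.
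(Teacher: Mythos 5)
Your handling of the two distance statements and of invariants \ref{inv:defined1}--\ref{inv:outputsfar} is correct and is exactly the paper's argument: \propositionref{prop:tfacts} and \propositionref{prop:nfacts} identify $\mathbb{T}_{t+1}(\cdot)$ and $\mathbb{N}_{t+1}(\cdot)$ with the subdivisions of $\mathbb{T}_t(\cdot)$ and $\mathbb{N}_t(\cdot)$, \lemmaref{lem:invariants-hold1} supplies the hypothesis of \lemmaref{lem:dist2}, and the strict increase in distance gives invariants \ref{inv:contains} and \ref{inv:outputsfar} after $t$ is incremented.

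The gap is in your treatment of invariant \ref{inv:x}, and you have in effect flagged it yourself by leaving the ``common clique'' step open. (To be fair, the paper disposes of this point in a single sentence---``by the definition of $\chi$''---so you are attempting more than it records; but your attempt, as written, does not close.) Three concrete problems. First, for a new vertex $v'=(i,\tau)\in\mathbb{X}_{t+1}(a)$ arising from an output query $(C,Q,a)$, it is not true that every vertex of $\tau$ lies in $\mathbb{X}_t(a)$: $\tau$ records all round-$t$ states written to $S_{t+1}$ before $p_i$'s \id{scan}, and these may belong to processes outside $Q$ that performed their \id{update} before $C$ was reached; by definition only states of processes in $Q$ in $Q$-only-reachable configurations belong to $\mathbb{X}_t(a)$. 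Second, in the branch where your chosen $v\in\tau$ lies in $\mathbb{N}_t(a)$, the lemma you invoke (extending a set of vertices to an $n$-vertex clique in $\mathbb{N}_t(a)$) requires that \emph{no} vertex of $\tau$ has seen $a$; membership of one vertex of $\tau$ in $\mathbb{N}_t(a)$ gives only that that vertex has not seen $a$, and if some other vertex of $\tau$ has seen $a$ then $v'$ has seen $a$, so $v'\notin\mathbb{N}_{t+1}(a)$ and you are pushed into the second branch without the hypothesis that gets you there. Third, the common-clique step is not mere bookkeeping: $(i,\tau)$ lies in $\chi(\sigma,\delta)$ only for cliques $\sigma$ whose active part contains all of $\tau$, whereas the terminated witness $w$ supplied by invariant \ref{inv:x} for a single $v\in\tau$ is only known to share \emph{some} clique with $v$; nothing forces $w$ into a clique containing all of $\tau$, so the adjacency of $v'$ to $w$ in $\mathbb{G}_{t+1}$ does not follow. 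A complete argument has to be anchored to the specific clique $\sigma_0$ of $\mathbb{G}_t$ representing the round-$t$ configuration from which the \id{scan} producing $\tau$ is taken (via \lemmaref{lem:round}), rather than to arbitrary cliques through individual vertices of $\tau$.
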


\begin{proof}
By~\propositionref{prop:tfacts}, $\Tta{t+1}{a} = \tta{a} = \chi(\tta{a},\delta)$, for each input $a$. 
In addition, $\Nta{t+1}{a} = \chi(\nta,\Delta)$ by~\propositionref{prop:nfacts}.
Lemma~\ref{lem:invariants-hold1} says that every path between $\tta{a}$ and $\nta \cup \tta{b}$ in $\gt{t}$ contains at least one edge between active vertices.
Therefore, if $\tta{a}$ is nonempty,  \lemmaref{lem:dist2} implies that the distance between  $\chi(\tta{a},\delta) = \Tta{t+1}{a}$
and $\chi(\nta,\delta) = \Nta{t+1}{a}$  in $\gt{t+1}$ is greater than the distance between  $\tta{a}$ and $\nta$ in $\gt{t}$.
Similarly, if both $\tta{a}$ and $\tta{b}$ are nonempty,  \lemmaref{lem:dist2} implies that 
the distance between  $\chi(\tta{t},\delta) = \Tta{t+1}{a}$
and $\chi(\tta{b},\delta) = \Tta{t+1}{b}$  in $\gt{t+1}$ is greater than the distance between  $\tta{a}$ and $\tta{b}$ in $\gt{t}$.
Hence invariants \ref{inv:contains} and \ref{inv:outputsfar} remain true after $t$ is incremented.

Before incrementing $t$, the adversary defines $\delta(v) = \bot$ for each vertex $v \in \gt{t}$ where $\delta(v)$ was undefined.
Since invariant \ref{inv:defined1} was true, it remains true.
Invariant \ref{inv:defined2} remains true by construction and the definition of $\chi$.
Invariant \ref{inv:defined3} is not affected.
Invariant \ref{inv:x} remains true by the definition of $\chi$.
\end{proof}

\paragraph{The adversarial strategy for phase 1.} 
Initially, the adversary sets $\delta(v) = \bot$ for each vertex $v \in \gt{0}$,  it subdivides $\gt{0}$ to construct $\gt{1}$, and it sets $t=1$. By construction, invariants \ref{inv:defined1} and \ref{inv:defined2} are true.
Before the first query, $\config{A'}(1)$ is empty.
Since $\config{A}(1)$ is the set of all initial configurations and $\gt{0}$ represents all initial configurations,
invariant \ref{inv:defined3} is true.
 
No vertices in
$\gt{1}$ have terminated, so 
$\Tta{1}{a}$ is 
empty for all inputs $a$.
Since there have been no output queries, 
 $\Xta{1}{a}$ is
empty for all inputs $a$.
Therefore invariants \ref{inv:contains}, \ref{inv:outputsfar}, and \ref{inv:x} are vacuously true.

\medskip

Now suppose that the invariants are true immediately prior to some query $(C,q)$ in phase 1, where
$C \in \config{A}(1) \cup \config{A'}(1)$ and $q$ is a  process that is active in $C$.
Note that the prover already knows the state of every process in configuration $C$, including which of them
have terminated.
Let $\beta$ be a schedule from an initial configuration $C_0 \in \config{A}(1)$ such that $C = C_0 \beta$
and $C_0\beta' \in \config{A'}(1)$ for every nonempty prefix $\beta'$ of $\beta$.

If $q$ occurs $2r$ times in $\beta$, then, by invariant \ref{inv:defined3}, $0 \leq r \leq t$ and the state $s$ of $q$ in configuration $C$ is a vertex in $\gt{r}$. Since $q$ is active in $C$, $\delta(s) = \bot$.
Hence, by invariant \ref{inv:defined2}, $r < t$.
In this case, the adversary returns the configuration $Cq$, which is the same as $C$ except that 
$S_{r+1}[\mathit{id}(q)] = (\mathit{id}(q),s)$
and the state of $q$ 
has an extra bit indicating that it last performed an $\id{update}$.
Note that $q$ is active in this state.
Invariant \ref{inv:defined3} is true for configuration 
$Cq = C_0 \beta q$ since $p_i$ occurs $2r+1$ times in $\beta q$ and every other process is in the
same state in  configurations $C$ and $Cq$.
Since $\delta$ has not been changed by the adversary, 
$\tta{a}$ is unchanged for all inputs $a$ and invariants \ref{inv:defined1}, \ref{inv:defined2}
\ref{inv:contains}, and \ref{inv:outputsfar} remain true.
Since no vertices are added to $\xta{a}$ for any input $a$, invariant~\ref{inv:x} remains true.

So, suppose that $q$ occurs $2r+1$ times  in $\beta$. 
Let $\beta'$ be the longest prefix of $\beta$ in which $q$ occurs $2r$ times. Then $0 \leq r \leq t$ and the state $s$ of $q$ in configuration $C_0\beta'$ is a vertex in $\gt{r}$, by invariant \ref{inv:defined3}.
Since $q$ in active in $C$, it is active in configuration $C_0\beta'$, so $\delta(s) = \bot$. 
Hence, by invariant \ref{inv:defined2}, $r < t$.

The state $s'$ of $q$ in configuration $Cq$ is $(\mathit{id}(q),\sigma)$,
where $\sigma$ is the result of its $\id{scan}$ of $S_{r+1}$.
It is a vertex in $\gt{r+1}$.
Note that, by Observation~\ref{obs:configstates}, the contents of $S_{r+1}$ are determined by the states of all processes in $C$.
If $r < t-1$, then $\delta(s')$ is defined, by invariant  \ref{inv:defined1}. It is also possible that $r = t-1$ and $\delta(s')$ is defined.
In both these cases, the adversary returns configuration $Cq$, which is the same as $C$, except for the state of $q$ and,
if $\delta(s') \neq \bot$,
the value it outputs.
As above, all the invariants continue to hold.

Now, suppose that $r = t-1$ and $\delta(s')$ is not defined.
If there exists an input $a$ such that setting $\delta(s') = a$ maintains all the invariants,
then the adversary defines $\delta(s') = a$ and returns configuration $Cq$,
which is the same as $C$ except for the state of $q$ and the fact that $q$ outputs $a$.
In this case, the distance between $s'$ and $\nta$ is at least 2 and,
for all inputs $b\neq a$ such that $\tta{b}$ is nonempty, the distance between $s'$ and $\tta{b}$ in $\gt{t}$ is at least 3.
The vertex $s'$ is added to $\tta{a}$. The sets $\tta{b}$, for all inputs $b \neq a$, 
and the sets $\Nta{t}{b}$ and $\xta{b}$, for all inputs $b$, are unchanged.
Hence,  invariants~\ref{inv:defined1}, \ref{inv:defined2},
\ref{inv:contains}, \ref{inv:outputsfar},  and \ref{inv:x}
continue to hold. 
By construction, $s' \in \gt{t}$ and $\delta(s')$ is defined. For every other process, its state in $Cq$ is the same as its state in 
$C$. Thus, invariant~\ref{inv:defined3} continues to hold.
By invariant \ref{inv:x}, each vertex $u \in \xta{a}$ is either in $\nta$ or is at distance at most 1 from  $\tta{b}$ for some $b \neq a$.
Since the distance  between $s'$ and $\nta$ is at least 2 and the distance between $s'$ and $\tta{b}$\ is at least 3,
the distance between $s'$ and $u$ is at least 2. Thus $s' \not\in \xta{a}$, so defining $\delta(s') = a$
does not contradict the result of any previous output query.

Otherwise, the adversary defines $\delta(v) = \bot$
for each vertex $v \in \vt{t}$ where $\delta(v)$ is undefined, including $s'$,
subdivides $\gt{t}$ to construct $\gt{t+1}$, and increments $t$.
By Lemma~\ref{lem:invariants-hold2}, all the invariants continue to hold.
The adversary returns configuration $Cq$, which is the same as $C$ except for the state of $q$.

\medskip

Finally, suppose that the invariants are true immediately prior to some output query $(C,Q,y)$ in phase 1, where
$C \in \config{A}(1) \cup \config{A'}(1)$, each process $q \in Q$ is active in $C$, and $y$ is a possible output value.
Let $\mathbb{Q}$ be the set of vertices in $\gt{t}$ that represent the states of processes in $Q$ in
configurations reachable from $C$ via $Q$-only schedules.

If some vertex $v \in \mathbb{Q}$ has terminated with output $y$, then the adversary returns a $Q$-only schedule from $C$ that leads to a configuration $C'$ in which $v$ represents the state of a process in $C'$. 
None of the invariants are affected.

If every vertex in $\mathbb{Q}$ is in $\Nta{t}{y}$, $\xta{y}$, or $\tta{a}$, for some $a \neq y$, then it would be impossible for the adversary to return a $Q$-only  schedule from $C$ in which some vertex has terminated with output $y$ without violating validity or contradicting one of its previous answers.  In this case, the adversary adds $\mathbb{Q}$ to $\xta{y}$ and returns \none.
Note that adding vertices in $\Nta{t}{y}$ or $\tta{a}$ for $a \neq y$ does not make invariant~\ref{inv:x} false. 
The other invariants are not affected.

Otherwise, let $\mathbb{U} \neq \emptyset$ be the subset of vertices in $\mathbb{Q}$ that are not in
$\Nta{t}{y}$, $\xta{y}$, or $\tta{a}$, for some $a \neq y$.
For each vertex $u \in \mathbb{U}$, let $\mathbb{A}_u$ be the union of all $n$-vertex cliques in $\gt{t}$
containing $u$.
We consider three cases.

\textbf{Case 1}: \emph{There is a vertex $u \in  \mathbb{U}$ such that $\mathbb{A}_u \cap \tta{y}$ is nonempty.}
The adversary defines $\delta(v) = \bot$ for each vertex $v \in \vt{t}$ where $\delta(v)$ is undefined and subdivides $\gt{t}$ to construct $\gt{t+1}$. 
By invariant~\ref{inv:contains} and Lemma~\ref{lem:invariants-hold2}, the distance between $\Tta{t+1}{y}$ and $\Nta{t+1}{y}$ in $\gt{t+1}$ is at least 3.
If $a \neq y$ and $\tta{a}$ is nonempty, then Proposition~\ref{prop:tfacts} says that $\Tta{t+1}{a}$ is nonempty and,
by invariant~\ref{inv:outputsfar} and Lemma~\ref{lem:invariants-hold2},
the distance between $\Tta{t+1}{y}$ and $\Tta{t+1}{a}$ is at least 4.

Let $i  =  \mathit{id}(u)$ and  $v= (i, \{u\})$.
Since $u \in \mathbb{U} \subseteq \mathbb{Q}$, process $p_i \in Q$. 
Let 
$w \in \mathbb{A}_u \cap \tta{y}$, let $\sigma$ be an $n$-vertex clique in $\mathbb{A}_u$ that contains $w$, and
let $C'$ be the configuration represented by $\sigma$.
Then $v$ is the state of process $p_i$ in configuration $C'p_ip_i$.

Next, the adversary increments $t$, so all the invariants continue to hold
by Lemma~\ref{lem:invariants-hold2}.
Finally, the adversary defines $\delta(v) = y$,
returns a $Q$-only schedule from $C$ that results in process $p_i$  being in state $v$.
This adds vertex $v$ to $\tta{y}$.
Invariants  \ref{inv:defined1}, \ref{inv:defined2}, \ref{inv:defined3}, and  \ref{inv:x} continue to hold.

Since $w$ is terminated, it is adjacent to every other vertex in $\chi(\sigma,\delta) \subseteq \gt{t}$,
including $v$.
It follows that the distance between $v$ and $\Nta{t}{y}$ is at least 2 and,
if $a \neq y$ and $\tta{a}$ is nonempty, then the distance between $v$ and $\tta{a}$ is at least 3.
Thus, invariants \ref{inv:contains} and \ref{inv:outputsfar} hold.

By invariant \ref{inv:x}, each vertex in $\Xta{t}{y}$ is adjacent to a vertex in $\tta{b}$ for some $b \neq y$.
Since the distance  between $v$ and $\tta{b}$ is at least 3,
the distance between $v$ and $\Xta{t}{y}$ is at least 2. Thus $v \not\in \Xta{t}{y}$. Hence, defining $\delta(v) = y$
does not contradict the result of any previous output query.

\textbf{Case 2}:  \emph{There is a vertex $u \in  \mathbb{U}$ such that every vertex in $\mathbb{A}_u$ is active.}
The adversary defines $\delta(v) = \bot$ for each vertex $v \in \gt{t}$ where $\delta(v)$ is undefined and subdivides $\gt{t}$ to construct $\gt{t+1}$. 

Since no vertex in $\mathbb{A}_u$ has terminated and $\mathbb{A}_u$ contains all vertices at distance at most 1 from
$u$ in $\gt{t}$, it follows that the distance from $u$ to $\tta{a}$ in $\gt{t}$ is at least 2, for all inputs $a$.
Moreover, since $u \not\in \Nta{t}{y}$, the distance from $u$ to $\Nta{t}{y}$ in $\gt{t}$ is at least 1.

Let $i  =  \mathit{id}(u)$ and let $v= (i,\{ u \}) \in \gt{t+1}$.
Since $u \in \mathbb{U} \subseteq \mathbb{Q}$, process $p_i \in Q$.
Consider any vertex $v'$ adjacent to $v$ in $\gt{t+1}$. Then there exists an $n$-vertex clique $\sigma \subseteq \gt{t}$
such that $v,v' \in  \chi(\sigma,\delta)$.
Since $v$ is not a terminated vertex in $\sigma$, $\{u\} \subseteq  \sigma$,
so $\sigma \subseteq \mathbb{A}_u$. All vertices in $ \mathbb{A}_u$ are active, so $v' = (j,\rho)$ where 
$j \neq i$, $j \in \mathit{id}(\rho)$, 
$\rho \subseteq \sigma$, and  $\{u\} \subseteq \rho$.
Note that $u \not\in \Nta{t}{y}$ 
implies that 
$v,v' \not\in \Nta{t+1}{y}$. 
Therefore, the distance from $v$ to $\Nta{t+1}{y}$  in $\gt{t+1}$ is at least 2.

Next, we show that, for all inputs $a$, the distance from $v$ to $\Tta{t+1}{a}$  in $\gt{t+1}$ is at least 3.
By~\propositionref{prop:tfacts}, $\Tta{t+1}{a} = \tta{a}= \chi(\tta{a},\delta)$, so no vertex $v'$ adjacent to $v$
in $\gt{t+1}$ is in $\Tta{t+1}{a}$.
To obtain a contradiction, suppose there is 
 a path $v,v',w$ of length 2 in $\gt{t+1}$ from $v$ to  $\Tta{t+1}{a}$.
Then $v' = (j,\rho)$, where $\{u\} \subseteq \rho$  and 
there exists an $n$-vertex clique $\sigma' \subseteq \gt{t}$
such that $\{v',w\}$ is an edge in  $\chi(\sigma',\delta)$. Because $w \in \Tta{t+1}{a} = \tta{a}$, $w \in \sigma'$.
By definition, $\rho \subseteq \sigma'$.
This implies that $u \in \sigma'$ and, hence, $\sigma' \subseteq \mathbb{A}_u$. 
However, this contradicts the assumption that all vertices in $\mathbb{A}_u$ are active.
Therefore, the distance from $v$ to $\Tta{t+1}{a}$  in $\gt{t+1}$ is at least 3 for all inputs $a$.

Now the adversary increments $t$, so all the invariants continue to hold,
by Lemma~\ref{lem:invariants-hold2}.
Finally, the adversary defines $\delta(v) = y$ and
returns a $Q$-only schedule from $C$ that results in process $p_i$  being in state $v$.
This adds vertex $v$ to $\tta{y}$.
Invariants  \ref{inv:defined1}, \ref{inv:defined2}, \ref{inv:defined3}, and  \ref{inv:x} continue to hold.
Since the distance from $v$ to $\Nta{t}{y}$  in $\gt{t}$ is at least 2 and 
the distance from $v$ to $\Tta{t+1}{a}$  in $\gt{t+1}$ is at least 3 for all inputs $a$,
invariants \ref{inv:contains} and \ref{inv:outputsfar} hold.
As in the previous case,
defining $\delta(v) = y$ does not contradict the result of any previous output query.

\textbf{Case 3.} \emph{For every vertex $u \in \mathbb{U}$, $\mathbb{A}_u \cap \tta{y}$ is empty, but some vertex in
$\mathbb{A}_u$ has terminated.} 
In this case, the adversary returns $\none$ and adds $\mathbb{U}$ to $\Xta{t}{y}$.
Since each vertex $u \in \mathbb{U}$ is adjacent to some vertex in $\mathbb{A}_u$ that has terminated with an output other than $y$, invariant \ref{inv:x} holds.
Invariants  
\ref{inv:defined1}, \ref{inv:defined2}, and \ref{inv:defined3} still hold, since $t$ and $\delta$ are not changed,
and invariants \ref{inv:contains} and  \ref{inv:outputsfar} still hold,
since $\tta{a}$ and $\nta$ are not changed for any input $a$.

\paragraph*{The prover does not win in phase 1.} 
Suppose that the invariants all hold before and after each query made by the prover in phase 1. By invariant~\ref{inv:outputsfar}, at most one value is output in any configuration reached by the prover. Moreover, by invariant~\ref{inv:contains}, if a process outputs value $a$, then it has seen $a$.
Hence, the prover cannot win in phase 1 by showing that the protocol violates agreement or validity. It remains to  show that the prover cannot win by constructing an infinite chain of queries in phase 1.

\begin{lemma}
\label{lem:finitechains}
Every chain of queries in phase 1 is finite.
\end{lemma}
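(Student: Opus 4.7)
I would argue by contradiction: suppose some chain of queries in phase~1 is infinite. Each regular query either leaves the adversary's maintained integer $t$ unchanged, or it is a \id{scan} step producing a vertex $s'\in\vt{t}$ for which no assignment $\delta(s')=a$ satisfies all invariants, in which case the adversary subdivides and increments $t$ (keeping the scanning process active). By invariant~\ref{inv:defined3}, every process occurs at most $2t+1$ times in any schedule reaching $\config{A}(1)\cup\config{A}'(1)$, so while $t$ stays bounded by some constant $T$, the chain contains at most $(2T+1)n$ queries. Hence, for the chain to be infinite, $t$ must grow without bound along the chain.

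The main task — and the main obstacle — is to show that $t$ can only increase a bounded number of times during a single chain. Each increase happens because some process $q$'s \id{scan} produces $s'\in\vt{t}$ such that, for every input $a$ seen by $s'$, either $\text{dist}(s',\Nta{t}{a})<2$, or some $b\ne a$ with $\Tta{t}{b}$ nonempty has $\text{dist}(s',\Tta{t}{b})<3$. The key technical input is \lemmaref{lem:dist2}: combined with \lemmaref{lem:invariants-hold1}, it implies that each subdivision strictly increases the distances between the sets $\tta{a}$ and between $\tta{a}$ and $\nta$. Consequently, after $q$ triggers a subdivision, the next time $q$'s state reaches the new $\vt{t}$ (after two further queries), its distance to every relevant ``bad'' set is strictly larger than before. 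Since the required thresholds are only $2$ and $3$, after boundedly many subdivisions triggered by $q$ at least one seen input $a$ satisfies all distance requirements, so the adversary can assign $\delta$ without subdividing. With only $n$ processes in total, this bounds the number of subdivisions in the chain.

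The delicacy is that during the chain itself the sets $\Tta{t}{a}$ may grow, whenever the adversary chooses to assign $\delta(s')=a$ rather than subdivide, potentially introducing new ``terminated'' vertices that could threaten future distance requirements. One must verify that each such addition is placed at sufficient distance from further scan vertices — exactly the conditions under which the adversary permitted itself to make the assignment in the first place, as recorded by invariants~\ref{inv:contains} and~\ref{inv:outputsfar} and preserved by \lemmaref{lem:invariants-hold2} — so that the monotone distance growth from repeated subdivisions still eventually satisfies the thresholds. Combining the $(2T+1)n$ bound within each constant-$t$ epoch with the bound on the number of $t$-increases yields a finite overall length, contradicting the assumption that the chain is infinite.
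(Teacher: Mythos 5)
Your high-level shape matches the paper's (argue by contradiction; invariant~\ref{inv:defined3} bounds the chain while $t$ is bounded, so $t$ must grow; show the distance thresholds are eventually met so the adversary must terminate a querying process, contradicting that it keeps taking steps). But the central step is not established, and the mechanism you propose for it does not work. You claim that after $q$ triggers a subdivision, the distance from $q$'s \emph{next} scan vertex to every relevant bad set is strictly larger than before, citing \lemmaref{lem:dist2}. That lemma compares $\chi(\at,\delta)$ with $\chi(\bt,\delta)$ for sets that are unions of $n$-cliques or sets of terminated vertices; it says nothing about the fresh vertex $(i,\tau)\in\gt{t+1}$ that represents $q$'s new state. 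Such a vertex can remain at distance $1$ from $\Nta{t+1}{a}$ forever: if $\tau$ contains a vertex $u$ that has not seen $a$ (i.e., some process whose update preceded $q$'s scan has not seen $a$), then the neighbour $(\mathit{id}(u),\{u\})$ has not seen $a$ and lies in $\Nta{t+1}{a}$. So the requirement that $s'$ be at distance at least $2$ from $\Nta{t}{a}$ (needed to preserve invariant~\ref{inv:contains} when setting $\delta(s')=a$) need never become satisfiable under your argument, and your bound on the number of subdivisions has no support.

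The paper closes this gap with two ideas you are missing. First, it passes to the set $P$ of processes occurring infinitely often and to the suffix of the chain in which only they appear; no such process is ever terminated (a terminated process cannot be queried again), so the sets $\Tta{r}{a}$ freeze there, and \lemmaref{lem:invariants-hold2} pushes the pairwise distances among the $\Tta{}{}$'s to at least $5$ and the distance from $\Tta{}{a}$ to $\Nta{}{a}$ to at least $4$ within two more subdivisions. This yields a dichotomy for the scan vertex $v$ at level $t'+2$: either $v$ is within distance $2$ of some $\Tta{t'+2}{a}$ — in which case it is automatically at distance at least $3$ from every other $\Tta{t'+2}{b}$ and at least $2$ from $\Nta{t'+2}{a}$, so the adversary terminates it, a contradiction — or $v$ is at distance at least $3$ from every terminated set. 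Second, to handle $\Nta{}{a}$ in the latter case, the paper chooses $v$ to be the state of the \emph{first} process in the suffix to scan $S_{t'+2}$; its update then precedes every scan of $S_{t'+2}$, so the entire $n$-clique $\sigma$ containing $v$ has seen its input $a$, hence $\sigma$ is at distance at least $1$ from $\Nta{t'+2}{a}$ and at least $2$ from all terminated vertices. Applying \lemmaref{lem:dist2} to the clique $\sigma$ (a legitimate union of $n$-cliques) boosts these to $2$ and $3$ in $\gt{t'+3}$, so the next process to scan $S_{t'+3}$ lands in $\chi(\sigma,\delta)$ and must be terminated — the contradiction. Without the freezing of the terminated sets and this whole-clique, first-updater argument, the proof does not go through.
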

\begin{proof}
Assume, for a contradiction, that there is an infinite chain of queries, $(C_1,q_1),(C_2,q_2),\ldots$
Let $\beta$ be
a schedule
from an initial configuration $C_0$ to $C_1$ followed by the steps of the schedule
$q_1,q_2,\ldots$ and, for each $j \geq 1$, let $\beta_j$ be the prefix of $\beta$ such that $C_j = C_0 \beta_j$.
Let $P$ be the set of processes that occur infinitely often in $\beta$.
Let $j'  \geq 1$ be the
first
index such that $q_j \in P$ for all $j \geq j'$,
so, from $j'$ onwards, only processes in $P$ appear in queries.
Let $t' \geq 1$ be the value of $t$ held by the adversary immediately prior to query $(C_{j'},q_{j'})$. 
By invariant~\ref{inv:defined3}, each process occurs at most $2t+1$ times in $\beta_{j'}$.
Hence, during the schedule $\beta_{j'}$ from $C_0$, no process performed an \id{update} to $S_r$ for $r >t'$
or a \id{scan} of $S_r$ for $r \geq t'$.
Since each process in $P$ eventually accesses every snapshot object,
the adversary eventually
defines $\delta(v) = \bot$
for each vertex $v \in \gt{r}$ where $\delta(v)$ is undefined
and subdivides $\gt{r}$ to construct $\gt{r+1}$, for all $r \geq t'$.
Since no process is terminated, $\Tta{r}{a} = \Tta{t'}{a}$, for all inputs $a$ and all $r > t'$. 
By \lemmaref{lem:invariants-hold1} and \lemmaref{lem:invariants-hold2}, if $\Tta{t'}{a}$ is nonempty,
the distance between $\Tta{t'+2}{a}$ and $\Nta{t'+2}{a}$ is at least 4 and, if $b \neq a$ and $\Tta{t'}{b}$ is nonempty,
the distance between $\Tta{t'+2}{a}$ and $\Tta{t'+2}{b}$ is at least 5.
	
Consider the first index $j''  > j'$ such that process $q_{j''}$ is poised to \id{scan} the snapshot object $S_{t'+2}$ in $C_{j''}$.
By invariant~\ref{inv:defined3},
the  state of  process in $q_{j''}$ in configuration $C_{j''+1}= C{j''}q_{j''}$ 
is a vertex $v \in \vt{t'+2}$.
If there is some input $a$ such that the distance from $v$ to $\Tta{t'+2}{a}$ in $\gt{t'+2}$ is at most 2,
then the distance from $v$ to $\Nta{t'+2}{a}$  in $\gt{t'+2}$ is at least 2
and the distance from $v$ to $\Tta{t'+2}{b}$ in $\gt{t'+2}$ is at least 3.
According to its strategy for phase 1, the adversary defines $\delta(v) = a$ after query $(C_{j''},q_{j''})$. 
This contradicts the definition of $P$.
Thus, the distance from $v$ to  $\Tta{t+2}{a}$ in $\gt{t'+2}$ is at least 3, for all inputs $a$ such that
$\mathbb{T}_a^{t'+2}$ is nonempty.

Let $a$ be the input of process $q_{j''}$ in configuration $C_0$.
Consider any $(t'+2)$-round schedule $\beta''$
obtained from $\beta$ by removing all but the first $2(t'+2)$ occurrences of processes in $P$ and then appending 
sufficiently many occurrences of the processes not in $P$.
Note that configurations $C_0\beta''$ and $C_0\beta_{j''+1}$ are indistinguishable to process $q_{j''}$,
so $v$ is in the $n$-vertex clique $\sigma$ in $\gt{t'+2}$ representing the configuration $C_0\beta''$.
Thus the distance in $\gt{t'+2}$ between $\sigma$ and any terminated vertex is at least 2.
During schedule $\beta''$ from $C_0$, 
$q_{j''}$ performs its \id{update} to $S_{t'+2}$ before any process performs its \id{scan} of $S_{t'+2}$,
so all vertices in $\sigma$
have seen $a$.
Thus the distance in $\gt{t'+2}$ between $\sigma$ and $\Nta{t'+2}{a}$ is at least 1.
The first edge on every path from $\sigma$ to $\Nta{t'+2}{a}$ or to $\Tta{t'+2}{b}$, for any input $b$,
is between active vertices.
Therefore, by  \propositionref{prop:nfacts}, \propositionref{prop:tfacts}, and \lemmaref{lem:dist2}, the distance in  $\gt{t'+3}$ between $\chi(\sigma,\delta)$ and $\chi(\Nta{t'+2}{a},\delta) = \Nta{t'+3}{a}$ is at least 2
and the distance in  $\gt{t'+3}$ between $\chi(\sigma,\delta)$ and $\chi(\Tta{t'+2}{b},\delta) = \Tta{t'+3}{b}$ is at least 3,
for any input $b$.

Consider the first index $j'''  > j''$ such that process $q_{j'''}$ is poised to \id{scan} the snapshot object $S_{t'+3}$ in $C_{q'''}$. The state of process $q_{j'''}$ in configuration $C_{j'''}q_{j'''}$ is a vertex in $\chi(\sigma,\delta)$. According to its strategy for phase 1, the adversary terminates this vertex after query $(C_{j'''},q_{j'''})$. 
This contradicts the definition of $P$.
\end{proof}

Since the prover does not win in phase 1, it must eventually commit to a nonempty schedule $\alpha(2)$ from an initial
configuration $C \in \config{A}(1)$ such that $C\alpha(2) \in \config{A}'(1)$, set $\config{B}(2)$ to 
consist
of all initial configurations that only differ from $C$ by the states of processes that do not occur in $\alpha(2)$,
set $\config{A}(2) = \{C_0 \alpha(2) \ |\ C_0 \in \config{B}(2)\}$, and then start phase 2.

\paragraph{The adversarial strategy for later phases.}
At the beginning of phase 2, the adversary updates $\delta$. 
Afterwards, it can answer all future queries by the prover
without making any further changes to $\delta$.   Eventually, at the end of some future phase $\varphi$, the prover will  commit to a schedule $\alpha(\varphi+1)$ such that all configurations in $\config{A}(\varphi+1)$ are final.
Consequently, the prover will lose at the beginning of  phase $\varphi+1$.

Let $p$ be the first process in $\alpha(2)$ and let $a$ be the input of $p$ in the initial configuration $C$.
Note that $p$ has the same state in every configuration in $\config{B}(2)$, so it has input $a$ in all of them.
Let $\mathbb{F}$ denote the union of all $n$-vertex cliques in $\gt{1}$ that represent a configuration reachable by 
a 1-round schedule beginning with $p$ from a configuration in $\config{B}(2)$.
Since $p$ performs its \id{update} to $S_1$ before any process performs its \id{scan} of $S_1$ in all such schedules,
every vertex in $\mathbb{F}$ has seen $a$.
Thus the distance between $\mathbb{F}$ and $\Nta{1}{a}$ in $\gt{1}$ is at least 1.

The adversary defines $\delta(v) = \bot$ for each vertex $v \in \vt{t}$ where $\delta(v)$ is undefined, subdivides $\gt{t}$
to construct $\gt{t+1}$, and increments $t$.
Since all the invariants hold at the end of phase 1, \lemmaref{lem:invariants-hold2}
says that they still hold and,  for any two inputs $b \neq b'$ such that  $\tta{b}$ and $\tta{b'}$ are non-empty,
the distance between $\tta{b}$ and $\tta{b'}$ in $\gt{t}$ is at least 4.
In particular, a vertex $v \in \gt{t}$ is adjacent to a vertex $w \in \tta{b}$ for at most one input $b$.
Let $\mathbb{F}' = \chi^{t-1}(\mathbb{F},\delta) \subseteq \gt{t}$.
Applying \lemmaref{lem:invariants-hold2} $t-1$ times,
it follows that the distance between $\mathbb{F}'$ and $\Nta{t}{a}$ in $\gt{t}$ is at least 1.

Invariant~\ref{inv:defined2} says that no vertex in $\gt{t}$ has $\delta(v) = \bot$.
The adversary has not yet terminated any additional vertices in $\gt{t}$,
so, by \propositionref{prop:tfacts},
$\Tta{t}{b} = \Tta{t-1}{b}$ for all input values $b$.
For every vertex $v \in \mathbb{F}'$ for which $\delta(v)$ is undefined, the adversary defines $\delta(v)$ as follows.
First, for each input value $b$
and each vertex $v  \in \mathbb{F}'$  that is distance 1 from $\Tta{t-1}{b}$ in $\gt{t}$ and 
such that $\delta(v)$ is undefined,
the adversary sets $\delta(v) = b$. 
This does not violate validity, since the distance between $\Tta{t-1}{b}$ and $\Nta{t}{b}$ in $\gt{t}$ is at least 2.
Since each vertex in $\xta{b}$ is adjacent to a vertex in $\Tta{t-1}{b'}$ for some $b' \neq b$,
this assignment defines $\delta(v)$ for each vertex in $\xta{b}$ for which it was undefined.
Since each vertex in $\xta{b}$ is at least distance 3 from any vertex in $\Tta{t-1}{b}$,
this assignment does not contradict any output query that returned $\none$.
Moreover, the distance between any two vertices in $\mathbb{F}'$ that have output different values is still at least 2.
Thus, in each $n$-vertex simplex in $\gt{t}$, all the terminated vertices have output the same value. 

Finally,
for each vertex $v \in \mathbb{F}'$ where $\delta(v)$ is still undefined, the adversary sets $\delta(v) = a$. 
Validity is not violated, since no vertex in $\mathbb{F}'$ is in $\nta$.
Agreement is not violated, since at most two different values are output by the vertices
in each $n$-vertex simplex in $\gt{t}$.

In phases $\varphi \geq 2$, the prover can only query configurations reachable from some configuration in $\config{A}(2)$.
By definition, $\config{A}(2)$ is the set of all configurations that are reached by performing $\alpha(2)$ from initial configurations 
in $\config{B}(2)$.
It follows that, for any process $q$ and 
any extension $\alpha'$ of $\alpha(2)$ from $C' \in \config{A}(2)$, $q$ appears at most $2t$ times in $\alpha(2)\alpha'$ before its state is represented by a vertex in $\mathbb{F}'$. By construction, every vertex in $\mathbb{F}'$ has terminated. 
Thus, eventually, the prover chooses a configuration at the end of some phase in which every process has terminated.
The prover loses in the next phase.

 Thus, we have proved the following result:
 
 \begin{theorem}
No extension-based proof can show the
impossibility of deterministically solving $k$-set agreement in a wait-free manner in the NIS model, for $n > k \geq 2 $ processes.
 \end{theorem}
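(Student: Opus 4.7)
The plan is to show that the adversary constructed in this section defeats every extension-based prover. The entire argument is really a packaging of the technical machinery already developed: the partially defined protocol $\delta$ that the adversary builds on the fly, the six invariants it maintains, the phase 1 strategy with its case analysis, and the phase 2 completion of $\delta$. The theorem follows once we confirm that the adversary's responses are (a) internally consistent with the $k$-set agreement specification, (b) consistent with all of the prover's previous queries, and (c) force the prover to lose.

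First I would verify that the six invariants are preserved by every query response in phase 1. For ordinary update queries no state changes, and the invariants are untouched. For scan queries producing a new vertex in $\vt{t}$, the adversary either assigns an output consistent with the distance bounds in invariants \ref{inv:contains} and \ref{inv:outputsfar}, or else it subdivides; in the latter case \lemmaref{lem:invariants-hold2} propagates the invariants to $\gt{t+1}$ and enlarges the distances between $\tta{a}$, $\tta{b}$, and $\nta$ enough that a safe assignment becomes available in the next round. For output queries, the three-case split (the vertex $u$ whose neighbourhood $\mathbb{A}_u$ meets $\tta{y}$, the vertex whose neighbourhood is entirely active, and the vertex whose neighbourhood contains only terminated vertices with outputs different from $y$) is exhaustive, and in each case \lemmaref{lem:dist2} and \lemmaref{lem:nucs-distincrease} guarantee enough room either to place a new terminated vertex $v$ with $\delta(v)=y$ at distance at least $2$ from $\Nta{t+1}{y}$ and $3$ from the other termination sets, or to answer \none\ and consistently record $\mathbb{U}$ in $\xta{y}$.

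Once the invariants are in force, the prover cannot win in phase 1: invariants \ref{inv:contains} and \ref{inv:outputsfar} immediately rule out agreement or validity violations at any configuration the prover reaches, and \lemmaref{lem:finitechains} rules out infinite chains of queries, since every process that occurred infinitely often in a chain would eventually produce a vertex far enough from all $\tta{a}$ and $\Nta{t}{a}$ to trigger the adversary's termination branch. When the prover commits to a schedule $\alpha(2)$ ending phase 1, the adversary performs one more subdivision, boosting the mutual distance between distinct $\tta{b}$'s to at least $4$. It then completes $\delta$ on the subgraph $\mathbb{F}' = \chi^{t-1}(\mathbb{F},\delta)$ reachable via extensions of $\alpha(2)$ from $\config{B}(2)$: vertices adjacent to some $\Tta{t-1}{b}$ output $b$ (absorbing all undefined vertices in $\xta{b}$), while the remaining vertices of $\mathbb{F}'$ output the input $a$ of the leading process of $\alpha(2)$. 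The distance invariants ensure that each $n$-vertex clique in $\gt{t}$ contains terminated vertices agreeing on at most two distinct outputs, so agreement and validity hold everywhere on $\mathbb{F}'$; since all extensions of $\alpha(2)$ from $\config{B}(2)$ eventually produce states represented by vertices of $\mathbb{F}'$, every process terminates within a bounded number of further steps and the prover is forced into a losing commitment in some later phase.

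The hard part will be the consistency check in the output query cases. One must verify both that a freshly terminated vertex $v$ with $\delta(v)=y$ never lies in a previously recorded $\xta{y}$ (otherwise the earlier \none\ answer becomes a contradiction), and that the three-case split really covers every possibility for a vertex $u \in \mathbb{U}$. The first point falls out of invariant \ref{inv:x} combined with the lower bound of $3$ on the distance from the newly terminated vertex to the other $\tta{b}$'s, giving distance at least $2$ from $\xta{y}$. The second is immediate from the definition of $\mathbb{U}$: once $u$ is not in $\Nta{t}{y}$, $\xta{y}$, or any $\tta{a}$ with $a \neq y$, the neighbourhood $\mathbb{A}_u$ either meets $\tta{y}$, is entirely active, or falls into Case~3. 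Everything else is bookkeeping on the invariants.
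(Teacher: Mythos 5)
Your proposal is correct and follows essentially the same route as the paper: the adversary maintaining a partial $\delta$ and the six invariants, the three-case analysis of output queries, the finiteness of chains via \lemmaref{lem:finitechains}, and the completion of $\delta$ on $\mathbb{F}'$ at the start of phase~2 are exactly the paper's argument. The two consistency points you flag as the ``hard part'' (that a newly terminated vertex cannot lie in a previously recorded $\xta{y}$, and that the case split on $\mathbb{A}_u$ is exhaustive) are also precisely where the paper spends its effort, and your resolutions of them match the paper's.
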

\section{Conclusions}
\label{sec:conclusions}

We have shown the limitation of extension-based proofs, including valency arguments, for proving the impossibility of deterministic, wait-free solutions to set-agreement in the NIS model.
In the conference version of this paper~\cite{AAEGZ19}, we obtained the same result in the NIIS model.
Although we have restricted
attention to the proof of impossibility of one problem in two closely related models,
our approach should be applicable to other problems and other models.
For example, we believe that there is no extension-based proof of the lower bound on the number of rounds to solve set agreement in synchronous message passing systems. 

Recently, Alistarh, Ellen, and Rybicki~\cite{AER20} proved that there is no extension-based proof of the impossibility
of deterministic, wait-free solutions to 4-cycle agreement for $n \geq 3$ processes in the NIIS model.
This result helped lead to their impossibility proof for this problem, which turned out to be similar to the impossibility proof for set agreement.

There are two other results in distributed computing that have a similar flavour.
Rincon Galeana, Winkler,  Schmid and Rajsbaum~\cite{GWSR19} showed that
partitioning arguments are insufficient to prove the impossibility of $(n-1)$-set agreement in the iterated immediate snapshot (IIS) model.
For the CONGEST model,
Bachrach, Censor-Hillel, Dory, Efron, Leitersdorf and Paz~\cite{BCDELP19} showed that reductions from two party communication complexity with a static cut cannot be used to prove non-constant lower bounds on the number of rounds needed to solve maximum matching or maximum flow.

Combinatorial topology has been used to prove the impossibility of wait-free solutions to problems other than set agreement, such as weak symmetry breaking and 
renaming~\cite{CR10}. There are no extension-based proofs of these results and we conjecture that they cannot be proved
using extension-based proofs.

\medskip

The definition of an extension-based proof can be modified to handle other termination conditions, such as obstruction-freedom \cite{herlihy2003obstruction}. It suffices for the prover to construct a schedule that violates this condition.

The NIS and NIIS models are computationally equivalent
to an asynchronous shared memory model in which processes communicate by reading from and writing to shared registers. However, these  models are not equivalent in terms of space and step complexities.
A covering argument~\cite{BL93} is a standard approach for proving a lower bound on the number of registers needed to solve a problem in an asynchronous system.  
We have a definition for extension-based proofs 
that includes
covering arguments. 

Ellen, Gelashvili and Zhu~\cite{EGZ18} proved that any obstruction-free protocol for $k$-set agreement
among  $n > k \geq 2$ processes
requires $\lceil n/k \rceil$ registers, but their proof is not extension-based. 
In fact, some of 
the
early work about extension-based proofs motivated the approach in~\cite{EGZ18}.
We conjecture that it is impossible to prove a non-constant lower bound on the number of registers needed by any obstruction-free protocol for $k$-set agreement using an extension-based proof.

We have considered allowing the prover to perform a number of other types of queries
and can extend our adversarial protocol so that it can answer them.
For example, if a prover asks the same output query
$(C,P,y)$ multiple times, the protocol could be required to return different schedules each time,
until it has returned all possible $P$-only schedules from $C$ that output $y$.

We cannot allow certain queries, such as asking for an upper bound on the length of any schedule.
If the prover is given such an upper bound,
then it can perform a finite number of chains of queries to examine all reachable configurations,
thereby fixing the protocol.
However, we can allow the prover to use this information in a restricted way
and still construct an adversarial set agreement protocol. For example, we might require that 
the prover does not use this information to decide which queries to perform or what extensions to construct, but can use this information to win when it has constructed a schedule that is longer than this upper bound.

\section{Acknowledgments}
Support is gratefully acknowledged from the 
Natural Science and Engineering Research Council of Canada under grants 
RGPIN-2015-05080 and
RGPIN-2020-04178,
a University of Toronto postdoctoral fellowship,
National Science Foundation under grants CCF-1217921, 
         CCF-1301926, CCF-1637385, CCF-1650596, and IIS-1447786, the Department of Energy under grant ER26116/DE-SC0008923, 
and the Oracle and Intel corporations.
We would also like to thank Toniann Pitassi for helpful discussions and Shi Hao Liu for his useful 
feedback
on an earlier draft of this paper.

\bibliographystyle{alpha}
\bibliography{new-biblio}

\newcommand{\etalchar}[1]{$^{#1}$}
\begin{thebibliography}{GWSR19}

\bibitem[AAD{\etalchar{+}}93]{AADGMS93}
Yehuda Afek, Hagit Attiya, Danny Dolev, Eli Gafni, Michael Merritt, and Nir
  Shavit.
\newblock Atomic snapshots of shared memory.
\newblock {\em jacm}, 40(4):873--890, 1993.

\bibitem[AAE{\etalchar{+}}19]{AAEGZ19}
Dan Alistarh, James Aspnes, Faith Ellen, Rati Gelashvili, and Leqi Zhu.
\newblock Why extension-based proofs fail.
\newblock In {\em Proceedings of the 51st Annual {ACM} Symposium on Theory of
  Computing (STOC)}, pages 986--996, 2019.

\bibitem[Abr88]{Abr88}
Karl Abrahamson.
\newblock On achieving consensus using a shared memory.
\newblock In {\em Proceedings of the 7th Annual ACM Symposium on Principles of
  Distributed Computing (PODC)}, pages 291--302, 1988.

\bibitem[AC11]{AC11}
Hagit Attiya and Armando Casta{\~n}eda.
\newblock A non-topological proof for the impossibility of $k$-set agreement.
\newblock In {\em Proceedings of the 13th International Symposium on
  Stabilization, Safety, and Security of Distributed Systems (SSS)}, pages
  108--119, 2011.

\bibitem[AE14]{AE14}
Hagit Attiya and Faith Ellen.
\newblock {\em Impossibility Results for Distributed Computing}.
\newblock Synthesis Lectures on Distributed Computing Theory. Morgan \&
  Claypool Publishers, 2014.

\bibitem[AER20]{AER20}
Dan Alistarh, Faith Ellen, and Joel Rybicki.
\newblock Approximate agreement is hard on cycles.
\newblock manuscript, 2020.

\bibitem[AP12]{AP12}
Hagit Attiya and Ami Paz.
\newblock Counting-based impossibility proofs for renaming and set agreement.
\newblock In {\em Proceedings of the 26th International Symposium on
  Distributed Computing (DISC)}, pages 356--370, 2012.

\bibitem[AR02]{AR02}
Hagit Attiya and Sergio Rajsbaum.
\newblock The combinatorial structure of wait-free solvable tasks.
\newblock {\em {SIAM} J. Comput.}, 31(4):1286--1313, 2002.

\bibitem[BCD{\etalchar{+}}19]{BCDELP19}
Nir Bachrach, Keren Censor{-}Hillel, Michal Dory, Yuval Efron, Dean
  Leitersdorf, and Ami Paz.
\newblock Hardness of distributed optimization.
\newblock In {\em Proceedings of the 2019 {ACM} Symposium on Principles of
  Distributed Computing (PODC)}, pages 238--247, 2019.

\bibitem[BG93a]{BG93a}
Elizabeth Borowsky and Eli Gafni.
\newblock Generalized {FLP} impossibility result for t-resilient asynchronous
  computations.
\newblock In {\em Proceedings of the 25th Annual {ACM} Symposium on Theory of
  Computing (STOC)}, pages 91--100, 1993.

\bibitem[BG93b]{BG93b}
Elizabeth Borowsky and Eli Gafni.
\newblock Immediate atomic snapshots and fast renaming (extended abstract).
\newblock In {\em Proceedings of the 12th Annual {ACM} Symposium on Principles
  of Distributed Computing (PODC)}, pages 41--51, 1993.

\bibitem[BG97]{BG97}
Elizabeth Borowsky and Eli Gafni.
\newblock A simple algorithmically reasoned characterization of wait-free
  computation.
\newblock In {\em Proceedings of the 16th Annual ACM Symposium on Principles of
  Distributed Computing (PODC)}, pages 189--198, 1997.

\bibitem[BL93]{BL93}
James~E. Burns and Nancy~A. Lynch.
\newblock Bounds on shared memory for mutual exclusion.
\newblock {\em Information and Computation}, 107(2):171--184, 1993.

\bibitem[Cha93]{Cha93}
Soma Chaudhuri.
\newblock More choices allow more faults: Set consensus problems in totally
  asynchronous systems.
\newblock {\em Information and Computation}, 105(1):132--158, 1993.

\bibitem[CIL87]{CIL87}
Benny Chor, Amos Israeli, and Ming Li.
\newblock On processor coordination using asynchronous hardware.
\newblock In {\em Proceedings of the 6th Annual ACM Symposium on Principles of
  Distributed Computing (PODC)}, pages 86--97, 1987.

\bibitem[CR10]{CR10}
Armando Casta{\~n}eda and Sergio Rajsbaum.
\newblock New combinatorial topology bounds for renaming: the lower bound.
\newblock {\em Distributed Computing}, 22(5-6):287--301, 2010.

\bibitem[EGZ18]{EGZ18}
Faith Ellen, Rati Gelashvili, and Leqi Zhu.
\newblock Revisionist simulations: A new approach to proving space lower
  bounds.
\newblock In {\em Proceedings of the 37th {ACM} Symposium on Principles of
  Distributed Computing (PODC)}, pages 61--70, 2018.

\bibitem[FLP85]{FLP85}
Michael~J. Fischer, Nancy~A. Lynch, and Michael~S. Paterson.
\newblock Impossibility of distributed consensus with one faulty process.
\newblock {\em J. ACM}, 32(2):374--382, 1985.

\bibitem[GWSR19]{GWSR19}
Hugo~Rincon Galeana, Kyrill Winkler, Ulrich Schmid, and Sergio Rajsbaum.
\newblock A topological view of partitioning arguments: Reducing k-set
  agreement to consensus.
\newblock In {\em Proceedings of the 21st International Symposium on
  Stabilization, Safety, and Security of Distributed Systems ({SSS})}, pages
  307--322, 2019.

\bibitem[Her91]{Her91}
Maurice Herlihy.
\newblock Wait-free synchronization.
\newblock {\em ACM Transactions on Programming Languages and Systems (TOPLAS)},
  13(1):124--149, 1991.

\bibitem[HLM03]{herlihy2003obstruction}
Maurice Herlihy, Victor Luchangco, and Mark Moir.
\newblock Obstruction-free synchronization: Double-ended queues as an example.
\newblock In {\em Proceedings of the 23rd International Conference on
  Distributed Computing Systems (ICDCS)}, pages 522--529, 2003.

\bibitem[HS99]{HS99}
Maurice Herlihy and Nir Shavit.
\newblock The topological structure of asynchronous computability.
\newblock {\em jacm}, 46(6):858--923, 1999.

\bibitem[HS06]{HS2006}
Gunnar Hoest and Nir Shavit.
\newblock Toward a topological characterization of asynchronous complexity.
\newblock {\em SIAM J. Comput.}, 36(2):457--497, 2006.

\bibitem[LAA87]{LA87}
M.~C. Loui and H.~H. Abu-Amara.
\newblock Memory requirements for agreement among unreliable asynchronous
  processes.
\newblock In {\em Advances in Computing Research}, volume~4, pages 163--183.
  JAI Press, 1987.

\bibitem[MR02]{MR02}
Yoram Moses and Sergio Rajsbaum.
\newblock A layered analysis of consensus.
\newblock {\em SIAM J. Comput.}, 31(4):989--1021, 2002.

\bibitem[PBI93]{PBI93}
Toniann Pitassi, Paul Beame, and Russell Impagliazzo.
\newblock Exponential lower bounds for the pigeonhole principle.
\newblock {\em Computational Complexity}, 3:97--140, 1993.

\bibitem[SZ00]{SZ00}
Michael Saks and Fotios Zaharoglou.
\newblock Wait-free k-set agreement is impossible: The topology of public
  knowledge.
\newblock {\em SIAM J. Comput.}, 29(5):1449--1483, 2000.

\end{thebibliography}
\end{document}